\documentclass[a4paper,UKenglish,cleveref, autoref, thm-restate, numberwithinsect]{lipics-v2021}

\title{Pattern Matching with Mismatches and Wildcards}

\date{\empty}

\author{Gabriel Bathie}{DIENS, \'{E}cole normale sup\'{e}rieure de Paris, PSL Research University, France \and LaBRI, Université de Bordeaux, France}{gabriel.bathie@gmail.com}{https://orcid.org/0000-0003-2400-4914}{}

\author{Panagiotis Charalampopoulos}{Birkbeck, University of London, UK}{pcharalampo@gmail.com}{https://orcid.org/0000-0002-6024-1557}{}

\author{Tatiana Starikovskaya}{DIENS, \'{E}cole normale sup\'{e}rieure de Paris, PSL Research University, France}{tat.starikovskaya@gmail.com}{https://orcid.org/0000-0002-7193-9432}{}

\authorrunning{G.~Bathie, P.~Charalampopoulos, and T.~Starikovskaya}

\Copyright{Gabriel Bathie, Panagiotis Charalampopoulos and Tatiana Starikovskaya}

\ccsdesc[500]{Theory of computation~Pattern matching}

\keywords{pattern matching, wildcards, mismatches, Hamming distance, don't cares}

\nolinenumbers 

\hideLIPIcs

\usepackage{xspace}
\usepackage[T1]{fontenc}
\usepackage[utf8]{inputenc}
\usepackage{xcolor}
\usepackage{amsmath}
\usepackage{amssymb}
\usepackage{amsthm}
\usepackage{thmtools}
\usepackage{thm-restate}
\usepackage{ifthen}
\usepackage{tikz}
\usetikzlibrary{calc} 
\usetikzlibrary{decorations.pathreplacing} 
\usetikzlibrary{patterns}
\usetikzlibrary{math} 

\usepackage{xcolor,stackengine,graphicx,scalerel,xspace}
\usepackage{multirow}

\makeatletter
\newcommand*{\rom}[1]{\expandafter\@slowromancap\romannumeral #1@}
\makeatother
\newcommand{\wild}{\protect\scalebox{.7}{\protect\stackinset{c}{}{c}{}{$\lozenge$}{\textcolor{black!10}{$\blacklozenge$}}}}
\def\dd{\mathinner{.\,.}}
\newcommand{\eps}{\varepsilon}
\newcommand{\rot}{\textsf{rot}}
\newcommand{\lcp}{\textsf{lcp}\xspace}
\newcommand{\dontcare}{D}
\newcommand{\gaps}{G}
\newcommand{\Oh}{\mathcal{O}}

\newcommand{\cO}{\mathcal{O}}
\newcommand{\cOtilde}{\tilde{\cO}}
\newcommand{\per}{\textsf{per}}

\newcommand{\norm}[1]{\Vert #1{\Vert}}
\newcommand{\ceil}[1]{\lceil #1 \rceil}
\def\pillar{{\tt PILLAR}\xspace}

\DeclareMathOperator{\polylog}{polylog}

\newcommand{\occhk}{\ensuremath\mathrm{Occ}_k}
\newcommand{\dkgk}{(\dontcare + k)(\gaps + k)}

\newcommand{\M}{\mathcal{M}}
\newcommand{\R}{\mathcal{R}}
\renewcommand{\S}{\mathcal{S}}
\newcommand{\lmisp}{\mathsf{LeftMisper}}
\newcommand{\rmisp}{\mathsf{RightMisper}}
  
\newcommand{\Misp}{\mathsf{MI}}
\newcommand{\Ham}{\delta_H}
\newcommand{\Matching}{\textsf{Matching}}
\newcommand{\Aligned}{\textsf{Aligned}}
\newcommand{\Hidden}{\textsf{Hidden}}

\newtheorem{fact}[theorem]{Fact}

\usepackage{todonotes}

\begin{document}

\maketitle
\begin{abstract}
In this work, we address the problem of approximate pattern matching with wildcards. Given a pattern $P$ of length $m$ containing $D$ wildcards, a text $T$ of length $n$, and an integer $k$, our objective is to identify all fragments of $T$ within Hamming distance $k$ from $P$.

Our primary contribution is an algorithm with runtime $\mathcal{O}(n + (D+k)(G+k)\cdot n/m)$ for this problem. Here, $G \le D$ represents the number of maximal wildcard fragments in $P$. We derive this algorithm by elaborating in a non-trivial way on the ideas presented by~[Charalampopoulos et al., FOCS'20] for pattern matching with mismatches (without wildcards). Our algorithm improves over the state of the art when $D$, $G$, and~$k$ are small relative to $n$. For instance, if $m = n/2$, $k = G = n^{2/5}$, and $D = n^{3/5}$, our algorithm operates in $\mathcal{O}(n)$ time, surpassing the $\Omega(n^{6/5})$ time requirement of all previously known algorithms.

In the case of \emph{exact} pattern matching with wildcards ($k=0$), we present a much simpler algorithm with runtime $\mathcal{O}(n + DG \cdot n/m)$ that clearly illustrates our main technical innovation:
the utilisation of positions of $P$ that do not belong to any fragment of $P$ with a density of wildcards much larger than $D/m$ as anchors for the sought (approximate) occurrences.
Notably, our algorithm outperforms the best-known $\mathcal{O}(n \log m)$-time FFT-based algorithms of [Cole and Hariharan, STOC'02] and [Clifford and Clifford, IPL'04] if $DG = o(m \log m)$.

We complement our algorithmic results with a structural characterization of the $k$-mismatch occurrences of $P$. We demonstrate that in a text of length $\mathcal{O}(m)$, these occurrences can be partitioned into $\mathcal{O}((D+k)(G+k))$ arithmetic progressions. Additionally, we construct an infinite family of examples with $\Omega((D+k)k)$ arithmetic progressions of occurrences, leveraging a combinatorial result on progression-free sets [Elkin, SODA'10].
\end{abstract}

\clearpage
\pagenumbering{arabic}

\section{Introduction}
\label{sec:intro}
Pattern matching is one of the most fundamental algorithmic problems on strings.
Given a text $T$ of length $n$ and a pattern $P$ of length $m$, both over an alphabet $\Sigma$, the goal is to compute all occurrences of $P$ in $T$.
This problem admits an efficient linear-time solution, for example using the seminal algorithm of Knuth, Morris, and Pratt~\cite{knuth1977fast}.
However, looking for exact matches of $P$ in $T$ can be too restrictive for some applications,
for example when working with potentially corrupted data, 
when accounting for mutations in genomic data,
or when searching for an incomplete pattern.
There are several ways to model and to work with corrupt or partial textual data for the purposes of pattern matching.

A natural and well-studied problem is that of computing fragments of the text that are close to the pattern with respect to some distance metric.
One of the most commonly used such metrics is the Hamming distance.
Abrahamson and Kosaraju~\cite{Abrahamson,Kosaraju} independently developed an $\cO(n \sqrt{m \log m})$-time algorithm that computes
the Hamming distance between the pattern and every $m$-length substring of the text using convolutions via the Fast Fourier Transform (FFT).
This complexity has only been recently improved with the state of the art being the randomised $\cO(n \sqrt{m})$-time algorithm of Chan et al.~\cite{DBLP:conf/focs/Chan0WX23} and the deterministic $\cO(n \sqrt{m \log\log m})$-time algorithm of Jin et al.~\cite{DBLP:journals/corr/abs-2403-20326}.
In many applications, one is interested in computing substrings of the text that are
\emph{close} to the pattern instead of computing the distance to every substring.
In this case, an integer threshold $k$ is given as part of the input and the goal is to compute
fragments of $T$ that have at most $k$ mismatches with $P$.
Such a fragment is called a \emph{$k$-mismatch occurrence} of $P$ in $T$.
The state-of-the-art for this problem are the $\cO(n\sqrt{k\log k})$-time algorithm of Amir et al.~\cite{DBLP:journals/jal/AmirLP04},
the $\cO(n + (n/m)\cdot k^2)$-time algorithm of Chan et al.~\cite{DBLP:conf/stoc/ChanGKKP20}, and the $\cO(n + kn/\sqrt{m})$-time algorithm of Chan et al.~\cite{DBLP:conf/focs/Chan0WX23}
that provides a smooth trade-off between the two aforementioned solutions, improving the bound for some range of parameters.
Deterministic counterparts of the last two algorithms (which are randomised) at the expense of extra polylogarithmic factors were presented in~\cite{DBLP:conf/soda/CliffordFPSS16,DBLP:conf/icalp/GawrychowskiU18,unified}.

The structure of the set of $k$-mismatch occurrences of $P$ in $T$ admits an insightful characterisation, shown by Charalampopoulos et al.~\cite{unified} who tightened the result of Bringmann et al.~\cite{DBLP:conf/soda/BringmannWK19}:
either $P$ has $\cO(k\cdot n / m)$ $k$-mismatch occurrences in $T$ or $P$ is at Hamming distance less than $2k$ from a string with period $q=\cO(m/k)$;
further, in the periodic case, the starting positions of the $k$-mismatch occurrences of $P$ in $T$ can be partitioned into $\cO(k^2 \cdot n / m)$ arithmetic progressions with difference $q$.
This characterisation can be exploited towards obtaining efficient algorithms in settings other than the standard one, e.g., in the setting where both $P$ and $T$ are given in compressed form~\cite{unified}, and, in combination with other ideas and techniques, in the streaming setting~\cite{DBLP:conf/focs/KociumakaPS21} and in the quantum setting~\cite{DBLP:conf/soda/JinN23}.

In the case when the positions of the corrupt characters in the two strings are known in advance, one can use a more adaptive approach, by placing a \emph{wildcard} $\wild \not\in \Sigma$, a special character that matches any character in $\Sigma \cup \{ \wild \}$, in each of these positions, and then performing exact pattern matching. 
Already in 1974, Fischer and Paterson~\cite{FP74} presented an $\cO(n\log m \log \sigma)$-time algorithm for the pattern matching problem with wildcards.
Subsequent works by Indyk~\cite{DBLP:conf/focs/Indyk98a}, Kalai~\cite{DBLP:conf/soda/Kalai02a}, and Cole and Hariharan~\cite{DBLP:conf/stoc/ColeH02} culminated in an $\cO(n\log m)$-time deterministic algorithm~\cite{DBLP:conf/stoc/ColeH02}.
A few years later, Clifford and Clifford~\cite{DBLP:journals/ipl/CliffordC07} presented a very elegant algorithm with the same complexities.
All the above solutions are based on fast convolutions via the FFT.

Unsurprisingly, the pattern matching problem in the case where we both have
wildcards and allow for mismatches has also received significant attention.
Conceptually, it covers the case where some of the corrupt positions are known, but not all of them.
We denote by $D$ the total number of wildcards in $P$ and $T$, and by $G$ the number of maximal fragments in $P$ and $T$
all of whose characters are wildcards.
A summary of known results for the considered problem is provided in \cref{tab:results}.

Note that, in practice, $G$ may be much smaller than $D$. 
For example, DNA sequences have biologically important loci, which are characterised using the notion of structured motifs~\cite{biology}:
sequences of alternating conserved and non-conserved blocks.
Conserved blocks are ones which are identical across intra- or inter-species occurrences of the structured motif, while non-conserved ones are not known to have biological significance and can vary significantly across such occurrences.
Non-conserved blocks can be hence modelled with blocks of wildcards as in \cite{DBLP:journals/ipl/ManberB91}.
In this case, evidently, we have $G$ being much smaller than~$D$.
This feature has been used in the literature before, e.g., for the problem of answering longest common compatible prefix queries over a string with wildcards.
Crochemore et al.~\cite{DBLP:journals/jda/CrochemoreIKKLR15} showed an $\cO(n\gaps)$-time construction algorithm for a data structure that is capable of answering such queries in $\cO(1)$ time, while the previously best known construction time was $\cO(n\dontcare)$~\cite{DBLP:conf/lata/Blanchet-SadriL13}.

In several applications, it is sufficient to only account for wildcards in one of $P$ and $T$:
in the application we just discussed, the text is a fixed DNA sequence, whereas the sought pattern, the structured motif, is modelled as a string with wildcards.
In such cases, one can obtain more efficient solutions than those for the general case where both $P$ and $T$ have wildcards,
such as the ones presented in \cite{DBLP:journals/ipl/CliffordP10,DBLP:journals/ipl/NicolaeR17} and the one we present here.

\renewcommand{\arraystretch}{1.4}
\begin{table}[t!]
    \caption{Previous results on pattern matching with wildcards under the Hamming distance.}\label{tab:results}
    \centering
    \begin{tabular}{c|c|c}
         Time complexity & $\wild$ in & Reference \\\hline
         $\cO(nk^2 \log^2 m)$ & \multirow{5}{*}{$P$ and $T$} & \cite{DBLP:journals/jcss/CliffordEPR10} \\\cline{1-1}\cline{3-3}
         $\cO(n(k + \log m \log k) \log n)$ &  & \cite{DBLP:journals/jcss/CliffordEPR10} \\\cline{1-1}\cline{3-3}
         $\cO(nk \polylog m)$ &  & \cite{DBLP:conf/soda/CliffordEPR09,DBLP:journals/jcss/CliffordEPR10,DBLP:journals/algorithms/NicolaeR15} \\\cline{1-1}\cline{3-3} 
         $\cO(n\sqrt{m\log m})$  & & follows from~\cite{Abrahamson,Kosaraju}, cf.~\cite{DBLP:journals/jcss/CliffordEPR10} \\\cline{1-1}\cline{3-3}
         $\cO(n\sqrt{m-D}\log m)$  & & \cite{DBLP:journals/jal/AmirLP04} \\\hline
         $\cO(n\sqrt[3]{mk\log^2 m})$ & one of $P$ or $T$ & \cite{DBLP:journals/ipl/CliffordP10} \\\hline
         $\cO(n\sqrt{k\log m} + n \cdot \min\{\sqrt[3]{Gk\log^2 m}, \sqrt{G\log m}\})$ & $P$ & \cite{DBLP:journals/ipl/NicolaeR17} \\
    \end{tabular}
\end{table}

\subparagraph{Multi-framework algorithms with the \pillar model.}
We describe our algorithms in the \pillar model,
introduced by Charalampopoulos et al.~\cite{unified}.
In this model, we account for the number of calls to a small set of versatile primitive operations on strings, called \pillar operations,
such as longest common extension queries or internal pattern matching queries, plus any extra time required to perform usual word RAM operations. 
The \pillar model allows for a unified approach across several settings, due to known efficient implementations of \pillar operations.
These settings include the standard word RAM model, 
the compressed setting, where the strings
are compressed as straight-line programs,
the dynamic setting~\cite{gawrychowski2018optimal}, and the quantum setting.
Therefore, in essence, we provide meta-algorithms, that can be combined with efficient \pillar implementations to give efficient algorithms for a variety of settings.

\subparagraph{The standard trick.}
For reasons related to the periodic structure of strings, it is often convenient to assume
that the length of the text is at most $3m/2$, where $m$ is the length of the pattern.
This does not pose any actual restrictions as one can cover $T$ with $\cO(n/m)$
fragments, each of length $3m/2$ (except maybe the last one) such that each two consecutive fragments overlap on $m-1$ positions.
Then, any occurrence of $P$ in $T$ is contained in exactly one of these fragments.
Thus, an algorithm with runtime $C(m)$
for a pattern of length $m$ and a text of length at most $3m/2$,
readily implies an algorithm with runtime $\cO(C(m) \cdot n/m)$ for texts of length $n$,
as one can run $\cO(n/m)$ separate instances and aggregate the results.

\subparagraph{Reduction to pattern matching with mismatches.}
The problem of $k$-mismatch pattern matching with $\dontcare$ wildcards
can be straightforwardly reduced to $(\dontcare+k)$-mismatch pattern matching
in \emph{solid strings}, i.e., strings without wildcards.
In what follows we consider solid texts.
Given the pattern $P$, construct the string $P_\#$ obtained by replacing every
wildcard in~$P$ with a new character $\# \notin \Sigma$.
Observe that
  a pattern $P$ with $\dontcare$ wildcards
  has a $k$-mismatch occurrence at a position~$i$ of a solid text $T$ if and only if
  $P_\#$ has a $(\dontcare + k)$-mismatch occurrence at that position.

In~\cite{unified}, the authors present an efficient algorithm for
the $d$-mismatch pattern matching problem for solid strings in the \pillar model.

\begin{theorem}[{\cite[Main Theorem 8]{unified}}]\label{fact:thm-unified}
  Let $S$ and $T$ be solid strings of respective lengths $m$ and $n \le 3m/2$.
  We can compute a representation of the $d$-mismatch occurrences of $S$
  in $T$ using $\cO(d^2\log\log d)$ time plus $\cO(d^2)$ \pillar operations.
\end{theorem}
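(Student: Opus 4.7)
The plan is to follow the familiar periodic/non-periodic dichotomy that is by now standard in approximate pattern matching. I would first run a subroutine that, using $\cO(d^2)$ \pillar operations, either finds a string $Q$ with period $q \le m/(8d)$ such that $\Ham(S, Q[0\dd m]) < 2d$, or certifies that no such approximate period exists. To find a candidate period, I would look at the prefix of $S$ of length $\Theta(m/d)$, try to match it against shifts of itself within $S$ via \lce queries, and extract $\cO(d)$ candidate small periods; each candidate is tested by bounding the number of mismatches between $S$ and its $q$-periodic extension using $\cO(d)$ \lce queries (the standard ``kangaroo jumping'' that peels off one mismatch per query).

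\emph{Non-periodic case.} If no approximate period is found, the structural result already cited in the introduction guarantees that $S$ has at most $\cO(d)$ $d$-mismatch occurrences in $T$. To compute them I would extract $\Theta(d)$ disjoint \emph{anchor} fragments of $S$ of length $\Theta(m/d)$ from regions that are free from short repetitions (these anchors come out of the same analysis that rules out an approximate period). By pigeonhole, any $d$-mismatch occurrence of $S$ in $T$ contains an exact copy of at least one anchor; each anchor has $\cO(d)$ exact occurrences in $T$ by the non-periodicity, and those can be listed with one internal pattern matching query. Every candidate alignment is then verified by kangaroo jumping in $\cO(d)$ \lce operations, so the whole case runs in $\cO(d^2)$ \pillar operations.

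\emph{Periodic case.} Let $Q$ be the $q$-periodic string of length $m$ closest to $S$, and let $M_S = \Misp(S,Q)$ and $M_T = \Misp(T, Q')$, where $Q'$ is the natural extension of $Q$ to length $n$. Both sets have size $\cO(d)$. The goal is to output $\cO(d^2)$ arithmetic progressions (of common difference $q$) that together constitute all $d$-mismatch occurrences. The idea is that for any alignment $i$ with $T[i\dd i+m]$ ``close'' to $Q$, the Hamming distance $\Ham(S, T[i\dd i+m])$ depends only on how $M_S$ and the shifted set $M_T - i$ interact inside $[0,m)$, plus an error controlled by $\Ham(T[i\dd i+m], Q')$. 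I would first locate, with a constant number of \pillar operations per \emph{run} of candidate shifts, the $\cO(d)$ maximal intervals of starting positions $i \equiv r \pmod q$ in which $T$ tracks $Q'$ well enough; and then, inside each such run, compute how the count $|(M_S) \triangle (M_T - i)|$ changes as $i$ jumps by $q$. Each such step adds and removes $\cO(1)$ mismatches on average, and processing all $\cO(d)$ runs over all $\cO(d)$ residues classes yields $\cO(d^2)$ piecewise-constant intervals, i.e.\ arithmetic progressions.

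The main obstacle is this last step: maintaining the symmetric-difference counter efficiently as the alignment shifts, so that the total bookkeeping uses $\cO(d^2 \log\log d)$ word-RAM time and $\cO(d^2)$ \pillar calls (not $\cO(d^2\log d)$, which would be immediate from balanced BSTs). I expect the $\log\log d$ factor to come from predecessor structures (y-fast tries or atomic heaps) used to enumerate, in near-constant amortised time, the elements of $M_S$ and $M_T$ that enter or leave the overlap window when $i$ advances by $q$. The output is the union of $\cO(d^2)$ arithmetic progressions, which is precisely the ``representation'' promised by the theorem.
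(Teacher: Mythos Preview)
The paper does not prove this theorem; it is quoted verbatim as \cite[Main Theorem 8]{unified} and used as a black box. There is therefore no ``paper's own proof'' to compare against, and your proposal is attempting to reprove a result that the present paper only imports.

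That said, your sketch drifts from the actual argument in \cite{unified} in ways that would create gaps if you tried to carry it out. First, the case analysis in \cite{unified} is a \emph{trichotomy} (breaks, repetitive regions, almost periodic), not a dichotomy; the ``repetitive regions'' case---where no $2d$ aperiodic breaks exist yet the whole pattern is not approximately periodic---is precisely the delicate one, and your non-periodic branch does not cover it. Second, in the periodic case you propose to iterate over $\cO(d)$ residue classes modulo $q$; in fact one first passes to a fragment $T'$ (cf.\ \cref{fact:relevant-fragment} in this paper) in which all $d$-mismatch occurrences lie in a single residue class, so the sliding-window bookkeeping is one-dimensional. Third, the $\log\log d$ overhead in \cite{unified} comes from integer sorting of the $\cO(d^2)$ events (via Han's algorithm), not from predecessor structures maintained online. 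If you want to see the actual mechanics, the present paper's \cref{sec:ham-pm} reproduces the same three-case structure (generalised to wildcards), and \cref{lemma:per-region-occs} shows exactly how the periodic sliding window is organised.
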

Applying \cref{fact:thm-unified} with $S= P_\#$ and $d = \dontcare + k$,
we obtain an algorithm for $k$-mismatch pattern matching
with $\dontcare$ wildcards that runs in $\cOtilde((\dontcare + k)^2)$ time in the \pillar model.\footnote{In this work, the notation $\tilde{\mathcal{O}}(\cdot)$ suppresses factors polylogarithmic in the length of the input strings.}

\subparagraph{Our results.}
We provide a more fine-grained result, replacing one $\dontcare$ factor with a~$\gaps$ factor.
We also make an analogous improvement over the structural result for the set of $k$-mismatch occurrences obtained via the reduction to $(D+k)$-mismatch pattern matching.
Our main result can be formally stated as follows.

\begin{restatable}{theorem}{hampm}\label{thm:ham-pm}
  Let $P$ be a pattern of length $m$ with $\dontcare$ wildcards arranged
  in $\gaps$ groups, $T$ be a solid text of length $n\leq 3m/2$, and $k$ be a positive integer.
  We can compute a representation of the $k$-mismatch occurrences of $P$
  in $T$ as $\cO((\dontcare + k)\gaps)$ arithmetic progressions
  with common difference
  and $\cO((\dontcare + k)k)$ additional occurrences
  using $\cO((\dontcare + k)\cdot (\gaps + k)\log\log (\dontcare + k))$
  time plus $\cO((\dontcare + k)\cdot (\gaps + k))$ \pillar operations.
\end{restatable}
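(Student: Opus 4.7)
The plan is to refine the strategy of \cref{fact:thm-unified} applied with $d=D+k$ to $P_\#$, which would yield an $\cO((D+k)^2)$-time algorithm; our goal is to replace one factor of $D+k$ by $G+k$, by exploiting the fact that the $D$ wildcards of $P$ form only $G$ maximal groups. Following the reduction recalled in the excerpt, it suffices to locate $(D+k)$-mismatch occurrences of $P_\#$ in $T$, and throughout we may assume $n\le 3m/2$ via the standard trick.

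Following~\cite{unified}, I would branch on whether the solid part of $P$ is within Hamming distance $2k$ of a string of some small period $q = \cO(m/(D+k))$. In the non-periodic branch, there are only $\cO(D+k)$ candidate occurrences; we locate them by an anchoring procedure. The main innovation lies in the anchor selection: the $G$ wildcard groups divide $P$ into at most $G+1$ maximal solid pieces, and any position whose surrounding window has wildcard density $\cO(D/m)$ is a valid anchor for our purposes. All but $\cO(G)$ positions of $P$ satisfy this density bound, so, together with $\cO(k)$ additional mismatch-robust anchors, we obtain $\cO(G+k)$ anchors in total (versus $\cO(D+k)$ in the black-box application). Each anchor generates $\cO(D+k)$ candidate alignments, and each candidate is verified by a bounded-Hamming-distance query costing $\cO(D+k)$ \pillar operations (ignoring the wildcard positions of $P$), for a total cost of $\cO((D+k)(G+k))$.

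In the periodic branch, I would prove that the $k$-mismatch occurrences of $P$ in $T$ group into $\cO((D+k)G)$ arithmetic progressions with common difference $q$, plus $\cO((D+k)k)$ exceptional occurrences. The intuition is that, relative to a periodic approximation of $P$, the misperiodicity contributed by the wildcards is concentrated in $G$ contiguous groups rather than in $D$ individual positions, while the allowed $k$ true mismatches contribute the remaining $\cO((D+k)k)$ exceptions. The main obstacle will be precisely this structural statement: formally showing that the \emph{group} structure (not merely the count) of the wildcards controls the number of progressions, and that genuine mismatches contribute only $\cO((D+k)k)$ singletons that deviate from the periodic pattern. A secondary challenge is to ensure that the anchor selection in the non-periodic branch actually catches every $k$-mismatch occurrence with only $\cO(G+k)$ anchors, despite the fact that an occurrence might align part of $T$ with a wildcard-dense region of $P$; this is where the ``density $\cO(D/m)$'' criterion must be calibrated carefully so that sliding windows outside the $\cO(G)$ heavy regions provide enough matching mass to act as anchors. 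The $\Omega((D+k)k)$ lower bound promised in the abstract indicates that the split between APs and singleton exceptions is tight in $k$.
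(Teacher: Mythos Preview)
Your high-level plan is on the right track, but it has two substantive gaps that prevent it from working as stated.

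\textbf{The missing middle case.} A two-way branch (periodic vs.\ non-periodic) is not enough. The paper's decomposition (\cref{lemma:ham-decomp}) has \emph{three} outcomes: either $P$ contains $2(G+k)$ disjoint solid \emph{breaks} of length $\Theta(m/(G+k))$ with large period, or $P$ contains disjoint \emph{repetitive regions} of total length $\Omega(m)$ (each approximately periodic but with $\Theta(k|R_i|/m)$ misperiods), or $P$ is globally almost periodic. Your ``non-periodic'' branch only covers the breaks case. The repetitive-regions case is the technically hardest: to bound the number of $k_i$-mismatch occurrences of each region $R_i$ you must bound the number of wildcards inside $R_i$, and this is precisely where sparsifiers enter. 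The paper does \emph{not} use sparsifiers as anchors; it requires each $R_i$ to \emph{contain} a sparsifier, which forces $D_i \le 16|R_i|D/m$ and thereby lets \cref{lemma:per-region-occs} apply with $d_i=\Theta((D+k)|R_i|/m)$. Without this, a repetitive region could be packed with wildcards and have far too many approximate occurrences in $T$.

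\textbf{Arithmetic and filtering in the breaks case.} Even accepting your anchor count of $\cO(G+k)$, your bookkeeping does not close. With $\cO(G+k)$ anchors each having $\cO(D+k)$ occurrences in $T$, you get $\cO((G+k)(D+k))$ candidate alignments; verifying each one (which costs $\cO(G+k)$ via \cref{fact:kangaroo}, not $\cO(D+k)$) gives $\cO((G+k)^2(D+k))$. The paper inserts a \emph{marking} step: place a mark at each candidate position, observe that any true $k$-mismatch occurrence receives at least $2(G+k)-k\ge G+k$ marks, and hence only $\cO(D+k)$ positions survive to be verified. This pigeonhole filtering is essential and absent from your outline. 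Separately, the claim that ``all but $\cO(G)$ positions of $P$'' are sparsifiers is false: \cref{lem:toy_problem_better} only guarantees $m/2-D$ sparsifiers, arranged in $\cO(G)$ intervals, which is what the decomposition actually needs.

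Your intuition for the periodic branch and the $\cO((D+k)G)$-vs-$\cO((D+k)k)$ split is correct and matches the paper's event-driven argument, though you should note that the split comes from separating the $\Hidden$ events (driven by the $G$ wildcard groups) from the $\Matching/\Aligned$ events (driven by the $\cO(k)$ misperiods of $P$).
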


In the usual word RAM model, by using known implementations of the \pillar operations 
with $O(n)$ preprocessing time and $O(1)$ operation time,
using the ``standard trick'', and observing that the loglogarithmic factor can be avoided at the cost of $\cO(n)$ extra time,
we obtain an algorithm with runtime $\cO(n + (n/m)(\dontcare + k)(\gaps + k))$
for texts of arbitrary length~$n$.
\cref{sec:pillar} details the implementation of our algorithm in other settings, such as the dynamic and compressed settings.
For example, given a solid text $T$ and a pattern $P$ with~$\dontcare$ wildcards represented as straight-line programs of sizes $N$ and~$M$ respectively,
we can compute the number of $k$-mismatch occurrences of $P$ in $T$ in time $\cOtilde(M + N \cdot (\dontcare + k)(\gaps + k))$, without having to uncompress $P$ and $T$.

We complement our structural result with a lower bound on the
number of arithmetic progressions of occurrences of a pattern with mismatches and wildcards (\cref{thm:lower-bound}), based on a neat construction that employs large sets that do not contain any arithmetic progression of size 3~\cite{https://doi.org/10.1112/jlms/s1-11.4.261,Behrend,DBLP:conf/soda/Elkin10}. Informally, we show that there exist a pattern $P$ and a text $T$ of length at most $3|P|/2$ such that the set of $k$-mismatch occurrences of $P$ in~$T$
  cannot be covered with less than $\Omega((\dontcare+k) \cdot (k+1))$
  arithmetic progressions. 
This implies, in particular, a lower bound of $\Omega(\dontcare)$ on the number of arithmetic progressions of exact occurrences for a pattern with $\dontcare$ wildcards and a lower bound of $\Omega(k^2)$ on the number of arithmetic progressions of $k$-mismatch occurrences of a solid pattern, thus showing the tightness of the known upper bound~\cite{unified}.

When $k = 0$, \cref{thm:ham-pm} readily implies an $\cO(n+\dontcare \gaps \cdot n/m)$-time algorithm for \emph{exact} pattern matching. However, the techniques employed by this algorithm are rather heavy-handed, and this can be avoided. In \cref{sec:exact-pm}, we present a much simpler algorithm that achieves the same time complexity and showcases the primary technical innovation of our approach: the utilization of carefully selected positions, termed \emph{sparsifiers}, which exclusively belong to fragments $F$ of $P$ such that the ratio of the number of wildcards within them to their length is bounded by $\cO(D/m)$. In the standard word RAM model, the implied $\cO(n+\dontcare \gaps \cdot n/m)$ time complexity for exact pattern matching outperforms the state-of-the-art $\cO(n\log m)$~\cite{DBLP:journals/ipl/CliffordC07,DBLP:conf/stoc/ColeH02} when $DG = o(m\log m)$.

\subparagraph{Technical overview.}
To illustrate how sparsifiers help, consider our algorithm for exact pattern matching, which
draws ideas from the work of Bringmann et al.~\cite{DBLP:conf/soda/BringmannWK19}.
We first compute a solid $\Omega(m/G)$-length fragment $S$ of $P$ that contains a sparsifier.
We then compute its exact matches in $T$.
If $S$ only has a few occurrences, we straightforwardly verify which of those extend to occurrences of $P$.
However, if $S$ has many occurrences we cannot afford to do that, and we instead have to exploit the implied periodic structure of $S$.
We distinguish between two cases.
In the case when $P$ matches a periodic string with the same period as~$S$, denoted $\per(S)$,
we take a sliding window approach as in~\cite{DBLP:conf/soda/BringmannWK19}, using the fact that the wildcards are organised
in only $\gaps$ groups.
The remaining case poses the main technical challenge. 
In that case, our goal is to align the maximal fragment $S':=P[i \dd j]$ of $P$ that contains $S$ and matches a solid string with period $\per(S)$
with a periodic fragment of $T$ such that position $i-1$ is aligned with a position breaking the periodicity in $T$; a so-called \emph{misperiod}.
To this end, we compute $\cO(G)$ maximal fragments of $T$, called $S$-runs, that have period $\per(S)$.
The issue is, however, that up to $D$ misperiods in $T$ might be aligned with wildcards of $S'$.
A straightforward approach would be to extend each $S$-run to the left, allowing for $D+1$ misperiods
and to try aligning each such misperiod with $i-1$. This would yield an algorithm with
runtime $\cO(\gaps^2 \dontcare)$ in the \pillar model, as we would have $\cO(DG)$ candidate misperiods to align position $i-1$ with, and the verification time for each such alignment is $\cO(G)$.
The crucial observation is that since $S'$ contains a sparsifier, we do not need to extend each $S$-run allowing for $D+1$ misperiods.
Instead, we extend it while the ratio of the encountered misperiods to its length does not exceed $20 \cdot \dontcare/m$.
By skipping $S$-runs that are covered due to the extension of other $S$-runs, we ensure that the total number of misperiods with which we
align $i-1$ is only $\cO(D)$, obtaining the desired complexity.

As for handling mismatches, we follow the framework of Charalampopoulos et al.~\cite{unified} for $k$-mismatch pattern matching on solid strings.
They showed that an efficient structural analysis of a solid pattern can return a number of so-called \emph{breaks} or a number of so-called \emph{repetitive regions}, or conclude that the pattern is \emph{almost periodic}.
They treated each of the three cases separately, exploiting the computed structure.
We make several alterations to account for wildcards, such as ensuring that breaks are solid strings and adapting the sliding window approach.
The primary technical challenge in achieving an efficient solution lies in limiting the number of occurrences of repetitive regions in $T$. The greater the number of repetitive region occurrences, the higher the number of potential starting positions for $k$-mismatch occurrences of $P$. 
We achieve that by ensuring that each repetitive region contains a sparsifier.
This way, we force an upper bound on the number of wildcards in each repetitive region, which, in turn,
allows us to bound the number of its approximate occurrences in $T$.

Our lower bound is based on a neat construction that employs large sets that do not contain any arithmetic progression of size 3~\cite{https://doi.org/10.1112/jlms/s1-11.4.261,Behrend,DBLP:conf/soda/Elkin10}.
We use these sets to construct the pattern $P$ and the text $T$. They consist mostly of $0$s, except that $P$ contains wildcards and $1$s positioned at indices that form a progression-free set, and $T$ contains $0$s also positioned at indices that form a progression-free set, but that are far apart from each other. Our construction ensures that there is a $k$-mismatch occurrence of $P$ at position $i$ in $T$
if and only if a $1$ or a wildcard of $P$ is aligned with a $1$ of $T$.
We show that the set of such positions~$i$ has size $\Omega((\dontcare + k) \cdot (k+1))$ and is progression-free.

\subparagraph{Organisation of the paper.}
\cref{sec:prel} introduces concepts relevant to this work, as well as an abstract problem that we use in our analysis.
\cref{sec:exact-pm} presents an algorithm for exact pattern matching with wildcards in the \pillar model. 
It illustrates some of the main ideas underlying this work,
without having to handle mismatches, which bring further challenges.
In \cref{sec:ham-pm}, we describe the algorithm
that underlies \cref{thm:ham-pm}; the implications of this theorem in different settings are provided in \cref{sec:pillar}.
We conclude with \cref{sec:lower-bound} where we present our structural lower bound.

\section{Preliminaries}\label{sec:prel}
In this work, $\Sigma$ denotes an alphabet that consists of integers polynomially bounded in the length of the input strings.
The elements of $\Sigma$ are called \emph{(solid) characters}.
Additionally, we consider a special character denoted by $\wild$ that is not in $\Sigma$ and is called a \emph{wildcard}.
Let $\Sigma_{\wild} = \Sigma \cup \{\wild\}$.
Two characters match if (a) they are identical or (b) at least one of them is a wildcard.
Two equal-length strings match if and only if their $i$-th characters match for all $i$.

For an integer $n\geq 0$, we denote the set of all length-$n$ strings over an alphabet $A$ by~$A^n$.
The set of all strings over $A$ is denoted by $A^*$.
The unique empty string is denoted by $\eps$. 
A string in $\Sigma_{\wild}^*$ is called \emph{solid} if it only contains solid characters, i.e., it is in $\Sigma^*$.

For two strings $S,T \in\Sigma_{\wild}^*$, we use $ST$ to denote their concatenation.
For an integer $m > 0$, the string obtained by concatenating $m$ copies of $S$ is denoted by~$S^m$.
We denote by~$S^\infty$ the string obtained by concatenating infinitely many copies of $S$.

For a string $T \in\Sigma_{\wild}^n$ and an index $i\in [1\dd n]$,\footnote{For integers $i,j\in \mathbb{Z}$, denote $[i\dd j] =
\{k \in \mathbb{Z} : i \le k \le j\}$, $[i
\dd j)=\{k \in \mathbb{Z} : i \le k <
j\}$.} the $i$-th character of $T$ is denoted by~$T[i]$.
We use $|T| = n$ to denote the length of $T$.
For indices $1 \le i, j\le n$, 
$T[i\dd j]$ denotes the \emph{fragment} $T[i] T[{i+1}]\cdots T[j]$ of~$T$ if $i\le j$
and the empty string otherwise. We extend this notation in a natural way to $T[i \dd j+1) = T[i\dd j] = T(i-1 \dd j]$.   
When $i=1$ or $j=n$, we omit these indices, i.e., $T[\dd j] = T[1\dd j]$
and $T[i\dd ] = T[i\dd n]$.   
A string $P$ is a \emph{prefix} of $T$ if there exists $j\in [1\dd n]$
such that $P = T[\dd j]$, and a \emph{suffix} of $T$ if there exists $i\in [1\dd n]$
such that $P = T[i\dd ]$. A \emph{ball} with a radius $r$ and a center $i$, $B_T(i,r)$ is a fragment $T[\max\{1,i-r\}\dd \min\{i+r,n\}]$, where we often omit the subscript $T$ if it is clear from the context. 
 A position $i$ of a string $T$ is called an \emph{occurrence} of a string $P$ if $T[i\dd i+|P|) = P$. 

A positive integer $\rho$ is a \emph{period} of a (solid) string $T\in \Sigma^n$ if $T[i] = T[i+\rho]$ for all $i \in [1 \dd n-\rho]$.
The smallest period of $T$ is referred to as \emph{the period} of $T$ and is denoted by~$\per(T)$. If $\per(T) \leq |T|/2$, 
$T$ is called \emph{periodic}. We exploit the following folklore fact, which is a straightforward corollary of the Fine--Wilf periodicity lemma~\cite{fine1965uniqueness}: 

\begin{corollary}[folklore]\label{cor:occurrences}
Let $P,T \in \Sigma^*$ be solid strings such that $|T| < 2 |P|$.
The set of occurrences of $P$ in $T$ can be represented as one arithmetic progression (possibly, trivial) with difference equal to $\per(P)$. 
\end{corollary}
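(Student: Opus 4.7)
I will let $O = \{i_1 < i_2 < \cdots < i_\ell\}$ denote the set of occurrences of $P$ in $T$, and treat the case $\ell \le 2$ separately (a set of at most two positions is already an arithmetic progression, i.e., the ``trivial'' case in the statement), focusing thereafter on $\ell \geq 3$ with the goal of showing the common difference is $\pi := \per(P)$.

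The first step is standard and quick: since $|T| < 2|P|$, every occurrence starts in $[1, |T|-|P|+1]$, so $i_\ell - i_1 \le |T|-|P| < |P|$. Consequently, any two copies of $P$ inside $T$ overlap, and aligning them on their overlap shows that every pairwise difference $i_k - i_j$ is a period of $P$; in particular, each consecutive difference $d_k := i_{k+1}-i_k$ is a period of $P$ with $d_k \ge \pi$, and $\sum_k d_k = i_\ell - i_1 < |P|$. For $\ell \ge 3$ this additionally gives $d_1 + \pi \le |P|$: indeed, if $d_1 > |P|-\pi$ then together with $d_2 \ge \pi$ we would get $d_1 + d_2 > |P|$, contradicting the previous inequality.

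The main step is then the Fine--Wilf periodicity lemma applied to the periods $d_1$ and $\pi$ of $P$: since $d_1 + \pi \leq |P|$, Fine--Wilf yields that $\gcd(d_1, \pi)$ is also a period of $P$, which by minimality of $\pi$ forces $\pi \mid d_1$. Writing $d_1 = m\pi$, I would conclude $m = 1$ by exhibiting, if $m \ge 2$, an occurrence of $P$ at the intermediate position $i_1 + \pi$, contradicting the consecutiveness of $i_1$ and $i_2$ in $O$. This intermediate occurrence is produced by observing that the fragment $T[i_1 \dd i_2 + |P|-1]$ is $d_1$-periodic (from the two $P$-copies at its endpoints) and $\pi$-periodic inside each of those copies, and that $\pi \mid d_1$ makes the two periodicities compatible throughout, pinning down $T[i_1+\pi \dd i_1+\pi+|P|-1] = P$. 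Applying the same argument to every consecutive triple $(i_k, i_{k+1}, i_{k+2})$ then gives $d_k = \pi$ for all $k$, so $O$ is the arithmetic progression $\{i_1, i_1+\pi, \dots, i_1+(\ell-1)\pi\}$ with common difference $\pi$. The main technicality I anticipate is precisely this intermediate-occurrence step, and the cleanest route I see is to leverage $\pi \mid d_1$ to extend the $\pi$-periodicity from each $P$-copy to the whole fragment $T[i_1 \dd i_2+|P|-1]$.
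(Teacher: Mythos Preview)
Your argument is correct and is precisely the standard derivation the paper alludes to: the paper does not spell out a proof but merely states the result as folklore, a ``straightforward corollary of the Fine--Wilf periodicity lemma,'' which is exactly the key tool you invoke. One small remark on exposition: the phrase ``applying the same argument to every consecutive triple $(i_k,i_{k+1},i_{k+2})$'' is slightly loose, since the Fine--Wilf step for a gap $d_k$ only needs $d_k+\pi\le|P|$, and this follows for \emph{every} $k$ (including $k=\ell-1$) from the existence of some other gap $d_j$ with $d_k+d_j\le i_\ell-i_1<|P|$---so your conclusion $d_k=\pi$ indeed holds for all $k\in[1,\ell-1]$, not just those that start a triple.
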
 

For an integer $s \in [1 \dd n)$, we denote by $\rot^s(T)$ the string $T[s+1]\cdots T[n]T[1] \cdots T[s]$, while $\rot^0(T):=T$.
More generally, for any $s \in \mathbb{Z}$, we denote by $\rot^s(T)$ the string $\rot^{x}(T)$, where $x = s \bmod{|T|}$.
A non-empty (solid) string is called \emph{primitive} if it is different from each of its non-trivial rotations.

\begin{fact}[\cite{AlgorithmsOnStrings}]\label{fact:period_primitive}
For any solid string $T\in \Sigma^n$, the prefix $T[1 \dd \per(T)]$ is primitive. 
\end{fact}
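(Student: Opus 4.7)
The plan is to argue by contradiction: suppose that the prefix $U := T[1 \dd \per(T)]$ is \emph{not} primitive. By the standard equivalence (a primitive string is one not expressible as a non-trivial integer power of a shorter string), this means $U = V^k$ for some solid string $V$ and some integer $k \ge 2$, so that $|V| = \per(T)/k < \per(T)$. The goal is then to derive a contradiction by exhibiting $|V|$ as a period of $T$, violating the minimality of $\per(T)$.

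First I would recall that, since $\per(T)$ is a period of $T$, every character of $T$ is determined by its position modulo $\per(T)$; formally, $T[i] = U[((i-1) \bmod \per(T)) + 1]$ for all $i \in [1 \dd n]$. Next, since $U = V^k$ and $|V|$ divides $|U|=\per(T)$, the character $U[j]$ depends only on $(j-1) \bmod |V|$, namely $U[j] = V[((j-1) \bmod |V|) + 1]$. Composing these two observations and using the divisibility $|V| \mid \per(T)$, one gets
\[
T[i] = V\bigl[((i-1) \bmod |V|) + 1\bigr] \quad \text{for all } i \in [1 \dd n].
\]
From this closed form it is immediate that $T[i] = T[i+|V|]$ whenever $i + |V| \le n$, i.e., $|V|$ is a period of $T$. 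Since $|V| < \per(T)$, this contradicts the definition of $\per(T)$ as the smallest period, completing the proof.

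The main obstacle, such as it is, is just being careful with index arithmetic in the chain of modular reductions above; the crucial quantitative ingredient is the divisibility $|V| \mid \per(T)$ coming from the assumed factorisation $U = V^k$, which makes the two modular reductions collapse to a single one modulo $|V|$. No external lemma beyond the definitions of period and primitivity is needed; in particular, the Fine--Wilf lemma is not required here.
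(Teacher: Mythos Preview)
Your argument is correct. One small point worth flagging: the paper defines ``primitive'' as being different from every non-trivial rotation, whereas you invoke the equivalent characterisation ``not a proper integer power''. The equivalence is indeed standard, but since you remark that no external lemma is needed, it may be worth a sentence to justify it (e.g., if $U=\rot^s(U)$ for some $s\in[1\dd|U|)$, then $U=\rot^{\gcd(s,|U|)}(U)$, which exhibits $U$ as a power of its length-$\gcd(s,|U|)$ prefix).

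As for the comparison: the paper does not give its own proof of this fact at all---it simply cites it from~\cite{AlgorithmsOnStrings} and moves on. Your self-contained contradiction argument (showing that a proper divisor of $\per(T)$ would itself be a period of $T$) is the standard textbook proof and is entirely adequate here.
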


An integer $\rho$ is a \emph{deterministic period} of a string $S \in \Sigma_{\wild}^*$ (that may contain wildcards), if there exists a solid string $T$ that matches $S$ and has period $\rho$.

The Hamming distance $\Ham(S_1,S_2)$ between two equal-length strings $S_1,S_2$ in $\Sigma_{\wild}^*$ is the number of positions $i$ such that $S_1[i]$ does not match $S_2[i]$.
For two strings $U, Q \in \Sigma_{\wild}^*$, we slightly abuse notation and denote $\Ham(U, Q^\infty[\dd |U|])$ by $\Ham(U,Q^\infty)$. A position $i$ of a string~$T$ is called a $k$-mismatch occurrence of a string $P$ if $\Ham(T[i,i+|P|),P) \le k$, and the set of all $k$-mismatch occurrences of $P$ in $T$ is denoted by $\occhk(P, T)$.

\subparagraph{The \pillar model.}\label{subsec:pillar}
The \pillar model of computation, introduced in~\cite{unified},
abstracts away the implementation of a versatile set of primitive operations on strings.
In this model, one is given a family of solid strings $\mathcal{X}$ for preprocessing.
The elementary objects are fragments $X[i\dd j]$ of strings $X \in \mathcal{X}$. Each such fragment $S$ is represented via a handle, which is how~$S$ is passed as input to \pillar operations.
Initially, the model provides a handle to each $X \in \mathcal{X}$.
Handles to other fragments can be obtained through an $\mathsf{Extract}$ operation:
\begin{itemize}
\item $\mathsf{Extract}(S,\ell,r)$: Given a fragment $S$ and positions $1 \le \ell \le r \le |S|$, extract
$S[\ell \dd r ]$.
\end{itemize}
Furthermore, given elementary objects $S, S_1, S_2$ the following primitive operations are supported in the \pillar model:
\begin{itemize}
\item $\mathsf{Access}(S,i)$: Assuming $i \in [1\dd |S|]$, retrieve $S[i]$.
\item $\mathsf{Length}(S)$: Retrieve the length $|S|$ of $S$.
\item $\lcp(S_1,S_2)$: Compute the length of the longest common prefix of $S_1$ and $S_2$.
\item $\lcp^R(S_1,S_2)$: Compute the length of the longest common suffix of $S_1$ and $S_2$.
\item Internal pattern matching $\mathsf{IPM}(S_1,S_2)$: Assuming that $|S_2| < 2|S_1|$, compute the set of the starting positions
of occurrences of $S_1$ in $S_2$ represented as one arithmetic progression.
\end{itemize}

We use the following facts; the proof of the second one is provided in \cref{sec:kangaroo}.

\begin{fact}[{\cite[proof of Lemma 12]{DBLP:journals/jcss/Charalampopoulos21}}]\label{fact:infty}
The value $\lcp(X^\infty,Z)$ for a fragment $X$ and a suffix~$Z$ of a solid string $Y$ can be computed in $\cO(1)$ time in the \pillar model.
\end{fact}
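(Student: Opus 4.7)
The plan is to reduce the computation of $\lcp(X^\infty, Z)$ to two standard \pillar operations: one $\lcp$ query comparing the first ``period'' and one $\lcp$ query exploiting the period-$|X|$ structure of $X^\infty$. The key observation is that a prefix of $Z$ of length $L$ equals the corresponding prefix of $X^\infty$ if and only if (i) $Z[1\dd \min(L,|X|)] = X[1\dd \min(L,|X|)]$ and (ii) $L$ characters of $Z$ form a string with period $|X|$.

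First, I would compute $a := \lcp(X, Z)$ with one \pillar call. If $a < \min(|X|, |Z|)$, then the first mismatch between $Z$ and $X^\infty$ occurs within the first copy of $X$, so I return $a$. If $a = |Z| \le |X|$, then $Z$ is a prefix of $X$, and hence of $X^\infty$, so I return $|Z|$. Otherwise $a = |X| \le |Z|$, which means $Z$ starts with $X$; in this case it remains to find the largest $L \ge |X|$ such that $Z[1\dd L]$ has period $|X|$ (and additionally $L \le |Z|$).

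To find this $L$, write $Z = Y[p\dd]$ for the starting position $p$ of the suffix $Z$ within $Y$. Since $p + |X| \le |Y|$ (as $|Z| \ge |X|$), I can $\mathsf{Extract}$ the suffix $Z' := Y[p+|X|\dd]$ of $Y$ and compute $\ell' := \lcp(Z, Z')$ with one further \pillar call. By definition of $\ell'$, we have $Z[i] = Z[i+|X|]$ for all $i \in [1 \dd \ell']$ (and a mismatch at position $\ell' + 1$ if $\ell' + |X| < |Z|$). Therefore $Z[1\dd L]$ has period $|X|$ precisely for $L \le \min(|Z|, |X| + \ell')$, and combining with condition (i) (which is already satisfied since $Z$ starts with $X$), the answer is $\min(|Z|, |X| + \ell')$.

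The whole procedure performs one $\mathsf{Length}$ operation, one $\mathsf{Extract}$ operation, two $\lcp$ operations, and a constant amount of arithmetic, giving $\cO(1)$ time in the \pillar model. The only subtlety lies in the boundary cases ($|Z| < |X|$, $Z'$ being empty, and $\ell' + |X| \ge |Z|$), which are handled by the $\min$ in the final expression and the case distinction on $a$.
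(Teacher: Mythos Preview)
Your argument is correct and is essentially the standard one: compare $Z$ against $X$ to handle the first block, and if $Z$ starts with $X$, use $\lcp(Z, Z[|X|+1\dd])$ to detect the longest prefix of $Z$ with period $|X|$. The paper does not give its own proof of this fact; it simply cites \cite[proof of Lemma~12]{DBLP:journals/jcss/Charalampopoulos21}, where the same two-$\lcp$-queries trick appears. So there is nothing to compare against in this paper---your sketch supplies exactly the kind of argument the citation points to.
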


\begin{restatable}{fact}{kangaroo}\label{fact:kangaroo}
Let $P$ be a pattern with $D$ wildcards arranged in $G$ groups and $T$ be a solid text.
  For a position $p$ and a given threshold $k \geq 0$, one can test
  whether $\Ham(P, T[i \dd i+m)) \leq k$ in $\cO(\gaps+k)$ time in the \pillar model.
\end{restatable}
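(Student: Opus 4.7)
The plan is to adapt the classical kangaroo-jump (Landau--Vishkin) technique to patterns with wildcards. The key observation is that since $T$ is solid and wildcards match every character of $\Sigma$, wildcards of $P$ never contribute to the Hamming distance. Consequently, $\Ham(P, T[i\dd i+m))$ equals the sum of Hamming distances between the (at most $\gaps+1$) maximal solid fragments of $P$ and the corresponding solid fragments of $T$. My goal is therefore to walk through $P$ and jump over matching runs using \pillar $\lcp$ queries on solid sub-fragments, while treating each maximal wildcard group of $P$ as a single $\cO(1)$-cost skip.

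As a one-time preprocessing of $P$, I would record the endpoints of the $\gaps$ maximal wildcard groups, so that from any position $j$ we can determine in $\cO(1)$ time whether $P[j]$ is a wildcard---and if so, jump to the position immediately past the current group---or, otherwise, look up the last index of the maximal solid fragment of $P$ containing $j$. The query then proceeds as follows: initialise $j \leftarrow 1$ and a mismatch counter $c \leftarrow 0$. If $P[j]$ is a wildcard, advance $j$ past the whole group. Otherwise, let $[j\dd r]$ be the current maximal solid fragment of $P$ and invoke $\lcp$ in the \pillar model on the solid fragments $P[j\dd r]$ and $T[i+j-1\dd i+m-1]$ (extracting handles as needed). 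If the returned value $\ell$ equals $r-j+1$, the whole solid fragment matches and we set $j \leftarrow r+1$; otherwise, we have found a genuine mismatch, so we set $c \leftarrow c+1$ and $j \leftarrow j+\ell+1$. Abort with \texttt{false} as soon as $c > k$, and return \texttt{true} once $j > m$.

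For the complexity, observe that within each of the $\cO(\gaps)$ maximal solid fragments of $P$ we spend $\cO(1)$ \pillar operations per discovered mismatch, plus one final successful $\lcp$ call that reaches the fragment's end. Summing over all solid fragments and using that the procedure aborts as soon as $c$ exceeds $k$, the total number of $\lcp$ calls is at most $k + \gaps + 1$; the wildcard-group skips contribute another $\cO(\gaps)$ word-RAM operations. The only subtle point I anticipate is to ensure that $\lcp$ is applied exclusively to solid sub-fragments of $P$, so that the \pillar primitive is well-defined, and that each wildcard group is skipped in $\cO(1)$ via the precomputed boundaries rather than being traversed character by character. This is precisely what turns the naïve $\cO(\dontcare+k)$ bound into the desired $\cO(\gaps+k)$ one, and constitutes the main conceptual step of the proof.
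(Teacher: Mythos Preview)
Your proposal is correct and follows essentially the same kangaroo-jump strategy as the paper: scan $P$ left to right, use one $\lcp$ query per mismatch and one $\cO(1)$-cost skip per wildcard group, abort once more than $k$ mismatches are found. The only cosmetic difference is that the paper, instead of precomputing the solid-fragment boundaries, lets the $\lcp$ query (on $P$ with wildcards replaced by $\#\notin\Sigma$) stop at the first wildcard or mismatch and then uses a second $\lcp$ against $\#^{\infty}$ (via \cref{fact:infty}) to find the end of the current wildcard group; either way the total is $\cO(\gaps+k)$ \pillar operations.
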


We work in the \pillar model despite considering strings with wildcards.
We circumvent this by replacing each wildcard with a solid character $\# \not\in \Sigma$ and using \pillar operations over the obtained collection of (solid) strings.
We ensure that for each string in the collection we can efficiently compute a linked list that stores the endpoints of groups of wildcards.

\subparagraph*{Sparsifiers.}
In \cref{sec:intro}, we elucidated the pivotal role of the fragments of the pattern where wildcards exhibit a ``typical'' distribution. In this section, we formalize this concept.

\begin{definition}[Sparsifiers]
Consider a string $P \in \Sigma_{\wild}^m$ containing $D$ wildcards. We call a position $i$ in $X$ a \emph{sparsifier} if $X[i]$ is a solid character and, for any $r$, the count of wildcards within the ball of radius $r$ centered at $i$ is at most $8r \cdot D/m$. 
\end{definition}

In the following, we demonstrate that $P$ contains a long fragment whose every position is a sparsifier. We start with an abstract lemma, where one can think of a binary vector~$V$ as the indicator vector for wildcards, and $\norm{V}$ denotes the number of 1s in $V$. A \emph{run of~1s (resp.~0s)} is a maximal fragment that consists only of~1s (resp.~0s).

\begin{lemma}\label{lem:toy_problem_better}
Let $V$ be a binary vector of size $N$, $M:=\norm{V}$ and $R$ be the number of runs of 1s in $V$. Assume $V$ to be represented as a linked list of the endpoints of runs of 1s in~$V$, arranged in the sorted order. There is an $\cO(R)$-time algorithm that computes a set $U \subseteq [1\dd N]$ satisfying each of the following conditions:
\begin{enumerate}
\item $|U| \ge N/2-M$,
\item $U$ can be represented as a union of at most $R+1$ disjoint intervals,
\item for each $i \in U$ and radius $r \in [1 \dd N]$, $\norm{B_V(i,r)} \leq 8r \cdot M/N$. 
\end{enumerate}  
\end{lemma}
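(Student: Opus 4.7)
The plan is to introduce the potential $\psi\colon \{0,\ldots,N\} \to \mathbb{R}$ by $\psi(j) := \norm{V[1\dd j]} - (4M/N)\,j$, so that $\psi(j)-\psi(j-1) = V[j] - 4M/N$, with $\psi(0) = 0$ and $\psi(N) = -3M$. I will take $U$ to consist of those positions $i$ with $V[i]=0$ for which $\psi(i)$ attains the maximum of $\psi$ on $[i\dd N]$ (call $i$ right-good) and $\psi(i-1)$ attains the minimum of $\psi$ on $[0\dd i-1]$ (call $i$ left-good). The right-good condition rewrites as $\norm{V[i+1\dd i+r]} \le (4M/N)r$ for every $r \ge 1$, and left-good symmetrically bounds $\norm{V[i-r\dd i-1]}$; summing and using $V[i]=0$ gives $\norm{B_V(i,r)} \le 8rM/N$, so condition~3 is verified.

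For condition~1, I will show that the number of right-bad 0-positions is strictly less than $N/4$, and symmetrically for left-bad 0-positions; this yields $|U| > (N-M) - N/2 = N/2 - M$. For each right-bad $i$, let $j(i)$ be the smallest index $> i$ with $\psi(j(i)) > \psi(i)$, and set $I_i := [i+1\dd j(i)]$; the density of 1s in $I_i$ then strictly exceeds $4M/N$. The minimality of $j(i)$ will imply that the family $\{I_i\}$ is laminar, i.e., any two members are nested or disjoint. Taking the inclusion-maximal members yields a disjoint subfamily such that every right-bad 0-position $i$ lies in $[i'\dd j(i')-1]$ for some maximal $i' \le i$. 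Since each maximal $I_{i'}$ contains more than $(4M/N)|I_{i'}|$ of the at most $M$ ones of $V$, I will obtain $\sum_{i'}(j(i')-i') < N/4$, giving the desired bound.

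For condition~2 and the running time, within each of the at most $R+1$ maximal solid ``gaps'' between consecutive runs of 1s (plus before the first and after the last run), $\psi$ varies linearly. The right-good condition then reduces to a linear function of $i$ exceeding a constant threshold (the suffix maximum of $\psi$ over all subsequent runs), which cuts out a prefix of the gap; the left-good condition similarly cuts out a suffix. Hence the good positions within each gap form a single interval, and $U$ is a union of at most $R+1$ intervals. The algorithm is a single left-to-right sweep over the linked list that evaluates $\psi$ at the two endpoints of each run, builds in two passes the suffix maxima and prefix minima of these values, and for each gap extracts the good subinterval in $O(1)$ arithmetic; this runs in $O(R)$ time. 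The main obstacle is establishing the laminar property of $\{I_i\}$ and extracting from it the combinatorial bound on right-bad 0-positions; once this is in place, the remaining pieces assemble routinely.
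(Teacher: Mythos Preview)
Your proof is correct and takes a genuinely different route from the paper's. The paper uses an explicit greedy marking scheme: in a left-to-right pass, each $1$ marks the $N/(4M)$ leftmost still-unmarked $0$s to its right, and a symmetric right-to-left pass is performed; the set $U$ consists of the never-marked $0$s. Condition~3 is then argued directly from the marking rule (if $i$ is unmarked, every $1$ to its left must have been ``absorbed'' by $N/(4M)$ zeros before reaching $i$), and condition~2 follows since each run of $0$s retains a single contiguous block of unmarked positions. Your approach replaces this operational construction with the potential $\psi(j)=\norm{V[1\dd j]}-(4M/N)j$ and defines $U$ via running maxima/minima of $\psi$. This is more analytic: condition~3 becomes an immediate consequence of the definitions, and the size bound is obtained through the laminar-cover argument rather than by counting marks. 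The paper's version is arguably more transparent algorithmically (the $\cO(R)$ implementation is a one-line observation), whereas your version makes the \emph{reason} condition~3 holds more conceptually clear and avoids tracking the marking state. Both constructions yield sets $U$ satisfying the lemma, but they are not the same set in general. Your laminarity argument and the $O(R)$ computation via suffix-max/prefix-min of $\psi$ at run endpoints are correct as stated.
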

\begin{proof}
First, we scan $V$ from left to right. Each time we see a 1, we mark the $N/(4M)$ leftmost 0s that are to the right of the considered 1 and have not been already marked. If we mark the last 0 in $V$, we terminate the scan. Overall, we mark at most $N/4$~0s. Next, we perform the symmetric procedure in a right-to-left scan of $V$. 

Let $U$ be the set of positions of unmarked 0s after these two marking steps. We show that $U$ satisfies the condition of the lemma. First, we have $|U| \ge N-M-N/2 \ge N/2-M$. Secondly, by construction, every run of 0s contains at most one interval of unmarked positions and hence $U$ can be represented as union of at most $R+1$ disjoint intervals. Finally, fix $i\in U$ and $r \in [1\dd N]$. Let $B_1 = V[\max\{i-r,1\} \dd i-1]$ and $B_2 = V[i+1 \dd \min\{i+r,N\}]$. Since $V[i]$ was not marked in the left-to-right scan, there are at least $\norm{B_1} \cdot N/(4M)$ 0s in $B_1$.
Symmetrically, since $V[i]$ was not marked in the right-to-left scan of $B$, there are at least $\norm{B_2} \cdot N/(4M)$ 0s in $B_2$. On the other hand, the number of 0s in $B_1$ (resp. $B_2$) is bounded by $r$, and therefore
$\norm{B_V[i,r]} =  \norm{B_1}+\norm{B_2} \le 4r \cdot M/N + 4r \cdot M/N \leq 8r \cdot 4M / N$.

It remains to show that the algorithm can be implemented efficiently. In addition to the linked list $L_1$ representing the runs of 1s in $V$, the algorithm maintains a linked list $L_2$ of the intervals of marked 0s, sorted by their left endpoints. The algorithm simulates marking the~0s for all 1s in the current run at once, taking $\cO(R)$ time in total. Having computed~$L_2$, the algorithm scans $L_1$ and $L_2$ in parallel to extract the set $U$, which takes $\cO(R)$ time as well, thus completing the proof.
\end{proof}

\begin{corollary}\label{cor:sparsifiers}
Consider a string $P \in \Sigma_{\wild}^m$ containing $\dontcare$ wildcards arranged in $\gaps$ groups. If $\dontcare < m/4$, then there is a fragment $S$ of $P$ of length $L = \lfloor m/(8 \gaps) \rfloor$ whose every position is a sparsifier, and one can compute $S$ in $\cO(\gaps)$ time.
\end{corollary}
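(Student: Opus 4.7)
The plan is to directly reduce to Lemma~\ref{lem:toy_problem_better}. Let $V \in \{0,1\}^m$ be the indicator vector of wildcards in $P$, that is, $V[i]=1$ iff $P[i]=\wild$. Then $N=m$, $M=D$, and the number of runs of 1s in $V$ equals $G$, the number of wildcard groups. The preliminaries already guarantee that we can maintain the required linked list of endpoints of the runs of 1s in $V$, so the lemma's preprocessing assumption is satisfied.

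Applying Lemma~\ref{lem:toy_problem_better} to $V$ in $\cO(G)$ time, we obtain a set $U \subseteq [1\dd m]$ such that (i)~$|U| \geq m/2 - D > m/4$ since $D < m/4$, (ii)~$U$ is a union of at most $G+1$ disjoint intervals, and (iii)~for every $i \in U$ and every radius $r$, the ball $B_V(i,r)$ contains at most $8r\cdot D/m$ ones. Conclusion (iii) translates exactly to: the ball $B_P(i,r)$ contains at most $8r \cdot D/m$ wildcards. Moreover, positions of $U$ lie in runs of 0s of $V$, so $P[i]$ is a solid character for every $i \in U$. Hence, every position of $U$ is a sparsifier of $P$.

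It remains to extract a long enough single interval from $U$. By pigeonhole, the at most $G+1$ disjoint intervals whose union is $U$ have total length at least $m/4$, so at least one of them, call it $I$, has length at least $m/(4(G+1)) \geq m/(8G) \geq L$, where the second inequality uses $G \geq 1$. We take $S$ to be any length-$L$ fragment of $P$ contained in $I$; by construction, every position of $S$ belongs to $U$ and is thus a sparsifier. Scanning the linked list returned by the lemma to find $I$ costs $\cO(G)$, giving a total runtime of $\cO(G)$.

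There is no real obstacle here: the only thing to verify carefully is the pigeonhole step and the inequality $m/(4(G+1)) \geq m/(8G)$, which holds precisely because $G\geq 1$, together with the sanity check that $|U|>m/4$ follows from the hypothesis $D<m/4$. All other ingredients are immediate from Lemma~\ref{lem:toy_problem_better}.
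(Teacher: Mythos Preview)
Your proof is correct and follows essentially the same approach as the paper: apply Lemma~\ref{lem:toy_problem_better} to the wildcard indicator vector, use $D<m/4$ to get $|U|>m/4$, and pigeonhole over the at most $G+1$ intervals to find one of length at least $m/(4(G+1))\geq L$. The paper's proof is slightly terser but identical in substance; your explicit verification that $m/(4(G+1))\geq m/(8G)$ requires $G\geq 1$ is a welcome detail.
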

\begin{proof}
An application of \cref{lem:toy_problem_better} to $P$ with wildcards treated as 1s and solid characters treated as 0s
returns $m/2 - \dontcare > m/4$ sparsifiers in the form of $\gaps+1$ intervals in $\cO(\gaps)$ time.
Thus, there is a fragment of size at least $m/(4(\gaps + 1)) \geq L$ whose every position is a sparsifier;
the $L$-length prefix of this fragment satisfies the condition of the claim. 
\end{proof}

\section{Exact Pattern Matching in the \pillar Model}\label{sec:exact-pm}

In this section, we consider a pattern $P$ of length $m$ with $\dontcare \geq 1$ wildcards arranged in $\gaps$ groups and a solid text $T$ of length $n$ such that $n \leq 3m/2$.
We then use the ``standard trick'' presented in the introduction to lift the result
to texts of arbitrary length.
We prove a structural result for the exact occurrences of $P$ in $T$ and show how to compute them efficiently when the product of $\dontcare$ and $\gaps$ is small.
In particular, we compute them in linear time when $\dontcare \gaps = \cO(m)$, thus improving by a logarithmic factor over the state-of-the-art $\cO(n \log m)$-time algorithms in this case.

\begin{definition}[Misperiods]
Consider a string $V$ over alphabet $\Sigma_{\wild}$.
We say that a position~$x$ is a misperiod with respect to a solid fragment $V[i \dd j]$
when $V[x]$ does not match $V[y]$, where~$y$ is any position in $[i \dd j]$ such that $\per(V[i \dd j])$ divides $|y-x|$.
Additionally, we consider positions~$0$ and $|V|+1$ as misperiods.
We denote the set of the at most~$k$ rightmost misperiods smaller than $i$ with respect to $V[i \dd j]$ by $\lmisp(V,i,j,k)$.
Similarly, we denote the set of the at most $k$ leftmost misperiods larger than $j$ with respect to $V[i \dd j]$ by $\rmisp(V,i,j,k)$.
\end{definition}

\begin{example}
Consider string $V=\texttt{cc\wild bd\underline{\color{red}{abcabcab}}cab}$.
The misperiods with respect to the underlined and highlighted fragment $V[6 \dd 13]$, which has period $3$, are positions $0$, $1$, $5$, and $|V|+1 = 17$.
We have $\lmisp(V,6,13,2) = \{1, 5\}$ and $\rmisp(V,6,13,2) = \{17\}$.
\end{example}

The next lemma states that the sets $\lmisp(V,i,j,k)$ and $\rmisp(V,i,j,k)$ can be computed efficiently in an incremental fashion.
Its proof, which can be found in \cref{sec:mispers_comp}, uses the kangaroo method and closely follows~\cite{DBLP:conf/soda/BringmannWK19,DBLP:journals/jcss/Charalampopoulos21}.

\begin{restatable}{lemma}{computeMispers}\label{lem:compute_mispers}
Consider a string $V$ over an alphabet $\Sigma_{\wild}$ and a solid periodic fragment $V[i \dd j]$ of~$V$.
The elements of either of $\lmisp(V,i,j,|V|)$ and $\rmisp(V,i,j,|V|)$ can be computed in the increasing order with respect to their distance from position $i$ so that:
\begin{itemize}
\item the first misperiod $x$ can be computed in $\cO(1+\gaps_0)$ time in the \pillar model, where $\gaps_0$ denotes the number of groups of wildcards between positions $x$ and $i$;
\item given the $t$-th misperiod $x \not\in \{0, |V|+1\}$, the $(t+1)$-th misperiod can be computed in $\cO(1+\gaps_t)$ time in the \pillar model, where $\gaps_t$ denotes the number of groups of wildcards between said misperiods.
\end{itemize}
\end{restatable}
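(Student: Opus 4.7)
I will focus on computing $\rmisp(V,i,j,|V|)$ incrementally; the $\lmisp$ case is handled symmetrically via $\lcp^R$. Let $\rho := \per(V[i\dd j])$ and $Q := V[i\dd i+\rho-1]$, which is primitive by \cref{fact:period_primitive}. Periodicity of $V[i\dd j]$ gives $j-i+1 \geq 2\rho$, so for any position $p \geq i$, writing $s := (p-i) \bmod \rho$, the rotation $Q_p := \rot^{s}(Q)$ coincides with the fragment $V[i+s\dd i+s+\rho-1] \subseteq V[i\dd j]$, and is therefore accessible in $\cO(1)$ time via the $\mathsf{Extract}$ primitive. A position $p > j$ is a misperiod precisely when $V[p]$ is a solid character that differs from $Q_p[1]$.

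\textbf{Kangaroo iteration.} I treat the wildcards of $V$ as a sentinel character $\# \notin \Sigma$, as per the convention of \cref{subsec:pillar}; this lets \pillar operations run on $V$ as if it were solid while still letting me detect wildcard-induced ``matches''. Starting from $p \gets j+1$, I repeatedly invoke \cref{fact:infty} to obtain $L := \lcp(Q_p^\infty, V[p\dd |V|])$ in $\cO(1)$ time in the \pillar model, and inspect $q := p + L$. If $q > |V|$, I output $|V|+1$ as the next misperiod and terminate. If $V[q] = \#$, the detected disagreement is a wildcard rather than a genuine misperiod; I advance $p$ to the position immediately after the maximal wildcard group at $q$, using the precomputed linked list of wildcard-group endpoints, and iterate. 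Otherwise $V[q]$ is a solid character that truly differs from the expected periodic character, so I report $q$ as the next misperiod and, for the subsequent request, resume the scan from $p \gets q+1$.

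\textbf{Complexity and main subtlety.} Each iteration performs one \pillar $\lcp$ call, one $\mathsf{Extract}$ to refresh the rotation $Q_p$, and $\cO(1)$ linked-list work. Between two consecutive reported misperiods, $p$ traverses exactly the intervening wildcard groups, each consumed in a single iteration, plus one terminal iteration that lands on the new misperiod; this yields the claimed $\cO(1+\gaps_t)$ time, and the bound for the first misperiod is identical. The $\lmisp$ direction is the mirror image: by reversing the roles one applies $\lcp^R$ between $V[1\dd p]$ and a fragment of $V[i\dd j]$ aligned to witness the expected leftward periodic extension, combined with the analogous wildcard-group skip. The main subtlety I foresee is ensuring that the rotation $Q_p$ fed into each $\lcp$ call is correctly realigned each time $p$ jumps across a wildcard group; crucially, because $V[i\dd j]$ is solid and contains two full copies of $Q$, the fresh rotation is always available via a single $\mathsf{Extract}$ in $\cO(1)$ time, and this is what keeps the per-iteration cost $\cO(1)$ rather than proportional to the length of a wildcard group.
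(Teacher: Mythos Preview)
Your proposal is correct and follows essentially the same approach as the paper's proof: both use the kangaroo method with \cref{fact:infty} to advance through maximal solid stretches in $\cO(1)$ \pillar time, then skip each intervening wildcard group in $\cO(1)$ time, exploiting that any rotation of $Q$ is available as a fragment of the periodic $V[i\dd j]$. The only cosmetic difference is that the paper also invokes \cref{fact:infty} (with wildcards replaced by a sentinel) to determine the extent of each wildcard group, whereas you use the precomputed linked list of group endpoints; both yield the same $\cO(1+\gaps_t)$ per-misperiod cost.
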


A direct application of the above lemma yields the following fact.

\begin{corollary}\label{fact:compute_mispers}
For any integer $k$, the sets $\lmisp(V,i,j,k)$ and $\rmisp(V,i,j,k)$ can be computed in $\cO(k+\gaps)$ time in the \pillar model.
\end{corollary}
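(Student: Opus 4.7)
The plan is to apply \cref{lem:compute_mispers} iteratively: I would first invoke it to obtain the misperiod closest to $i$ (on the appropriate side), and then repeatedly apply its second bullet to retrieve the next misperiod, stopping either after collecting $k$ of them or upon reaching one of the boundary misperiods $0$ or $|V|+1$. This produces $\lmisp(V,i,j,k)$ and, symmetrically, $\rmisp(V,i,j,k)$.

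The cost analysis is then a straightforward telescoping argument. By \cref{lem:compute_mispers}, obtaining the $t$-th misperiod (starting from either $i$ for $t=1$ or the $(t-1)$-th misperiod for $t\geq 2$) takes $\cO(1+\gaps_t)$ time in the \pillar model, where $\gaps_t$ counts the groups of wildcards lying strictly between the previous reference position and the newly reported misperiod. Summing over the at most $k$ iterations, the constant terms contribute $\cO(k)$ in total.

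The main thing to verify is that $\sum_{t} \gaps_t \leq \gaps$. The point is that every misperiod is necessarily a position with a solid character: indeed, by definition, $V[x]$ must fail to match some solid character $V[y]$ in the solid fragment $V[i\dd j]$, and a wildcard matches every character. Consequently, no wildcard group straddles a misperiod, so the intervals of positions considered across the successive iterations are pairwise disjoint (they are the gaps delimited by consecutive misperiods and by $i$). Therefore each group of wildcards in $V$ contributes to at most one $\gaps_t$, giving $\sum_t \gaps_t \leq \gaps$ and a total running time of $\cO(k+\gaps)$. The symmetric argument handles $\rmisp(V,i,j,k)$, completing the proof.
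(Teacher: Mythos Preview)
Your proposal is correct and is precisely the ``direct application'' of \cref{lem:compute_mispers} the paper alludes to: iterate the lemma at most $k$ times, and telescope the $\cO(1+\gaps_t)$ costs using the fact that non-boundary misperiods are solid positions, so the intervals between consecutive misperiods are disjoint and each wildcard group is counted in at most one $\gaps_t$.
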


\begin{definition}
For two strings $S$ and $Q$, let $\Misp(S, Q)$ denote the set of positions of mismatches between $S$ and $Q^\infty$.
\end{definition}

\begin{definition}[$S$-runs]
A fragment of a solid string $V$ spanned by a set of occurrences of a solid string $S$ in $V$
whose starting positions form an inclusion-maximal arithmetic progression with difference $\per(S)$ is called an $S$-run.
\end{definition}

\begin{example}
Let $V = \texttt{cab\underline{\color{red}{abcabcabcab}}c}$ and $S = \texttt{abcab}$.
The underlined and highlighted fragment $V[4 \dd 14]$ is the sole $S$-run in $V$; it is spanned by the occurrences of $S$
at positions $4$, $7$, and $10$.
\end{example}

The following fact characterises the overlaps of $S$-runs; its proof is deferred to \cref{sec:mispers_comp}.

\begin{restatable}{fact}{sruns}\label{fact:sruns}
Two $S$-runs can overlap by no more than $\per(S)-1$ positions.
\end{restatable}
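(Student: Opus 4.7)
The plan is to argue by contradiction: assume two distinct $S$-runs $R_1 = V[a \dd e_1]$ and $R_2 = V[b \dd e_2]$ with $a \le b$ overlap on at least $p := \per(S)$ consecutive positions, and show that this contradicts the inclusion-maximality of their defining arithmetic progressions. Distinctness rules out $a = b$ (since then the inclusion-maximality of both progressions starting at $a$ with difference $p$ would force them to coincide), so WLOG $a < b$. First, I would show that the union $V[a \dd \max(e_1, e_2)]$ has period $p$: both $R_1$ and $R_2$ individually have period $p$ (their characters are determined by occurrences of $S$ spaced by $p$), and the overlap assumption, which WLOG reads $e_1 - b + 1 \ge p$ when $e_1 \le e_2$, ensures that the witnesses $[a, e_1 - p]$ and $[b, e_2 - p]$ of these two periodicities together cover $[a, e_2 - p]$, yielding period $p$ for $V[a \dd e_2]$.

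Next, I would analyse the occurrences of $S$ inside this periodic fragment. Since $V[a \dd a + |S| - 1] = S$, periodicity immediately gives that every position $a + jp$ with $a + jp + |S| - 1 \le e_2$ is an occurrence of $S$. Conversely, any occurrence of $S$ inside $V[a \dd e_2]$ corresponds to aligning $S$ with a rotation of $S[1 \dd p]^\infty$, and the primitivity of $S[1 \dd p]$ (\cref{fact:period_primitive}) forces that rotation to be trivial; hence $b = a + jp$ for some $j \ge 1$. Then $b - p = a + (j-1)p$ still lies inside $V[a \dd e_2]$ and is also an occurrence of $S$, contradicting the inclusion-maximality of $R_2$'s defining progression (whose leftmost term is $b$, so $b - p$ must not be an occurrence). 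The main obstacle is this second step: establishing via primitivity that no ``shifted'' occurrence of $S$ can appear inside the combined periodic region is what lets us place both defining progressions on a single arithmetic progression with common difference $p$; without it, one could only control overlaps of length at least $|S|$ rather than the desired $\per(S)$.
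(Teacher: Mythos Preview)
Your argument is correct and uses the same ingredients as the paper's proof: the period-$p$ structure of the runs, the primitivity of $S[1\dd \per(S)]$ via \cref{fact:period_primitive}, and inclusion-maximality. The paper orders them in reverse---it first observes that maximality forces $b-a\not\equiv 0\pmod{\per(S)}$ and then notes that $R_2[1\dd \per(S)]=S[1\dd \per(S)]$ lies inside $R_1$ and hence equals a non-trivial rotation of itself---so it avoids proving that the full union has period $p$, but the two arguments are essentially the same.
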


We need a final ingredient before we prove the main theorem of this section.
We state a more general variant of the statement than we need here that also accounts for $k$ mismatches, as this will come handy in the subsequent section.
For the purposes of this section one can think of $k$ as $0$.
The following corollary follows from \cite[Lemma 4.6]{unified} via the reduction to computing $(D+k)$-mismatch occurrences of $P_\#$ in~$T$.

\begin{corollary}[{of~\cite[Lemma 4.6]{unified}}]\label{fact:relevant-fragment}
  Let $S$ be a string of length $m$ with $\dontcare$ wildcards, let $T$ be a solid string such that $|T|\le 3|S|/2$,
  let $k \in [0 \dd m]$ and $d \ge 2(\dontcare + k)$ be a positive integer,
  and let $Q$ be a primitive solid string such that $|Q| \le m/8d$
  and $\Ham(S, Q^\infty) \le d$.
  Then, we can compute, in $\cO(d)$ time in the \pillar model, a fragment $T' = T[\ell \dd r]$ of $T$ such that 
  \begin{itemize}
    \item $\Ham(T', Q^\infty) \le 3d$, and
    \item all elements of $\occhk(S,T') = \{p-\ell : p \in \occhk(S,T)\}$ are equivalent to $0 \pmod{|Q|}$.
  \end{itemize}
\end{corollary}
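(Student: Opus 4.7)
The plan is to pass to the solid setting via the wildcard-replacement trick from the introduction and then invoke the solid-string version, [unified, Lemma 4.6]. Let $S_\#$ be the length-$m$ solid string obtained from $S$ by replacing every wildcard with a fresh symbol $\# \notin \Sigma$. Since $\#$ does not appear in the solid text $T$, each wildcard of $S$ contributes exactly one mismatch between $S_\#$ and any aligned length-$m$ fragment of $T$; hence, for every position $p$ of $T$,
\[
  p \in \occhk(S, T) \iff p \in \mathrm{Occ}_{\dontcare + k}(S_\#, T).
\]
The same accounting against the solid string $Q^\infty$ yields $\Ham(S_\#, Q^\infty) \le \Ham(S, Q^\infty) + \dontcare \le d + \dontcare$.

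I would then invoke [unified, Lemma 4.6] on $(S_\#, T)$ with search threshold $\dontcare + k$, playing the role of the lemma's mismatch parameter, and a structural parameter proportional to $d$. The hypothesis $d \ge 2(\dontcare + k)$ supplies the slack the solid lemma requires between its structural parameter and the search threshold; it also implies $\dontcare \le d/2$, so $\Ham(S_\#, Q^\infty) \le \tfrac{3}{2}d$. Combined with the given $|Q| \le m/(8d)$ and $|T| \le 3m/2$, these bounds meet every precondition of the solid lemma (the constants in that statement have been set precisely so as to absorb the factor $\tfrac{3}{2}$). The lemma then outputs, in $\cO(d)$ \pillar operations, a fragment $T' = T[\ell \dd r]$ with $\Ham(T', Q^\infty) \le 3d$ inside which every $(\dontcare+k)$-mismatch occurrence $p$ of $S_\#$ in $T$ satisfies $p - \ell \equiv 0 \pmod{|Q|}$. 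Translating back via the equivalence above delivers the required statement for $\occhk(S, T)$.

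The main step to verify is thus purely parameter bookkeeping: the Hamming distance between $S_\#$ and $Q^\infty$ is $\tfrac{3}{2}d$ rather than $d$, so one has to check that the constants in [unified, Lemma 4.6] are generous enough to absorb this factor-$\tfrac{3}{2}$ inflation while still producing the stated bound $\Ham(T', Q^\infty) \le 3d$ and respecting $|Q| \le m/(8d)$. Everything else---the reduction via $S_\#$, the congruence translation of the occurrence set, and the $\cO(d)$ running time in the \pillar model---transfers verbatim from the solid lemma.
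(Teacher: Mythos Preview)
Your proposal is correct and follows exactly the route the paper intends: the paper states only that the corollary ``follows from [unified, Lemma 4.6] via the reduction to computing $(\dontcare+k)$-mismatch occurrences of $P_\#$ in $T$,'' and you have spelled out precisely this reduction together with the parameter bookkeeping it entails.
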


\begin{theorem}\label{thm:exact}
  Consider a pattern $P$ of length $m$ with $\dontcare$ wildcards arranged in $\gaps$ groups and a solid text $T$ of length $n \leq 3m/2$.
  Either $P$ has $\cO(\dontcare)$ occurrences in~$T$ or~$P$ has a deterministic period $q = \cO(m/\dontcare)$.
  A representation of the occurrences of $P$ in $T$ can be computed in $\cO(\dontcare \gaps \log \log \dontcare)$ time plus the time required to perform $\cO(\dontcare \gaps )$ \pillar operations.
  In the former case the occurrences are returned explicitly, while in the latter case they are returned as $\cO(\dontcare \gaps)$ arithmetic progressions with common difference $q$.
\end{theorem}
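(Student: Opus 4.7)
The plan is to first dispatch the easy case $\dontcare \ge m/4$ by direct verification: I would test each of the $n \le 3m/2$ candidate starting positions in $T$ using \cref{fact:kangaroo}, at a total cost of $\cO(n\gaps) = \cO(m\gaps) \subseteq \cO(\dontcare\gaps)$ \pillar operations. When $\dontcare < m/4$, I would invoke \cref{cor:sparsifiers} to obtain, in $\cO(\gaps)$ time, a fragment $S$ of $P$ of length $L = \lfloor m/(8\gaps)\rfloor$ whose every position is a sparsifier. For the structural dichotomy, I plan to use the folklore argument that, since $T$ is solid, two occurrences of $P$ at distance $d$ force every solid character $P[s]$ with $s \in [1, m-d]$ to equal both $T[r_1+s-1]$ and $T[r_1+s+d-1]$, which by chaining through consecutive positions within each congruence class modulo $d$ makes $d$ a deterministic period of $P$. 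If $P$ has more than $c\dontcare$ occurrences in $T$, their consecutive gaps sum to at most $n - m \le m/2$, so the smallest gap is $\cO(m/\dontcare)$, yielding a deterministic period $q = \cO(m/\dontcare)$ of $P$.

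\textbf{Locating $S$-runs and $S'$.}
Next I would compute the $S$-runs of $T$ (with period $q := \per(S) \le L/2$) by tiling $T$ with $\cO(\gaps)$ overlapping fragments of length below $2L$, invoking $\mathsf{IPM}$ on each, and merging the resulting progressions; \cref{fact:sruns} and the fact that consecutive $S$-runs overlap by less than $q$ cap their count at $\cO(\gaps)$. Using \cref{lem:compute_mispers} in $P$ starting just before and just after $S$, I would extend $S$ to both sides through positions that are either wildcards or consistent with the specific rotation of $Q^\infty$ fixed by the solid characters of $S$, obtaining the maximal fragment $S' = P[i \dd j]$ of $P$ containing $S$ that matches a solid string of period $q$, in $\cO(\gaps)$ \pillar operations.

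\textbf{Case A: $S' = P$.}
Here $P$ itself has deterministic period $q$, and every occurrence of $P$ in $T$ sits inside a maximal $q$-periodic region of $T$ of length at least $m$; extending each $S$-run via \cref{lem:compute_mispers} would recover the $\cO(\gaps)$ such regions. Since $\dontcare < m$, $P$ contains at least one solid character, which fixes a unique valid rotation of $Q^\infty$, so the occurrences of $P$ inside each region form a single arithmetic progression with difference $q$, determined by one call to \cref{fact:kangaroo} for $\cO(\gaps)$ \pillar operations per region. The output is $\cO(\gaps) \subseteq \cO(\dontcare\gaps)$ arithmetic progressions with common difference $q$, at a total cost of $\cO(\gaps^2) \subseteq \cO(\dontcare\gaps)$ \pillar operations.

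\textbf{Case B: $S' \subsetneq P$.}
The main obstacle will be this case, which I would tackle WLOG for $i > 1$ (the symmetric case $j < m$ is analogous). By maximality of $S'$, $P[i-1]$ must be a solid character inconsistent with the $q$-periodic rotation of $S'$, so any occurrence of $P$ at position $r$ must align $P[i-1]$ with a misperiod $T[r+i-2]$ of the $q$-periodic extension containing $T[r+i-1 \dd r+j-1]$. My plan is to sweep the $S$-runs in order, and for each that has not been absorbed by a previous extension, to enumerate via \cref{lem:compute_mispers} the misperiods immediately to the left of the run, continuing only while the density of enumerated misperiods relative to the extension length stays below $20\dontcare/m$, and marking any later $S$-run reached by the extension as absorbed. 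Each enumerated misperiod $x$ would spawn a candidate alignment $r = x - i + 2$, verified in $\cO(\gaps)$ \pillar operations via \cref{fact:kangaroo}. The crux is the sparsifier property of $S \subseteq S'$, which guarantees that every true alignment keeps the misperiod density below $8\dontcare/m$ in every ball around the image of $S$, so the density cutoff cannot truncate a valid alignment. Since the processed (run plus extension) intervals are essentially pairwise disjoint with total length $\cO(n) = \cO(m)$, the total number of enumerated misperiods, and hence of candidate alignments, is $\cO(\dontcare)$, contributing $\cO(\dontcare\gaps)$ \pillar operations; the $\log\log\dontcare$ factor absorbs word-RAM bookkeeping along the sparsifier-bounded sweeps over misperiods and $S$-runs.
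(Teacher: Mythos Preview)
Your proposal has two genuine gaps, both rooted in the same oversight: wildcards in $P$ impose \emph{no} constraint on the aligned positions of $T$.

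\textbf{The structural dichotomy.} Your ``folklore'' chaining argument is false for patterns with wildcards. Take $P = a\,\wild\,b$ and $T = aabbb$: $P$ occurs at positions $1$ and $2$ (so $d=1$), yet $1$ is not a deterministic period of $P$, since no constant string matches $a\,\wild\,b$. The chain breaks precisely at the wildcard: from $P[1]=a=T[2]$ and $P[3]=b=T[3]$ you cannot infer $T[2]=T[3]$, because $P[2]=\wild$ links neither side. The paper does not attempt a direct argument; the dichotomy is a by-product of the algorithmic case analysis (few occurrences of $S$, or Subcase~(II)(b), give $\cO(\dontcare)$ occurrences; Subcase~(II)(a) gives the deterministic period).

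\textbf{Case A.} The same issue invalidates your claim that ``every occurrence of $P$ in $T$ sits inside a maximal $q$-periodic region of $T$ of length at least $m$''. With $P=a^{10}\,\wild\,a^{10}$ (so $q=1$) and $T=a^{10}\,b\,a^{20}$, the pattern occurs at position $1$, but $T[1\dd 21]=a^{10}ba^{10}$ is not $1$-periodic: the wildcard lets the text break periodicity without destroying the match. Hence occurrences need not form one progression per periodic region, and your $\cO(\gaps)$ bound fails. The paper instead applies \cref{fact:relevant-fragment} to obtain a fragment $T'$ with $\Ham(T',Q^\infty)=\cO(\dontcare)$ and then runs a sliding window tracking $\Hidden(i)$, the number of misperiods of $T'$ currently aligned with wildcards of $P$; this step function changes $\cO(\dontcare\gaps)$ times, which is exactly where the $\cO(\dontcare\gaps)$ progressions in the statement come from.

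\textbf{Smaller points.} You assert $\per(S)\le L/2$ without justification; the paper only derives $\per(S)\le m/(256\dontcare)$ in Case~(II), after observing that $S$ has at least $384\dontcare$ occurrences, and handles the ``few occurrences of $S$'' case separately by direct verification. In Case~B your claim that the processed intervals are ``essentially pairwise disjoint'' skips the synchronisation argument (the paper's second claim): extensions of distinct $S$-runs can overlap, and one must discard $S$-runs that are both covered by and congruent modulo $q$ to an already-processed extension to obtain the $\cO(m)$ total-length bound that yields $\cO(\dontcare)$ misperiods.
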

\begin{proof}
First, observe that if $\dontcare = \Theta(m)$ the statement holds trivially as there can only be $\cO(m)$ occurrences and we can compute them using $\cO(m\gaps)$ \pillar operations, e.g., by applying \cref{fact:kangaroo} for each position of the text.
We thus henceforth assume that $\dontcare < m/4$.

We apply \cref{cor:sparsifiers} to $P$, thus obtaining, in $\cO(G)$ time, a fragment $S = P[x \dd y]$ of length $m/(8 \gaps)$ whose every position is a sparsifier. (As an implication, $S$ is a solid fragment.) 
Then, we compute all occurrences of $S$ in $T$ in $\cO(\gaps)$ time in the \pillar model, represented as $\cO(\gaps)$ arithmetic progressions with common difference $\per(S)$ (see \cref{cor:occurrences}).

\subparagraph{Case (\rom{1}): $S$ has less than $384 \dontcare$ occurrences in $T$.} In this case, we try to extend each such occurrence to an occurrence of $P$ in $T$ using \cref{fact:kangaroo} in $\cO(\gaps)$ time in the \pillar model. This takes $\cO(\dontcare \gaps)$ time in total in the \pillar model.

\subparagraph{Case (\rom{2}): $S$ has at least $384 \dontcare$ occurrences in $T$.}
In this case, we have two occurrences of $S$ in~$T$ starting within $(3m/2)/(384\dontcare)$ positions of each other, and hence $\per(S)\leq m/(256\dontcare)$.
Let $Q = S[1\dd \per(S)]$. By definition of $\per(S)$, $S$ is a prefix of $Q^\infty$ and by~\cref{fact:period_primitive}, $Q$ is primitive. 
Using \cref{fact:compute_mispers}, we compute the sets $\lmisp(P,x,y,1)$ and $\rmisp(P,x,y,1)$ in $\cO(\gaps)$ time in the \pillar model.
In other words, we compute the maximal fragment $V$ of $P$ that contains $S$ and matches exactly some substring of $Q^\infty$.

\subparagraph{Subcase (a): $V=P$.} We conclude that $q := |Q| \leq m/(256\dontcare)$ is a deterministic period of~$P$.
We replace $Q$ by its (possibly trivial) rotation $Q_0$ such that~$P$ is equal to a prefix of $Q_0^\infty$ 
and then apply \cref{fact:relevant-fragment} to compute, in $\cO(D)$ time in the \pillar model, a fragment $T'$ of~$T$ that contains the same number of occurrences of $P$ as $T$, is
at Hamming distance $\cO(D)$ from a prefix of $Q^\infty$, and only has occurrences of $P$ at positions equivalent to $1 \pmod{q}$.

\newcommand{\hidden}{\textsf{Hidden}}

It now suffices to show how to compute the occurrences of $P$ in $T'$.
As in previous works~\cite{DBLP:conf/soda/BringmannWK19,unified}, we take a sliding window approach. Let $W$ be the set of positions in $P$ where we have a wildcard.
For $i \in [1 \dd |T'|-m+1]$, define $\hidden(i)$ to be the size of the intersection of $\Misp(T',Q) \cap [i \dd i+m)$ with $i + W$.\footnote{For a set $Y$ and an integer $z$, by $z+Y$ we denote the set $\{z+y : y \in Y\}$.}
Intuitively, this is the number of mismatches between $T'[i \dd i+m)$ and $Q^\infty$ that are aligned with a wildcard in $P$ (and are hence ``hidden'')
when we align $P$ with $T'[i \dd i+m)$.
$\hidden(\cdot)$ is a step function whose value changes $\cO(\dontcare \gaps)$ times as we increase $i$,
since each mismatch enters or exits the window $[i \dd i+m)$ at most once and whether it is hidden or not changes at most $2\gaps$ times.
We compute $\hidden(1)$ and store the positions where the function changes (as well as by how much)
as events in the increasing order; this sorting takes $\cO(DG\log\log D)$ time~\cite{Han2004}.

  For a position $i \leq |T'|-m+1$ with $i \equiv 1 \pmod{q}$, we have
  \[d_i := \Ham(T'[i\dd i+m),P) = \Misp(T'[i\dd i+m), Q) - \hidden(i).\]
  We maintain this value as we, intuitively, slide $P$ along $T$, $q$ positions at a time.
  If there are no events in $(i \dd i+q] \subseteq [1\dd |T'|]$, then $d_i = d_{i+q}$.
  This allows us to report all occurrences of $P$ in $T$ efficiently as $\cO(\dontcare \gaps)$ arithmetic progressions with
  common difference $q$ by processing all events in a left-to-right manner in $\cO(\dontcare \gaps)$ time.

\subparagraph{Subcase (b): $V \neq P$.} 
Our goal is to show that, in this case, the occurrences of $P$ in $T$ are $\cO(\dontcare)$ and they can be computed in time $\cO(\gaps\dontcare)$.
Without loss of generality, assume that $V$ is not a prefix of $P$.
This means that $\lmisp(P,x,y,1) = \{\mu\} \neq \{0\}$.
The occurrences of $S$ in $T$ give us a collection $\mathcal{S}$ of $\cO(\gaps)$ $S$-runs in $T$, any two of which can overlap by less than $\per(S)=q$ positions due to \cref{fact:sruns}.
For each $S$-run $R$, extend $R$ to the left until either of the following two conditions is satisfied:
\begin{enumerate}[(a)]
\item the ratio of encountered misperiods to the sum of $|R|$ and the number of prepended positions exceeds $20 \dontcare/m$,
\item the beginning of $T$ has been reached.
\end{enumerate}
Denote by $E_R$ the resulting fragment of $T$
and by $\mathcal{M}_R$ the set of misperiods in it.
The following two claims are of crucial importance for the algorithm's performance. The proof of the first one can be found in \cref{sec:mispers_comp} along with an illustration.

\begin{restatable}{claim}{claimalign}\label{claim:align}
If $p+1$ is an occurrence of $P$ in $T$ that aligns $S$ with an occurrence of $S$ in an $S$-run $R=T[r \dd r']$, then $p+\mu \in \M_R$.
\end{restatable}

\begin{claimproof}
We first show that $p+\mu \in \lmisp(T,r,r',|T|)$.
We have that the solid character $P[\mu]$ is different from a character $P[\pi]$, where $\pi \in [x \dd y]$ and $\pi-\mu \equiv 0 \pmod q$.
Further, $P[\pi] = T[p+\pi]$, where $p+\pi \in [r \dd r']$, and $P[\mu] = T[p+\mu]$.
This means that $T[p+\mu] = P[\mu] \neq P[\pi] = T[p+\pi]$.
Now, since $(p + \pi) -  (p + \mu) = \pi - \mu$ is divisible by~$q$, which is the period of $T[r \dd r']$, we have $p+\mu \in \lmisp(T,r,r',|T|)$; see~\cref{fig:run_misperiods}.

\begin{figure}[htpb!]
\begin{center}
\begin{tikzpicture}[scale=0.6]

\draw[thick] (0,0) rectangle (21,1);
\node[left] at (0,0.5) {$T$};
\node[above] at (2,-0.7) {\small{$p+1$}};
\node[above] at (4,-0.7) {\small{$p+\mu$}};
\node[above] at (6,-0.7) {\small{$p+\nu$}};
\node[above] at (8,-0.65) {\small{$r$}};
\node[above] at (16,-0.7) {\small{$p+\pi$}};
\node[above] at (20,-0.65) {\small{$r'$}};

\draw[pattern=north west lines, pattern color=red] (8,0) rectangle (20,1) node[pos=0.5] {\small{$R$}};
\draw[pattern=north east lines, pattern color=green] (6,0) rectangle (8,1);
\draw[<->] (6,-1.25)--(20,-1.25) node[pos=0.5,above] {\small{$E_R$}};

\draw (8,1) to[out=90,in=90] node[midway,below] {\scriptsize{$\per(S)$}}  (11,1);
\draw (11,1) to[out=90,in=90]  (14,1);
\draw (14,1) to[out=90,in=90]  (17,1);
\draw (17,1) to[out=90,in=90] (20,1);

\draw (8,2.25) -- (14,2.25);
\draw[thick] (11,2.5) -- (17,2.5);
\draw (14,2.75) -- (20,2.75);
\node[left] at (8,2.25) {\scriptsize{$S$}};
\node[left] at (11,2.5) {\scriptsize{$S$}};
\node[left] at (14,2.75) {\scriptsize{$S$}};
 
\draw[thick] (2,-4) rectangle (18,-3);
\node[left] at (2,-3.5) {$P$}; 
\node[above] at (4,-4.7) {\small{$\mu$}};
\node[above] at (6,-4.6) {\small{$\nu$}};
\node[above] at (11,-4.65) {\small{$x$}};
\node[above] at (16,-4.65) {\small{$\pi$}};
\node[above] at (17,-4.75) {\small{$y$}};

\draw[pattern=dots, pattern color=gray!40] (11,-4) rectangle (17,-3) node[pos=0.5] {\small{$S$}};

\draw (4,-2.2) to[out=-15,in=90]  (5,-3);
\draw (5,-3) to[out=90,in=90]  (8,-3);
\draw (8,-3) to[out=90,in=90]  (11,-3);
\draw (11,-3) to[out=90,in=90] node[midway,below] {\scriptsize{$\per(S)$}} (14,-3);
\draw (14,-3) to[out=90,in=90]  (17,-3);

\draw[dashed] (2,-2.8)--(2,-0.6);
\draw[dashed] (4,-2.8)--(4,-0.6);
\draw[dashed] (6,-2.8)--(6,-0.6);
\draw[dashed] (16,-2.8)--(16,-0.6);

\draw[<->] (6,-5)--(17,-5);
\node at (10.5,-5.4) {\small{no misperiods}};
\node at (10.5,-6) {\small{$\le |E_R| \cdot 16D/m$ wildcards}};

\end{tikzpicture}
\end{center}
\caption{The run $R$ and an occurrence $p$ of $P$ in $T$ that aligns $S$ with an occurrence of $S$ in $R$.}
\label{fig:run_misperiods}
\end{figure}
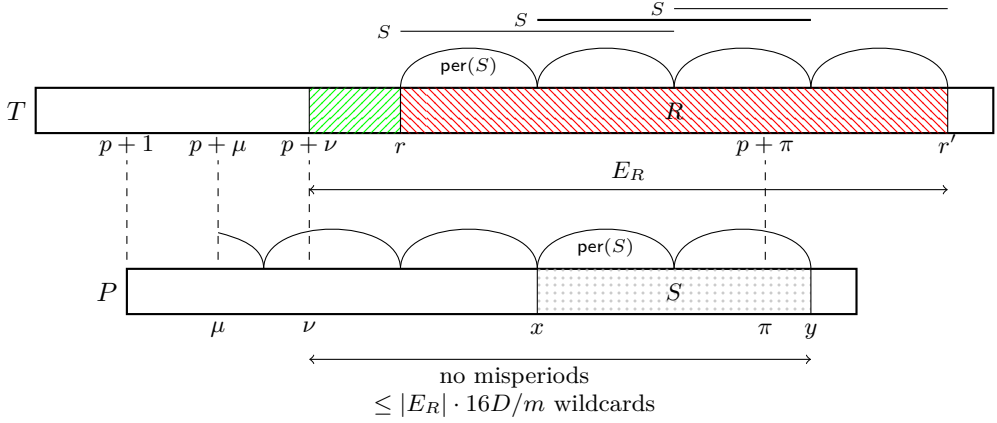

Now, assume for sake of a contradiction that $p+\mu \not\in \M_R$. Intuitively, this can only be the case if the extension of $R$ did not reach the beginning of $T$ due to encountering too many misperiods. On the other hand, the fragment $P[\mu \dd x) S$ of $P$ contains only one misperiod and cannot contain many wildcards, since every position of $S$ is a sparsifier. As a result, a misperiod in $T$ will be aligned with a position of $P$ that is neither a misperiod nor a wildcard, contradicting the fact that $p+1$ is an occurrence of $P$ in $T$. Formally, let $p+\nu > p+\mu$ be the misperiod that forced the extension algorithm to stop, i.e. $E_R = T[p+\nu \dd r']$. (See~\cref{fig:run_misperiods}.) 
Then, the stopping condition implies that $|\M_R| > |E_R| \cdot 20\dontcare/m$. All these misperiods belong to the prefix of $E_R$ matching $P[\nu \dd y]$. On the other hand, due to $\mu < \nu$, we have $[\nu \dd y] \cap \lmisp(P,x,y,|P|) = \emptyset$. 
Furthermore, every position of $S$ is a sparsifier, and therefore the number of wildcards in $P[\nu \dd y]$ is at most $|E_R| \cdot 16D/m$.
Thus, there exists $p+s \in \M_R$ such that $s \not\in \lmisp(P,x,y,|P|)$ and $P[s]$ is not a wildcard.
This implies that $T[p+s]\neq P[s]$, a contradiction to the fact that $p+1$ is an occurrence of $P$. \qedhere
\end{claimproof}

\begin{claim}
  The set $\cup_{R \in \mathcal{S}} \M_R$ is of size $\cO(D)$
  and it can be computed using $\cO(D)$ \pillar operations given the set $\S$ of $S$-runs and $q$.
\end{claim}
\begin{claimproof}
We start by initialising a set $\R := \S$, marking every element of $\R$ as unprocessed and a set $\M = \emptyset$.
We then iteratively perform the following procedure for the rightmost unprocessed $R = T[r\dd r'] \in \R$.
Compute $\mathcal{M}_R$ using \cref{lem:compute_mispers}, set $\mathcal{M}:=\mathcal{M} \cup \mathcal{M}_R$, and mark $R$ as processed.
This takes time proportional to the sum of $|\M_R|$ and the number of groups of wildcards contained in $E_R$.
Let us say that two elements $R = T[r\dd r']$ and $R'=T[t\dd t']$ of $\mathcal{S}$ are \emph{synchronised} if and only if $r = t \pmod{q}$.
During the procedure, whenever we compute some $E_R = T[x \dd r']$ that extends beyond an (unprocessed) $S$-run $R' = T[t \dd t']$, that is, $x \leq t \leq t' < r'$, and $R$ and $R'$ are synchronised, we remove $R'$ from $\R$---the total time required for this step is $\cO(G)$.

We now show the correctness of the algorithm.
If, while extending a run $R = T[r \dd r'] \in \R$, we extend beyond a run $R'=T[t \dd t'] \in \R$ with $r = t \pmod{q}$, observe that the left endpoint of $E_R$ cannot be to the right of the left endpoint of $E_{R'}$, since we have at least as big a budget for misperiods in the extension of $R$ when we reach position $t$ as in the extension of $R'$ when we reach position $t$.
This implies that $\M_{R'} \subseteq \M_R$ and hence the algorithm correctly computes $\M = \cup_{R \in \mathcal{S}} \M_R$.
Additionally, it guarantees that any computed $E_R$ and $E_{R'}$ for synchronised $S$-runs $R$ and $R'$ are disjoint.

Finally, we analyse the algorithm's time complexity. Henceforth, $\R$ denotes set of runs that were processed.
Observe that the run extensions take $\cO(\sum_{R \in \R} |\M_R|)$ time in total in the \pillar model.
As we have
\[\sum_{R \in \R} |\M_R| \leq |\R| + \sum_{R \in \R} |E_R| \cdot 20\dontcare/m \leq \cO(G) + 20\dontcare/m \cdot \sum_{R \in \R} |E_R|,\]
proving that $\sum_{R \in \R} |E_R| = \cO(m)$ directly yields that $\M = \cO(D)$ and that the algorithm takes $\cO(\dontcare)$ time.

In what follows, we ignore all $E_R$ that are of length at most $m/D$ as their total length is $\cO(G \cdot m/D)=\cO(m)$.
Let us partition~$T$ into a collection $\mathcal{Q}=\{T[1+iq \dd (i+1) q] : i \in[0 \dd \lfloor n/q \rfloor-1]\}$ of consecutive fragments of length~$q$, with the last one potentially being shorter and in this case discarded.
We say that an element $T[i \dd j]$ of $\mathcal{Q}$ is \emph{synchronised} with an element $R=T[r\dd r']$ of $\mathcal{R}$ if no position in $[i \dd j]$ is a misperiod with respect to $T[r\dd r']$.
For a run $R=T[r\dd r']$, let $\mathcal{Q}_R = \{T[i \dd j] \in \mathcal{Q} : r\leq i\leq j\leq r'\}$ consist of all elements of $\mathcal{Q}$ that are fully contained in $E_R$ and observe that \[|\mathcal{Q}_R| \geq |E_R|/q - 2 \geq |E_R|/(m/256D) - 2 = |E_R|\cdot 256D/m - 2\geq |E_R|\cdot 252D/m+2.\]
Further, let  $\mathcal{Q}_R^s = \{X \in \mathcal{Q}_R : X \text{ is synchronised with } R\}$.
As $E_R$ contains at most $|E_R|\cdot 20D/m+1$ misperiods with respect to $T[r\dd r']$, we have $|\mathcal{Q}_R^s|\geq |\mathcal{Q}_R|/2$.
This means that $|E_R| = \cO(|\mathcal{Q}_R^s|\cdot q)$.
Now, observe that if some element of $\mathcal{Q}$ is synchronised with two elements $R$ and $R'$ of $\mathcal{R}$, then $R$ and $R'$ are themselves synchronised.
Since the computed extensions of synchronised runs are pairwise disjoint, the considered sets~$\mathcal{Q}_R^s$ are pairwise disjoint and hence the bound follows:
\[\sum_{R \in \R} |E_R| = \cO(m) + \sum_{R \in \R, |E_R|\geq m/D} |E_R| = \cO(m + |\mathcal{Q}|\cdot q) = \cO(m).\]
\end{claimproof}

We can now conclude the proof of the theorem. By \cref{claim:align}, the starting positions of occurrences of $P$ in $T$ are in the set $\{\nu - \mu + 1 : \nu \in \M\}$.
This concludes the proof of the combinatorial bound, as the size of this set is $\cO(\dontcare)$.
As for the time complexity, we verify each candidate position using \cref{fact:kangaroo} in
total time $\cO(\dontcare \gaps)$ in the \pillar model.
\end{proof}

\section{Pattern Matching with \texorpdfstring{$k$}{k} Mismatches in the \pillar Model}\label{sec:ham-pm}

In this section, we extend the results of \cref{sec:exact-pm}, showing that
the \textit{$k$-mismatch occurrences} of a pattern $P \in \Sigma_{\wild}^*$ in a solid text $T$ can be computed in
$\cOtilde((\dontcare +\gaps)\cdot (\gaps + k))$ time in the \pillar model.
Further, we prove that the starting positions of these occurrences can be decomposed
into $\cO((\dontcare + k)\gaps)$ arithmetic progressions
with the same difference,
plus $\cO((\dontcare + k)k)$ additional $k$-mismatch occurrences.

\hampm*

\subsection{Computing Structure in the Pattern}\label{sec:decomposition}
We start by showing a \emph{decomposition lemma},
that either extracts useful structure
from the pattern or reveals that it is close to a periodic string.
Our lemma is analogous to the decomposition lemma for the case where both strings are solid~\cite[Lemma 3.6]{unified}.
The crucial differences are two:
\begin{itemize}
\item in Case~\ref{case:breaks}, we require that breaks are solid strings,
\item in Case~\ref{case:rep-regions}, we ensure that each computed repetitive region contains a sparsifier.
\end{itemize}

\begin{restatable}{lemma}{hammingDecomp}\label{lemma:ham-decomp}
Let $P$ be a string of length $m$ that contains $\dontcare\le m/16$
wildcards arranged in~$\gaps$ groups.
Further, let $k \in [1\dd m]$ be an integer threshold,
  and let $\gamma := \gaps + k$ and $\tau := \dontcare + k$.
  At least one of the following holds:
  \begin{enumerate}[(I)]
    \item\label{case:breaks} $P$ contains $2\gamma$ disjoint
    solid strings $B_1,\ldots,B_{2\gamma}$, that we call \emph{breaks},
    each having length $m/(16\gamma)$ and the period greater than $m/(512\tau)$.
    \item\label{case:rep-regions} $P$ contains $r$ disjoint
    \emph{repetitive regions} $R_1,\ldots, R_r$ of total
    length $m_R \ge m/8$, such that, for every $i$:
    \begin{itemize}
    \item $R_i$ contains a sparsifier,
    \item $|R_i| \ge m/16\gamma$, and,
    \item for a
    primitive string $Q_i$ with $|Q_i|\le m/(512\tau)$,
    we have $\Ham(R_i, Q_i^\infty) = \ceil{32k/m\cdot|R_i|}$.
    \end{itemize}
    \item\label{case:periodic-p} There exists a primitive string $Q$ of length at most $m/(512\tau)$
    such that $\Ham(P, Q^\infty) \le 32k$.
  \end{enumerate}
  Moreover, there is an algorithm that takes $\cO(G+k)$ time in the
  \pillar model and distinguishes between the above cases, returning one of the following:
  either $2\gamma$ disjoint breaks,
  or repetitive regions $R_1,\ldots, R_r$ of total length at least $m/8$ along with primitive strings $Q_1,\ldots, Q_r$,
  or a primitive string~$Q$ along with $\Misp(P,Q)$.
\end{restatable}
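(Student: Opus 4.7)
The plan is to follow the structural dichotomy of \cite[Lemma 3.6]{unified}, which handles the solid case with only $k$ mismatches, but to substitute the role played there by arbitrary positions of $P$ with that of \emph{sparsifier} positions. This single change automatically makes every break consist solely of solid characters and makes every repetitive region contain a sparsifier, which are the two new requirements. Set $L := \lfloor m/(16\gamma) \rfloor$ and $Q_{\max} := \lfloor m/(512\tau) \rfloor$, and note that $L \ge 32\,Q_{\max}$ since $\gamma \le \tau$.

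As a first step, I would apply \cref{lem:toy_problem_better} to the indicator vector of wildcards in $P$, obtaining in $\cO(\gaps)$ time at most $\gaps+1$ disjoint intervals of sparsifier positions covering at least $m/2 - \dontcare \ge 7m/16$ positions of $P$ (using $\dontcare \le m/16$). Chopping these intervals into disjoint length-$L$ blocks produces at least $N \ge 7\gamma - (\gaps+1) \ge 4\gamma$ solid windows $F_1, \ldots, F_N$. For each window, a single $\mathsf{IPM}$ query on its prefix of length $Q_{\max}+1$ followed by one $\lcp$ query decides in $\cO(1)$ \pillar operations whether $\per(F_i) \le Q_{\max}$, for a total classification cost of $\cO(\gamma)$. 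If at least $2\gamma$ of the $F_i$ satisfy $\per(F_i) > Q_{\max}$, I would return any $2\gamma$ of them as the required breaks, yielding Case~\ref{case:breaks}.

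Otherwise, there are at least $2\gamma$ \emph{low-period} windows; for each such $F_i$ the prefix $Q_i := F_i[1 \dd \per(F_i)]$ is primitive by \cref{fact:period_primitive} and has $|Q_i| \le Q_{\max}$. I would process the low-period windows from left to right, and for the leftmost one $F_i$ not already contained in a previously produced region, grow an interval $R \supseteq F_i$ maximally on each side while maintaining the invariant $\Ham(R, Q_i^\infty) \le \lceil 32k/m \cdot |R|\rceil$. Each extension step reduces to a single $\lcp$ query to $Q_i^\infty$ via \cref{fact:infty}, using a pre-computed linked list of the wildcard groups of $P$ to jump over whole groups in $\cO(1)$; extending a region $R_j$ that contains $g_j$ wildcard groups therefore costs $\cO(g_j + k|R_j|/m + 1)$ \pillar operations. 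If some extension $R_j$ covers all of $P$, then $\Ham(P, Q_j^\infty) \le 32k$ and I would output Case~\ref{case:periodic-p} with $Q := Q_j$ and the $\Misp(P, Q_j)$ already enumerated during the extension. Otherwise, every $R_j$ stops strictly inside $P$, in which configuration the density is attained with equality $\Ham(R_j, Q_j^\infty) = \lceil 32k/m \cdot |R_j|\rceil$ (the bound is reached tightly because the next mismatch on the stopping side would overflow it); every $R_j$ contains its initiating window $F_i$, hence a sparsifier, and has $|R_j| \ge L = m/(16\gamma)$; and all $\ge 2\gamma$ low-period windows are contained in some $R_j$, so $\sum_j |R_j| \ge 2\gamma \cdot L \ge m/8$, giving Case~\ref{case:rep-regions}.

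For the total cost, the preprocessing and classification together cost $\cO(\gaps + k)$; the extension phase contributes $\sum_j \cO(g_j + k|R_j|/m + 1)$, which is $\cO(\gaps + k)$ because the $R_j$ are pairwise disjoint inside $P$ (so $\sum_j |R_j| \le m$) and the wildcard groups of $P$ are disjoint (so $\sum_j g_j \le \gaps$). The principal obstacle I anticipate is ensuring that the greedy left-to-right extensions really remain pairwise disjoint without any need to merge regions of differing primitive periods: the key lemma here is that any alignment of two primitive strings $Q_i^\infty$ and $Q_{i'}^\infty$ with $|Q_i|, |Q_{i'}| \le Q_{\max}$ disagrees on at least $L/(|Q_i|+|Q_{i'}|) \ge 16\tau/\gamma$ of every $L$ consecutive positions (by the standard Fine--Wilf mismatch bound), so that any attempt to extend an $R_j$ through a neighbouring low-period window $F_{i'}$ with $Q_{i'} \ne Q_j$ accumulates mismatches far above the density budget available at that point.
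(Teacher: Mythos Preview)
Your overall strategy---chop the sparsifier intervals into length-$L$ blocks, classify each by period, return high-period blocks as breaks, and grow repetitive regions from low-period seeds---matches the paper in spirit. The genuine gap is in your handling of Case~\ref{case:rep-regions}: growing each region \emph{on both sides} does not guarantee disjointness, and the Fine--Wilf argument you invoke only covers the case of \emph{incompatible} periodic structures.

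Concretely, suppose two consecutive low-period seeds $F_i$ and $F_{i'}$ have the same primitive root with aligned phases, so that $Q_i^\infty$ and $Q_{i'}^\infty$ coincide on the stretch between them, yet that stretch contains enough mismatches to stop the right extension of $R_j$ (grown from $F_i$) at some position $p$ short of $F_{i'}$. Then $F_{i'}$ is not contained in $R_j$, so you grow $R_{j+1}$ from it. If $R_{j+1}$ first extends far to the right, its length---and hence its mismatch budget $\lceil 32k|R_{j+1}|/m\rceil$---can be much larger by the time its left extension reaches $p$; nothing then prevents it from extending past $p$ into $R_j$. Your Fine--Wilf density bound is inapplicable here since the two periodic structures agree. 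With disjointness lost, both the lemma's requirement that the $R_j$ be disjoint and your cost bound $\sum_j(g_j+k|R_j|/m)=\cO(G+k)$ break down.

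The paper avoids this by extending each seed \emph{only to the right}, so every new fragment begins strictly after all previous breaks and regions and disjointness is automatic. The one exception is when a right extension reaches position $m$: then (and only then) the paper extends left, and if the mismatch threshold is met before position $1$ it \emph{discards} all previously collected regions in favour of this single one, which has length at least $m/8$ because its seed is shown to start at position at most $7m/8$. A smaller issue with your write-up: your dichotomy ``$R_j$ covers all of $P$'' versus ``$R_j$ stops strictly inside $P$'' omits the case where a two-sided extension reaches exactly one boundary of $P$; in that case the tight equality $\Ham(R_j,Q_j^\infty)=\lceil 32k|R_j|/m\rceil$ need not hold.
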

\begin{proof}
We process $P$ from left to right. 
While we have not yet arrived to one of the Cases~\ref{case:breaks}-\ref{case:periodic-p}, we repeat the following procedure.
We take the leftmost fragment $F$ of $P$ of length $m/(16\gamma)$ that starts to the right of the current position $j$ and consists only of sparsfiers.
If the period of $F$ is greater than $m/(512\tau)$, then we add $F$ to the set of breaks and proceed to position $j+|F|$.
Otherwise, we extend $F$ to the right until the either number of mismatches between $F$ and $Q^\infty$, where $Q := F[1\dd \per(F)]$, becomes equal to $\lceil{32k/m\cdot|F|\rceil}$ or we reach the end of $P$.
In the former case, we add $F$ to the set of repetitive regions and proceed to position $j+|F|$.
In the latter case, we extend $F$ to the left until either the number of mismatches between $F$ and $Q'^\infty$, where $Q' = \rot^{|F|-m+j} (Q)$, becomes equal to $\lceil{32k/m\cdot|F|\rceil}$, or we reach the start of $P$.
If we accumulate enough mismatches, we update the set of repetitive regions to be $\{F\}$ and terminate the algorithm.
Otherwise, i.e., if we reach the start of $P$, we conclude that $P$ is at Hamming distance at most $32k$ from a solid string with period at most $m/(512\tau)$, and we are hence in Case~\ref{case:periodic-p}.

Let us now prove the correctness of the above algorithm.
We first show that if the algorithm has not already concluded that we are in one of the three cases, then there exists a fragment $F$ of sparsifiers that starts in $[j \dd 7m/8]$, where $j$ is our current position.
A direct application of \cref{lem:toy_problem_better} implies that there are at least $m/2 - \dontcare - m/8 \ge 5m/16$ sparsifiers in $[1 \dd 7m/8]$ (since $\dontcare \le m/16$), and they are arranged in $\gaps + 1$ intervals.
By construction, the sparsifiers in an interval are covered from left to right, and hence at most $m/16\gamma-1$ positions can be left uncovered in each interval.
Under our assumptions, the breaks and repetitive regions that have been already computed cover less than $2\gamma \cdot m/(16\gamma) + m/8 = m/4$ positions.
If there is no fragment~$F$ of length $m/16\gamma$ that consists only of sparsifiers and starts in $[j \dd 7m/8]$, then at least
\[5m/16-(m/16\gamma-1)\cdot (\gaps+1)>5m/16-(m/16\gamma)\cdot (\gaps+1)\ge m/4\]
of the sparsifiers have been already covered, a contradiction.
From the above, it also follows that if a fragment $F$ reaches the end of the pattern during its extension, then its length becomes at least $m/8$.
Then, if such a fragment is extended to the left and accumulates enough mismatches with respect to the periodicity before the start of $P$ is reached,
we can set our set of repetitive regions to $\{F\}$ and terminate the algorithm. This completes the proof of the correctness of the structural result.

Now, note that the presented proof is algorithmic. \cref{lem:toy_problem_better} computes the set of sparsifiers, represented as $\cO(G)$ disjoint intervals, in $\cO(G)$ time. After this preprocessing, the procedure can retrieve a new fragment $F$ in constant time. In total, it considers $\cO(G)$ fragments.
  Computing the period of a solid string takes $\cO(1)$ time in the \pillar model~\cite{unified,DBLP:conf/soda/KociumakaRRW15}.
  Moreover, in our attempt to accumulate $\cO(k)$ misperiods in total, we encounter each group of
  wildcards at most twice: at most once when extending to the right and at most once when extending to the left.
  All such extensions thus take $\cO(G+k)$ time in the \pillar model due to \cref{lem:compute_mispers}.
\end{proof}

\subsection{The Almost Periodic Case}\label{sec:ham-pm-periodic}
Case~\ref{case:periodic-p} is treated quite similarly to Case~\rom{2}(a) of the exact pattern matching algorithm (see \cref{sec:exact-pm}).
 
\begin{restatable}{lemma}{periodicOccs}\label{lemma:per-region-occs}
  Let $S$ be a pattern of length $m$ with $\dontcare$ wildcards
  arranged in $\gaps$ groups\footnote{In the final algorithm, $S$ is a fragment of the pattern $P$, potentially much shorter than the text.},
  let~$T$ be a solid text of length $n$, let $k \in [0 \dd m]$,
  and let $d \ge 2(k+\dontcare)$ be a positive integer.
  If there exists a primitive string $Q$ with $|Q|\le m / 8d$
  such that $\Ham(S, Q^\infty) \le \min\{d,32k\}$, then
  we can compute a representation of $\occhk(S, T)$  as $\cO(d(\gaps+k))$ arithmetic progressions with common difference $|Q|$
  in $\cO(d(\gaps + k)\log\log d \cdot n/m)$ time
  plus $\cO(d \cdot n/m)$
  \pillar operations.
  Moreover, if $\Ham(S, Q^\infty) \ge 2k$,
  then $|\occhk(S, T)| = \cO(d \cdot n/m)$.
\end{restatable}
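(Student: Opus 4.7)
The plan is to extend the sliding-window strategy of Subcase (a) of \cref{thm:exact} to accommodate mismatches. First, I will use the standard trick to cover $T$ with $\cO(n/m)$ overlapping fragments of length at most $3m/2$, consecutive fragments sharing $m-1$ positions, so that every $k$-mismatch occurrence of $S$ lies in exactly one such fragment; I solve each instance independently and concatenate the outputs. For a fixed fragment $F$, I will apply \cref{fact:relevant-fragment} with the given primitive $Q$ (whose hypotheses are exactly those of the lemma) to obtain, in $\cO(d)$ \pillar time, a subfragment $F' = F[\ell \dd r]$ with $\Ham(F', Q^\infty) \le 3d$ such that every element of $\occhk(S, F)$ corresponds to a position of $F'$ congruent to $\ell \pmod{|Q|}$.

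Next, I will adapt the sliding-window analysis to take into account both the wildcards of $S$ and the mismatches of $S$ against $Q^\infty$. Let $W := \{p \in [1\dd m] : S[p] = \wild\}$, $B := \{p \in [1\dd m] : S[p] \in \Sigma,\ S[p] \ne Q^\infty[p]\}$ (with $|B| \le \Ham(S, Q^\infty) \le 32k$), and, for each candidate alignment $i \equiv 1 \pmod{|Q|}$, $A(i) := \{p \in [1\dd m] : F'[i+p-1] \ne Q^\infty[p]\}$. Splitting the count of mismatches by whether $p$ lies in $W$, $B$, or neither yields
\[d_i := \Ham(S, F'[i \dd i+m)) \;=\; \lvert A(i) \setminus (B \cup W)\rvert \;+\; \lvert\{p \in B : S[p] \ne F'[i+p-1]\}\rvert.\]
I will precompute $\Misp(F', Q)$ using \cref{fact:compute_mispers} together with \cref{fact:infty} in $\cO(d)$ \pillar operations, and compute $B$ and $W$ in $\cO(G + k)$ \pillar operations via \cref{fact:kangaroo}. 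A key observation is that for each fixed $p \in B$, the indicator $\mathbb{1}[S[p] \ne F'[i+p-1]]$ equals $1$ unless $i + p - 1 \in \Misp(F', Q)$, since otherwise $F'[i+p-1] = Q^\infty[p] \ne S[p]$. Hence the events where $d_i$ changes as $i$ advances decompose into $\cO(d)$ window-crossing events (from mismatches of $F'$ entering or leaving the window), $\cO(dG)$ wildcard-group-boundary events (each mismatch of $F'$ crosses $\cO(G)$ group boundaries of $S$), and $\cO(d\lvert B\rvert) = \cO(dk)$ events from the inner sum, totalling $\cO(d(G+k))$ events per fragment.

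I will compute the initial value of $d_i$ (for the leftmost candidate) explicitly using \cref{fact:kangaroo}, sort the events using the integer sort of Han~\cite{Han2004} in $\cO(d(G+k)\log\log d)$ time, and sweep left-to-right while maintaining $d_i$. Every maximal run of consecutive candidate positions with $d_i \le k$ yields one arithmetic progression of occurrences with common difference $|Q|$; summing the $\cO(d(G+k))$ progressions per fragment across the $\cO(n/m)$ fragments gives the claimed global time and \pillar bounds. For the moreover clause, I observe that the candidate positions within each $F'$ number at most $\lceil |F'|/|Q| \rceil = \cO(m/|Q|) = \cO(d)$, hence $|\occhk(S,T)| = \cO(d\cdot n/m)$, which in this regime lets me output occurrences explicitly instead of as progressions.

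The main obstacle I anticipate is arguing the event bound for the inner sum tightly: a priori one might expect up to $\cO(d \cdot m/|Q|) = \cO(d^2)$ events there, which would be prohibitive. The observation that $\mathbb{1}[S[p] \ne F'[i+p-1]]$ is constant except when $i+p-1$ coincides with a mismatch of $F'$ against $Q^\infty$ cuts this down to $\cO(dk)$ via a pairing of each element of $B$ with the $\cO(d)$ mismatches of $F'$ on the appropriate residue class modulo $|Q|$. Producing the events in sorted order efficiently will require some care, but can be handled by merging $\cO(G+k)$ sorted streams traversed via pointer walks over $\Misp(F', Q)$, $W$, and $B$.
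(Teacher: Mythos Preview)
Your sliding-window scheme for the main algorithmic claim is essentially the paper's proof: reduce to $n\le 3m/2$, invoke \cref{fact:relevant-fragment} to get $F'$ with $\cO(d)$ misperiods and all occurrences on a single residue class modulo $|Q|$, then maintain $d_i$ via $\cO(d(G+k))$ events coming from misperiods of $F'$ crossing the window, the $G$ wildcard groups, and the $\cO(k)$ misperiods of $S$. Your decomposition of $d_i$ into $|A(i)\setminus(B\cup W)|$ plus the $B$-sum is just a repackaging of the paper's $\Matching/\Aligned/\Hidden$ bookkeeping, and your event count and sorting cost match.

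The ``moreover'' clause, however, is wrong as written. You assert that the number of candidate positions in $F'$ is $\lceil |F'|/|Q|\rceil=\cO(m/|Q|)=\cO(d)$, but the hypothesis $|Q|\le m/(8d)$ gives $m/|Q|\ge 8d$, the opposite direction; nothing prevents $|Q|$ from being much smaller than $m/d$, in which case there are far more than $d$ candidate alignments. Notice also that your argument never uses the extra hypothesis $\Ham(S,Q^\infty)\ge 2k$, which should be a warning sign. The paper's argument is a pigeonhole on aligned misperiods: at any $k$-mismatch occurrence $i$, each $p\in\Misp(S,Q)$ that is \emph{not} aligned with a misperiod of $F'$ forces a mismatch (since then $F'[i+p-1]=Q^\infty[p]\ne S[p]$), so at least $|\Misp(S,Q)|-k\ge k$ elements of $\Misp(S,Q)$ must land on $\Misp(F',Q)$. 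Each pair $(p,q)\in\Misp(S,Q)\times\Misp(F',Q)$ determines at most one shift $i$, and there are $\cO(k)\cdot\cO(d)=\cO(dk)$ such pairs, so at most $\cO(dk)/k=\cO(d)$ shifts can each absorb $k$ pairs. That is the missing idea.
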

\begin{proof}
  We only consider the case when $n \le 3m/2$.
  The result then follows using the ``standard trick'' presented in the introduction.

  We use an event-driven scheme that extends the one used in the almost periodic case of \cref{sec:exact-pm}. First, we apply \cref{fact:relevant-fragment} to compute a fragment $T'$ of $T$ that contains
  the same number of $k$-mismatch occurrences as $T$, is at Hamming distance $\cO(d)$ from a prefix of $Q^\infty$,
  and only has occurrences of $P$ at positions that are equivalent to $1 \pmod{|Q|}$.
  This takes $\cO(d)$ time in the \pillar model.
  We then apply \cref{fact:infty} to compute $\Misp(T',Q)$ in $\cO(d)$ time in the \pillar model.
  For a position $i \le |T'|-m+1$, the distance $d_i$ between $T'[i\dd i+m)$
  and $S$ is given by
  \[ d_i = |\Misp(S, Q)| + |\Misp(T'[i\dd i+m), Q)| 
        - 2\Matching(i) - \Aligned(i)  - \Hidden(i), \text{ where}\] 
  \begin{itemize}
    \item $\Matching(i)$ is the number of positions that are mismatches between $S$ and $Q^\infty$, and $Q^\infty$ and $T'$, but not between $S$ and $T$, i.e.,
    \[\Matching(i) = |\{j : j\in \Misp(S, Q) \cap \Misp(T'[i\dd i+m), Q) \land S[j] = T'[i+j] \land S[j] \neq \wild\}|.\]
    \item $\Aligned(i)$ is the number of positions that are mismatches between $S$ and $Q^\infty$, $Q^\infty$ and $T'$, and $S$ and $T$ (as opposed to $\Matching(i)$),
    i.e.,
    \[\Aligned(i) = |\{j : j\in \Misp(S, Q) \cap \Misp(T'[i\dd i+m), Q) \land S[j] \neq T'[i+j] \land S[j] \neq \wild\}|.\]
    \item $\Hidden(i)$ is the number of positions that are mismatches between $T'$ and $Q^\infty$, and that are aligned with
    wildcards, i.e.,
    \[\Hidden(i) = |\{j : j\in \Misp(T'[i\dd i+m), Q) \land S[j] = \wild\}|.\]
  \end{itemize}
  Recall that every $j\in \occhk(S, T')$ satisfies $j \equiv 1 \pmod{|Q|}$ (\cref{fact:thm-unified}),
  hence we only consider the values of $d_i$ as $i$ increases by multiples
  of $|Q|$. 
  Then, the value $d_i$ only changes when one of the following events occurs:
  a position in $\Misp(T',Q)$ enters or exits the active window $T'[i\dd i+m)$,
  starts or stops being aligned with a group of wildcards,
  or starts or stops being aligned with a position in~$\Misp(S,Q)$.
  As $|\Misp(T',Q)| = \cO(d)$, there are~$\gaps$ groups of wildcards in $S$ and $|\Misp(S,Q)| = \cO(k)$, there are $\cO(d(\gaps+k))$
  events. 

  If $d_i \le k$, then all positions equivalent to $1 \pmod{|Q|}$ until the subsequent event
  are $k$-mismatch occurrences, and form an arithmetic
  progression with difference $|Q|$.
  As there are $\cO(d(\gaps+k))$ events, 
  we obtain the stated bound on the number of arithmetic progressions.

  We sort the events by index in $\cO(d(\gaps+k)\log \log d)$ time~\cite{Han2004}.
  Then, we process them from left to right; processing
  one event takes constant time.
  The initial value of $d_0$ can be computed in time linear in the number
  of events.
  Overall, the running time is dominated by the sorting operation.
  We additionally perform $\cO(d)$ \pillar operations to compute $T'$.

  Finally, if $\Ham(S, Q^\infty) \ge 2k$, then in any $k$-mismatch occurrence
  of $S$ in $T$, at least $2k-k = k$ positions in $\Misp(S,Q)$ are aligned with positions in $\Misp(T',Q)$, as otherwise there would be more than $k$ mismatches.
  As we have $|\Misp(S,Q)| = \Ham(S, Q^\infty) \geq 2k$
   and $|\Misp(T',Q)| = \cO(d)$,
  there are $\cO(dk)$ ways of aligning misperiods.
  We thus have $|\occhk(S, T)| = \cO(dk/k) = \cO(d)$ $k$-mismatch occurrences.
\end{proof}

\subsection{The Remaining Cases}\label{sec:ham-pm-aper}

We now show that in each of the Cases~\ref{case:breaks} and~\ref{case:rep-regions}
of \cref{lemma:ham-decomp}, $T$ contains $\cO(\dontcare + k)$
$k$-mismatch occurrences of $P$,
and we can efficiently compute a set $\mathcal{S}$ of $\cO(\dontcare + k)$ positions
that contains all the starting positions of $k$-mismatch occurrences of $P$.
We then verify each of the candidate positions in $\mathcal{S}$ in $\cO(\gaps+k)$ time using \cref{fact:kangaroo}.
    
We first handle the case when the pattern contains $2\gamma$ disjoint breaks.
\begin{lemma}\label{lemma:case-breaks}
  In Case~\ref{case:breaks} of \cref{lemma:ham-decomp}, a solid text $T$ of length at most $3m/2$ contains $\cO(\dontcare + k)$ $k$-mismatch
  occurrences of $P$.
  Moreover, we can compute in $\cO(\dkgk)$ time in the \pillar model
  a set $\mathcal{S} \supseteq \occhk(P, T)$ of size $\cO(\dontcare + k)$.
\end{lemma}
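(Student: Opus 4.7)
The strategy is a voting argument over the $2\gamma = 2(G+k)$ breaks, exploiting $|B_i| = m/(16\gamma)$, $\per(B_i) > m/(512\tau)$, and $|T| \le 3m/2$.

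First, I would bound the occurrences of each break in $T$. I cover $T$ by $\cO(\gamma)$ overlapping windows of length $2|B_i|-1$ (consecutive windows sharing $|B_i|-1$ positions) so that every occurrence of $B_i$ is contained in a unique window, and call $\mathsf{IPM}$ once per window. By \cref{cor:occurrences}, each call returns the occurrences of $B_i$ in that window as a single arithmetic progression with difference $\per(B_i)$, contributing $\cO(|B_i|/\per(B_i)) = \cO(\tau/\gamma)$ occurrences. Summed over the $2\gamma$ breaks, this uses $\cO(\gamma^2) \le \cO(\gamma\tau)$ \pillar operations (since $G\le D$ implies $\gamma\le\tau$) and enumerates a total of $\cO(\gamma\tau) = \cO((\dontcare+k)(\gaps+k))$ break occurrences.

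Second, I would perform the voting step. Since the breaks are \emph{solid} disjoint fragments of $P$, any $k$-mismatch occurrence of $P$ at position $s$ in $T$ satisfies $\sum_i \Ham(B_i, T[s+p_i-1 \dd s+p_i+|B_i|-1)) \le k$, where $p_i$ is the starting position of $B_i$ in $P$; wildcards cannot hide mismatches on the breaks. Hence at least $2\gamma-k \ge \gamma$ of the breaks must match $T$ exactly in their aligned positions. Letting each occurrence of $B_i$ at position $q$ in $T$ cast a vote for the candidate $s = q-p_i+1$, every $k$-mismatch occurrence of $P$ collects at least $\gamma$ votes from distinct breaks. Since the total number of votes is $\cO(\gamma\tau)$, at most $\cO(\gamma\tau/\gamma) = \cO(\tau) = \cO(\dontcare+k)$ candidate positions can reach the $\gamma$-vote threshold.

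Third, I would materialise the votes by expanding each arithmetic progression of break occurrences into individual candidates $s = q-p_i+1$ and radix-sorting the resulting $\cO(\gamma\tau)$ integers (the universe is bounded by $|T|$, which is polynomial in the length of the input). A single left-to-right scan then extracts the set $\mathcal{S}$ of positions with multiplicity at least $\gamma$, yielding $\mathcal{S} \supseteq \occhk(P,T)$ with $|\mathcal{S}| = \cO(\dontcare+k)$ inside the $\cO((\dontcare+k)(\gaps+k))$ budget. The main subtlety, and the only place the numerics are tight, is arranging the threshold $\gamma$ so that every $k$-mismatch occurrence clears it while only $\cO(\tau)$ positions do; both follow from $2\gamma - k \ge \gamma$, which is immediate from $\gamma = \gaps+k \ge k$.
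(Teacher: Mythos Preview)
Your proposal is correct and follows essentially the same marking/voting argument as the paper: bound each break to $\cO(\tau)$ exact occurrences in $T$ via its large period, cast $\cO(\gamma\tau)$ votes in total, and take $\mathcal{S}$ to be the positions receiving at least $\gamma$ (the paper uses $2\gamma-k$, which is equivalent up to constants) votes. The only cosmetic differences are that you spell out the $\mathsf{IPM}$ windowing and the sorting step explicitly, whereas the paper leaves these implicit.
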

\begin{proof}
  Let $\{B_i\}$ be the breaks computed by the algorithm of \cref{lemma:ham-decomp}. For every $i$, let $p_i$ denote the starting position of $B_i$ in $P$.
  For every exact occurrence $j$ of $B_i$ in $T$,
  we put a mark at position $j-p_i+1$ in $T$.
  As $B_i$ has period greater than $m/512\tau$, $T$ contains
  at most $768\tau$ occurrences of $B_i$,
  and they can be computed in $\cO(|T|/|B_i|+\tau) = \cO(\gamma + \tau) = \cO(\dontcare + k)$ time in the \pillar model.
  As there are $2\gamma$ breaks,
  we place at most $768\tau\cdot2\gamma = 1536\tau\gamma$ marks in total.

  Now, in every occurrence of $P$, at most $k$ of the $B_i$s are not matched exactly
  and hence at least $2\gamma-k$ of the $B_i$s are matched exactly.
  Thus, every position $j\in\occhk(P, T)$ has at least $2\gamma - k \ge \gamma$
  marks.
  We designate $\mathcal{S}$ to be the set of positions with at least $2\gamma-k$ marks.
  Observe that $\occhk(P, T) \subseteq \mathcal{S}$ and $|\mathcal{S}| \le 1536\tau\gamma / (2\gamma-k) \le 1536\tau = \cO(\dontcare + k)$. Thus, $\mathcal{S}$ satisfies the conditions of the lemma's statement.
\end{proof}

We obtain a similar result for repetitive regions via a more sophisticated marking scheme.

\begin{lemma}\label{lemma:case-rep-regions}
  In Case~\ref{case:rep-regions} of \cref{lemma:ham-decomp}, given a solid text $T$ of length at most $3m/2$,
  we can compute in $\cO(\dkgk \log\log(\dontcare + k))$ time plus $\cO(\dkgk)$ \pillar operations
  a set $\mathcal{S} \supseteq \occhk(P, T) $ of size $\cO(\dontcare + k)$.
\end{lemma}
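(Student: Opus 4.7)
The plan is to extend the marking approach of \cref{lemma:case-breaks} to repetitive regions, using \cref{lemma:per-region-occs} to locate anchor positions for each $R_i$. The sparsifier inside each $R_i$ will play a critical role: it caps the number of wildcards contained in $R_i$, which keeps the parameters passed to \cref{lemma:per-region-occs} small.

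I will write $\ell_i := |R_i|$, let $p_i$ denote the starting position of $R_i$ in $P$, and set $L := \sum_i \ell_i \ge m/8$. Since $R_i$ contains a sparsifier, the number $D_i$ of wildcards in $R_i$ satisfies $D_i \le 8D\ell_i/m$. I will pick integer thresholds $k_i = \Theta(k\ell_i/L)$, rounded so that $\sum_i k_i \ge 2k$ and $2k_i \le \Ham(R_i, Q_i^\infty) = \lceil 32k\ell_i/m\rceil$; very short regions with $k\ell_i/L < 1$ will be handled by setting $k_i = 1$ and absorbing their contribution into the bound $r = \cO(\gaps + k)$ on the number of regions. Then $d_i := 2(k_i + D_i) = \cO((\dontcare + k)\ell_i/m)$, and the hypothesis $|Q_i| \le \ell_i/(8d_i)$ of \cref{lemma:per-region-occs} will follow from $|Q_i| \le m/(512\tau)$. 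Applying that lemma to each $R_i$ will yield $\mathrm{Occ}_{k_i}(R_i, T)$ as $\cO(d_i(\gaps_i + k_i))$ arithmetic progressions in $\cO(d_i(\gaps_i + k_i)\log\log d_i \cdot m/\ell_i)$ time plus $\cO(d_i m/\ell_i)$ \pillar operations (with $\gaps_i$ the number of wildcard groups in $R_i$); moreover, the condition $\Ham(R_i, Q_i^\infty) \ge 2k_i$ guarantees $|\mathrm{Occ}_{k_i}(R_i, T)| = \cO(\dontcare + k)$.

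Next, I will build a weighted marking scheme: for every $j \in \mathrm{Occ}_{k_i}(R_i, T)$, deposit a mark of weight $k_i$ at position $j - p_i + 1$ of $T$, and let $M(j)$ be the total weight at $j$. Writing $m_i(j)$ for the Hamming distance between $R_i$ and the fragment of $T$ aligned with it when $P$ is placed at position $j$, any $j \in \occhk(P, T)$ satisfies $\sum_i m_i(j) \le k$, so $M(j) \ge \sum_i k_i - \sum_i m_i(j) \ge 2k - k = k$. Hence $\mathcal{S} := \{j : M(j) \ge k\}$ will contain $\occhk(P, T)$. The total weight deposited will be $\sum_i k_i \cdot |\mathrm{Occ}_{k_i}(R_i, T)| = \cO\bigl(\sum_i k_i^2 m/\ell_i + \sum_i k_i D_i m/\ell_i\bigr) = \cO(k^2 m/L + k\dontcare) = \cO(k(\dontcare + k))$, using $L \ge m/8$ and $D_i \le 8D\ell_i/m$; thus $|\mathcal{S}| \le \cO(k(\dontcare + k))/k = \cO(\dontcare + k)$, as required.

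For the running time, I will expand each arithmetic progression into its elements, obtaining $\cO((\dontcare + k)(\gaps + k))$ marks in total (recalling $r = \cO(\gaps + k)$ and $|\mathrm{Occ}_{k_i}(R_i, T)| = \cO(\dontcare + k)$), shift them by $1 - p_i$, and integer-sort them via Han's algorithm~\cite{Han2004} in $\cO((\dontcare + k)(\gaps + k)\log\log(\dontcare + k))$ time; a single left-to-right scan will then aggregate weights and output $\mathcal{S}$. Combined with the per-region cost of \cref{lemma:per-region-occs} (summed using $\sum_i \gaps_i \le \gaps$ and $\sum_i k_i = \cO(\gaps + k)$), the total matches the claim: $\cO((\dontcare + k)(\gaps + k)\log\log(\dontcare + k))$ time plus $\cO((\dontcare + k)(\gaps + k))$ \pillar operations. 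The most delicate step I anticipate is the simultaneous rounding of the thresholds $k_i$ so that $\sum_i k_i \ge 2k$, $2k_i \le \Ham(R_i, Q_i^\infty)$, and $k_i = \Theta(k\ell_i/L)$ all hold; handling both ``large'' regions and regions so short that $k\ell_i/L < 1$ cleanly will require a careful case split.
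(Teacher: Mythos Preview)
Your marking scheme weights each occurrence of $R_i$ by $k_i$, whereas the paper weights by $|R_i|$. This is not a cosmetic difference: with weight $k_i$ the rounding issue you flag as ``delicate'' is in fact fatal when $G\gg k$. Consider the regime where every region has length $\ell_i=\Theta(m/\gamma)$ with $\gamma=G+k$; this is permitted by \cref{lemma:ham-decomp}. Then $k\ell_i/L=\Theta(k/\gamma)<1$ for \emph{every} $i$, and $\Ham(R_i,Q_i^\infty)=\lceil 32k\ell_i/m\rceil=1$. If you set $k_i=1$ as you propose, the hypothesis $\Ham(R_i,Q_i^\infty)\ge 2k_i$ of the ``few occurrences'' clause of \cref{lemma:per-region-occs} fails, so you lose the bound $|\mathrm{Occ}_{k_i}(R_i,T)|=\cO(\dontcare+k)$ and with it both the size bound on $\mathcal S$ and the mark-count bound. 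If instead you set $k_i=0$, then every mark has weight $0$, $M(j)\equiv 0$, and the inequality $M(j)\ge k$ is vacuous. No choice of integer $k_i$ can simultaneously give $\sum_i k_i\ge 2k$ and $2k_i\le\Ham(R_i,Q_i^\infty)$ in this regime, because $\sum_i\Ham(R_i,Q_i^\infty)$ itself can be as small as $r=\Theta(\gamma)$ while you need $\sum_i k_i\ge 2k$; when $G\gg k$ there is simply not enough ``misperiod budget'' to distribute.

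The paper avoids this entirely by weighting each mark by $|R_i|$ rather than $k_i$, taking $k_i=\lfloor 16k|R_i|/m\rfloor$ (which may well be $0$), and using the threshold $m_R-m/16$. The key inequality then becomes: if $R_i$ fails to match with $\le k_i$ mismatches at a true occurrence, the actual mismatch count $k_i'$ exceeds $16k|R_i|/m$, so $|R_i|<mk_i'/(16k)$ and the total length of ``bad'' regions is below $m/16$. This argument is insensitive to whether $k_i=0$, which is exactly what your $k_i$-weighted scheme cannot handle. Switching your weights from $k_i$ to $\ell_i$ (and the threshold from $k$ to $L-m/16$) repairs the proof and essentially reproduces the paper's argument.
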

\begin{proof}
Let $\{R_i\}$ be the repetitive regions computed by the algorithm of \cref{lemma:ham-decomp}. 
  For every~$i$, let $D_i$ denote the number of wildcards in $R_i$,
  $d_i = \lceil 32(k+\dontcare)/m \cdot |R_i|\rceil$, and $k_i = \lfloor 16k/m \cdot |R_i|\rfloor$.
  For every $i$ and every $k_i$-mismatch occurrence of $R_i$,
  we put a mark of weight $|R_i|$ at the corresponding starting position for $P$.
  
  \begin{claim}\label{claim:few-occs-and-marks}
    For every $i$, there are $\cO(\dontcare + k)$ $k_i$-mismatch occurrences of
    $R_i$ in $T$.
    Moreover, the total weight of marks is $\cO((\dontcare + k) \cdot m_R)$.
  \end{claim}
  \begin{claimproof}
    We apply \cref{lemma:per-region-occs} to each repetitive region.
    First, let us show that the conditions of the lemma are satisfied. Recall that $|R_i| \geq m/(16(G+k))$ and hence $16k/m \cdot |R_i| \geq 1$.
    \begin{itemize}
    \item $d_i \geq 2(k_i+D_i)$: As $R_i$ contains a sparsifier, we have $D_i \leq 16D/m \cdot |R_i|$ and hence the bound follows.
    \item $|Q_i| \leq |R_i|/8d_i$: We have $d_i \leq 64 |R_i|(k+\dontcare)/m \Longleftrightarrow m \leq 64 |R_i|(k+\dontcare)/d_i$ and hence
    \[|Q_i|\leq m/(512\tau) = m/(512(k+D))  \leq 64 |R_i|/(512d_i) = |R_i|/(8d_i).\]
    \item $\Ham(R_i, Q_i^\infty) \leq \min\{d_i,32 k_i\}$: This  follows from the fact that $\Ham(R_i, Q_i^\infty) = \ceil{32k/m\cdot|R_i|}$.
    \end{itemize}
    
    Now, since $\Ham(R_i, Q_i^\infty) \geq 2 k_i$, $T$ contains $\cO(d_i n/|R_i|)$ occurrences of $R_i$.
    Hence, the total weight of marks for a given $R_i$ is
    \[w_i 
      = \cO(d_i n/|R_i|) \cdot |R_i|
      = \cO((\dontcare + k)\cdot |R_i|),\]
    since $m=\Theta(n)$, and, summing over $i = 1,\ldots, r \le 2\gamma$, we get that the total weight of marks that we place is $\cO((\dontcare + k) \cdot m_R)$. 
  \end{claimproof}

  We next lower bound the number of marks placed at a $k$-mismatch occurrence of~$P$.
  
  \begin{claim}  
    The total weight of marks placed in any position $\ell \in \occhk(P, T)$ is at least $m_R - m/16$.
  \end{claim}
  \begin{claimproof}
    Consider a $k$-mismatch occurrence of $P$ at position $\ell$ of $T$.
    For every $i$, let $r_i$ be the starting position of $R_i$ in $P$,
    and let $k_i' = \Ham(R_i, T[\ell+r_i\dd \ell+r_i+|R_i|))$ be the Hamming
    distance between $R_i$ and the fragment of $T$ with which it is aligned.
    As $\ell \in \occhk(P, T)$ and the $R_i$s are disjoint,
    we have $\sum_i k_i' \le k$.
    Now, let $I := \{i : k_i' \le k_i\}$. We have
    \[
      \sum_{i\notin I} |R_i|
      = \sum_{i\notin I} \frac{16mk}{16mk} \cdot|R_i|
      = \frac{m}{16k}\sum_{i\notin I} \frac{16k}{m} \cdot|R_i|
      < \frac{m}{16k}\sum_{i\notin I} k_i' 
      \le \frac{m}{16k}\sum_{i=1}^r k_i' 
      \le \frac{m}{16}.
      \]
    The total weight of the marks placed at position $\ell$ in $T$ due to $i \in I$ is $|R_i|$, which amounts to a total weight of at least
    $m_R - \sum_{i\notin I} |R_i| \ge m_R - m/16$.
  \end{claimproof}

Therefore, we can choose $\mathcal{S}$ to contain all positions to which we have placed marks of total weight at least $m_R - m/16$. 
  Dividing the total weight of marks which is $\cO((\dontcare + k) \cdot m_R)$ by $m_R - m/16$ which is at least $m_R/2$ since $m_R \geq m/8$, we obtain $|\mathcal{S}| = \cO(\dontcare + k)$.

  Finding the $k_i$-mismatch occurrences of $R_i$
  using \cref{lemma:per-region-occs}
  takes $\cO(d_i(G_i + k_i)\log\log d_i \cdot n/|R_i|) = \cO((\dontcare + k)(G_i + k_i)\log\log (\dontcare + k))$ time
  plus $\cO(d_i \cdot n/|R_i|) = \cO(\dontcare + k)$ \pillar operations.
  As the $R_i$s are disjoint, the sum of the $G_i$s is $\cO(G)$.
  Further, the sum of the $k_i$s is $\cO(k)$.
  Thus, summing over all $i$, computing the occurrences of all $\cO(\gaps +k)$ $R_i$s takes
  $\cO(\dkgk \log\log(\dontcare + k))$ time plus $\cO(\dkgk)$ \pillar operations.
  
  Therefore, computing the set $\mathcal{S}$ of possible starting positions
  of $P$ in $T$ can be done in $\cO(\dkgk\log\log (\dontcare + k))$ time
  plus $\cO(\dkgk)$ \pillar operations.
\end{proof}

\cref{lemma:case-breaks} and \cref{lemma:case-rep-regions} are the last
pieces needed to prove the algorithmic part of \cref{thm:ham-pm}.

\subparagraph{Proof of the algorithmic part of \cref{thm:ham-pm}.}
  If $P$ contains $\dontcare > m/16$ wildcards,
  then we apply kangaroo jumping (\cref{fact:kangaroo})
  to check whether each of the $n \le 3m/2$ positions in $T$
  is a $k$-mismatch occurrence of $P$. 
  This requires $\cO(m (\gaps+k)) = \cO(\dkgk)$ time in the \pillar model.

  Otherwise, we have $\dontcare \le m/16$, and we can run the algorithm of \cref{lemma:ham-decomp},
  which takes $\cO(\gaps + k)$ time in the \pillar model.
  In Cases~\ref{case:breaks} and~\ref{case:rep-regions}, we use \cref{lemma:case-breaks} and \cref{lemma:case-rep-regions},
  respectively, to compute using $\cO(\dkgk \log\log(\dontcare + k))$ time plus $\cO(\dkgk)$ \pillar operations
  a set $\mathcal{S}$ of size $\cO(\dontcare + k)$ that contains all
  $k$-mismatch occurrences of $P$.
  We verify each of these positions using \cref{fact:kangaroo}, in total time
  $\cO(\dkgk)$ in the \pillar model.
  In Case~\ref{case:periodic-p} of \cref{lemma:ham-decomp}, $P$ satisfies the conditions required
  for $S$ in \cref{lemma:per-region-occs},
  and we can apply this lemma to compute $\occhk(P, T)$
  using $\cO(\dkgk \log\log (\dontcare + k))$ time plus $\cO(\dontcare + k)$
  \pillar operations.

  In total, the algorithm uses $\cO(\dkgk \log\log (\dontcare + k))$ time
  and $\cO(\dkgk)$ \pillar operations.
\qed

\subparagraph{Proof of the combinatorial part of \cref{thm:ham-pm}.}\label{sec:fine-grained}

Finally, we explain how we can refine the analysis of \cref{lemma:per-region-occs}
to obtain a more precise characterisation of the structure of $k$-mismatch occurrences.

\begin{lemma}
  The $k$-mismatch occurrences of $P$ in $T$ can be decomposed
  into $\cO((\dontcare + k)\gaps)$ arithmetic progressions with
  common difference $q$ and $\cO((\dontcare + k)k)$ additional occurrences.
\end{lemma}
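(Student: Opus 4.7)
In Cases~\ref{case:breaks} and~\ref{case:rep-regions} of \cref{lemma:ham-decomp}, \cref{lemma:case-breaks} and \cref{lemma:case-rep-regions} already provide a set $\mathcal{S} \supseteq \occhk(P,T)$ of size $\cO(\dontcare + k)$. Every $k$-mismatch occurrence can therefore be listed individually, comfortably fitting into the $\cO((\dontcare + k)k)$ budget for additional occurrences (with no arithmetic progression needed). The non-trivial case is Case~\ref{case:periodic-p}.

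For Case~\ref{case:periodic-p}, I would refine the event-driven scheme from the proof of \cref{lemma:per-region-occs}. Write
\[
d_i = |\Misp(S,Q)| + a(i) - b(i), \quad a(i) := |\Misp(T'[i\dd i+m), Q)| - \Hidden(i), \; b(i) := 2\Matching(i) + \Aligned(i),
\]
and classify the events into \emph{Type A} (a misperiod of $T'$ enters or exits the active window, or starts/stops being aligned with one of the $\gaps$ wildcard groups of $P$) and \emph{Type B} (a misperiod of $T'$ starts/stops being aligned with a position of $\Misp(S,Q)$). Type~A events affect only $a(\cdot)$ and Type~B events only $b(\cdot)$. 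Counting as in the proof of \cref{lemma:per-region-occs}, there are $\cO(d(\gaps+1))$ Type~A events and $\cO(dk)$ Type~B events, where $d = \dontcare + k$.

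The key observation is that $b(i) \geq 0$, and that $b(i) > 0$ can hold only at the $\cO(dk)$ positions of the form $p - j$ with $p \in \Misp(T',Q)$ and $j \in \Misp(S,Q)$. Accordingly, let
\[
\mathcal{P} := \{\, i \equiv 1 \pmod{q} \;:\; |\Misp(S,Q)| + a(i) \le k \,\}.
\]
Since $a(\cdot)$ changes only at Type~A events, $\mathcal{P}$ decomposes into $\cO((\dontcare+k)\gaps)$ maximal arithmetic progressions with common difference $q$; and since $d_i \le |\Misp(S,Q)| + a(i)$, we have $\mathcal{P} \subseteq \occhk(P,T)$. Any $i \in \occhk(P,T) \setminus \mathcal{P}$ must satisfy $b(i) > 0$, so there are at most $\cO((\dontcare + k)k)$ such positions, which I list as additional occurrences.

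The main subtle point---and what distinguishes this refinement from the direct event count of \cref{lemma:per-region-occs}---is the sign asymmetry: because $\Matching$ and $\Aligned$ are counts and thus $b(i) \geq 0$, Type~B events can only \emph{decrease} $d_i$ below the Type-A-only value. They never split a progression of $\mathcal{P}$ into pieces; they only insert isolated occurrences. Without this observation one would be forced to combine both event counts, yielding the weaker $\cO((\dontcare + k)(\gaps + k))$ progression bound obtained in \cref{lemma:per-region-occs}.
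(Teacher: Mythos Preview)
Your proposal is correct and follows essentially the same approach as the paper. The paper defines $d'_i := |\Misp(P,Q)| + |\Misp(T'[i\dd i+m),Q)| - \Hidden(i)$, which is exactly your $|\Misp(S,Q)| + a(i)$, observes that $\Ham(T'[i\dd i+m),P) \le d'_i$ with strict inequality only when $\Matching(i)$ or $\Aligned(i)$ is positive, and draws the same two conclusions: the set $\{i \equiv 1 \pmod{|Q|} : d'_i \le k\}$ decomposes into $\cO((\dontcare+k)\gaps)$ progressions with difference $|Q|$, while the remaining occurrences (those forcing a misperiod of $P$ to align with a misperiod of $T'$) number at most $\cO((\dontcare+k)k)$. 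Your explicit treatment of Cases~\ref{case:breaks} and~\ref{case:rep-regions} and your articulation of the sign asymmetry $b(i)\ge 0$ make the argument slightly more self-contained, but the substance is identical.
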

\begin{proof}
We proceed similarly to the proof of \cref{lemma:per-region-occs}, with $S = P$. Let $T'$ be the fragment of $T$ computed by~\cref{fact:relevant-fragment} using $d=\Theta(D+k)$. We have $|\Misp(T',Q)| = \cO(\dontcare+k)$. Define $d'_i = |\Misp(P,Q)| + |\Misp(T'[i \dd i+m),Q)|-\Hidden(i)$, where $\Hidden(i) = |\{j : j \in \Misp(T'[i \dd i+m),Q) \land P[j] = \wild\}|$. Note that $\Ham(T'[i \dd i+m),P) = d'_i - 2\Matching(i) - \Aligned(i)$, where
 \begin{align*}
	\Matching(i) &= \{j : j \in \Misp(P,Q) \cap \Misp(T'[i \dd i+m), Q) \land \wild \neq P[j] = T'[i+j]\},\\
	\Aligned(i) &= \{j: j \in \Misp(P,Q) \cap \Misp(T'[i \dd i+m), Q) \land \wild \neq P[j] \neq T'[i+j]\}
 \end{align*}
Therefore, $\Ham(T'[i \dd i+m),P) \le d'_i$. The inequality is strict if at least one of $\Matching(i)$ or $\Aligned(i)$ is positive. In particular, every position where $d_i' \le k$ corresponds to a $k$-mismatch occurrence of $P$ in~$T$. Using the event-driven scheme, we compute the values $d'_i$ for all $i \equiv 1 \mod |Q|$. The value $d'_i$ only changes when a position in $\Misp(T',Q)$ enters or exits the active window $T'[i\dd i+m)$, or when a position in $\Misp(T',Q)$ starts or stops being aligned with a group of wildcards in $P$. Therefore, the values $d'_i$ change $\cO(\gaps \cdot (\dontcare + k))$ times. As we only consider positions $i \equiv 1 \mod |Q|$, the set of positions where $d_i' \le k$ forms $\cO(\gaps \cdot (\dontcare + k))$ arithmetic progressions with common difference $|Q|$.

This analysis might have missed the $k$-mismatch occurrences where at least one of $\Matching(i), \Aligned(i)$ is positive. However, this requires a misperiod of $P$ to be aligned with a misperiod of $T'$. There are $\cO(k)$ of the former and $\cO(\dontcare+k)$ of the latter, hence the number of such occurrences is $\cO((\dontcare + k)k)$.
\end{proof}

\section{A Lower Bound on the Number of Arithmetic Progressions}\label{sec:lower-bound}
In this section we show a lower bound on the number of arithmetic progressions covering the set of $k$-mismatch occurrences of a pattern in a text.

\begin{restatable}{theorem}{lowerbound}\label{thm:lower-bound}
  There exist a pattern $P$ of length $m = \Omega((\dontcare+k)^{1+o(1)}(k+1))$
  and a text~$T$ of length $n\le 3m/2$ such that 
  the set of $k$-mismatch occurrences of $P$ in~$T$ 
  cannot be covered with less than $\Omega((\dontcare+k) \cdot (k+1))$
  arithmetic progressions.
\end{restatable}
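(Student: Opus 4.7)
The plan is to construct a pattern $P$ and a text $T$ whose set of $k$-mismatch occurrences has cardinality $\Omega((\dontcare+k)(k+1))$ and is three-term-progression-free. Because any arithmetic progression of length at least three meets a three-term-progression-free set in at most two points, such a construction immediately yields the desired lower bound of $\Omega((\dontcare+k)(k+1))$ on the number of progressions needed to cover $\occhk(P, T)$.

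I will base the construction on dense three-term-progression-free sets provided by Elkin's theorem~\cite{DBLP:conf/soda/Elkin10} (refining Behrend~\cite{Behrend}): for every positive integer $s$ there exists $B_s \subseteq [1, s^{1+o(1)}]$ of size $s$ that contains no non-trivial three-term arithmetic progression. I will invoke this twice, obtaining $B \subseteq [1, N]$ of size $\dontcare+k$ with $N = (\dontcare+k)^{1+o(1)}$ and $C \subseteq [1, M]$ of size $k+1$ with $M = (k+1)^{1+o(1)}$. The strings $P$ and $T$ will both consist mostly of $0$s, with $P$ carrying its $\dontcare$ wildcards together with a calibrated number of $1$s at positions indexed by (a shifted copy of) $B$, and with $T$ carrying $1$s at positions coming from $C$ scaled by a spacing parameter $L := 2N$. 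Light $0$-padding will let me meet $n \le 3m/2$ and $m = \Omega((\dontcare+k)^{1+o(1)}(k+1))$.

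The main technical step will be a case analysis of $\Ham(P, T[i \dd i+m))$ that identifies $\occhk(P, T)$ with the translated Minkowski difference $(T_1 - P_U) + 1$, where $T_1$ is the set of positions of $1$s in $T$ and $P_U$ the set of positions of $1$s and wildcards in $P$. The count of $1$s in $P$ must be set large enough that any alignment window whose $1$s of $T$ do not fall on positions of $P_U$ accrues strictly more than $k$ mismatches, yet small enough that every alignment of a $1$ of $T$ with some $x \in P_U$ stays within the budget $k$. Once this characterisation is established, the $L$-spacing separates the pairs $(c, u) \in C \times P_U$ that contribute to distinct occurrences, so $|\occhk(P, T)| = |C| \cdot |P_U| = \Omega((\dontcare+k)(k+1))$. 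To prove three-term-progression-freeness, I will assume that distinct occurrences $i_1, i_2, i_3 \in \occhk(P, T)$ form a non-trivial arithmetic progression and write $i_j = L c_j - u_j + 1$ with $c_j \in C$ and $u_j \in B$; the identity $i_1 + i_3 = 2 i_2$ becomes
\[ L\,(c_1 + c_3 - 2 c_2) \;=\; u_1 + u_3 - 2 u_2, \]
whose right-hand side has absolute value at most $2N < L$ while the left-hand side is a multiple of $L$. Both sides must vanish, and progression-freeness of $C$ and then $B$ successively force $c_1 = c_2 = c_3$ and $u_1 = u_2 = u_3$, contradicting non-triviality.

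The main obstacle will be the case analysis itself: I must rule out spurious occurrences (windows in which no $1$ of $T$ is aligned with a position of $P_U$) while retaining every beneficial alignment, and do so under the tight constraint $n \le 3m/2$, which limits how far apart the $1$s of $T$ can be relative to $m$. Reconciling this tension between spacing, length, and the mismatch budget is where the construction requires the most care, and may call for an extra repetition or rescaling trick on top of the Elkin-based backbone described above.
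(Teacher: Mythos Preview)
Your approach is essentially the paper's: two progression-free sets, one encoding the non-zero positions of $P$ and one (scaled by a spacing parameter) for the $1$s of $T$; the difference-set/carry argument then transfers progression-freeness to $\occhk(P,T)$. The paper's padding is exactly your ``light $0$-padding'' ($P = 0^\ell P_S 0^\ell$ and $T = 0^{m/2}\,\text{(blocks)}\,0^{m/2}$), and it automatically forces every length-$m$ window of $T$ to contain all $1$s of $T$, so the constraint $n \le 3m/2$ is not a genuine obstacle.

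There is, however, a concrete calibration error in the parameters you committed to. Write $a$ for the number of $1$s in $P$ and $b$ for the number of $1$s in $T$ (all of which lie in every window). When a single $1$ of $T$ lands on a wildcard of $P$ the Hamming distance is $a + (b-1)$; when no $1$ of $T$ lands on $P_U$ it is $a+b$. Hence the characterisation $\occhk(P,T) = (T_1 - P_U)+1$ forces $a+b = k+1$ exactly. Your stated sizes $|B| = D+k$ (so $a = k$) and $|C| = k+1$ (so $b = k+1$) give $a+b = 2k+1$; for $k \ge 1$ even the beneficial alignments then incur at least $2k$ mismatches, and $\occhk(P,T)$ is empty. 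Keeping $|C| = k+1$ and instead lowering $a$ to $0$ fixes the budget but yields only $D(k+1)$ occurrences, which is not $\Omega((D+k)(k+1))$ when $D = o(k)$. The paper resolves this by splitting the budget evenly: $a = k/2$ ones in $P$ and $b = k/2+1$ ones in $T$, i.e., $|B| = D + k/2$ and $|C| = k/2+1$; the product $(D+k/2)(k/2+1)$ is still $\Omega((D+k)(k+1))$. With this adjustment, everything else in your outline (including the $L$-separation argument for progression-freeness) goes through verbatim.
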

\begin{proof}
We call a set $S \subseteq [1 \dd n]$ progression-free if it contains no non-trivial arithmetic progression, that is, three distinct integers $a,b,c$ such that $a+b-2c=0$. 

\begin{fact}[\cite{DBLP:conf/soda/Elkin10}]\label{lemma:progression-free-sets}
    For any sufficiently large $M$, there exists
    an integer $n_M = \cO(M 2^{\sqrt{\log M}})$
    and a progression-free set $S$
    such that $S$ has cardinality $M$ and $S \subseteq [n_M]$.
\end{fact}

Let $M = \dontcare + k/2$ and $S \subseteq [n_M]$ be a progression-free set of cardinality $M$.
We encode~$S$ in a string $P_S$ of size $n_M$ as follows:
for every $i\notin S$ we set $P_S[i] = 0$,
and we arbitrarily assign $k/2$ $1$s and $\dontcare$ wildcards
to the remaining $\dontcare + k/2$ positions.
We then consider the pattern $P = 0^\ell P_S 0^\ell$,
where $\ell$ is a parameter to be determined later.
In what follows, let $m := 2\ell + n_M$ denote the length of $P$.

Now, let $M' = k/2 + 1$ and $S' \subseteq [n_{M'}]$ be a progression-free set of cardinality $M'$.
We set $T := 0^{m/2}B_1\ldots B_{n_{M'}}0^{m/2}$,
where $B_i = 0^{t-1}1$ if $i \in S'$, $B_i = 0^{t}$ otherwise, and $t = \lfloor m/(2n_{M'})\rfloor$.
We pick $\ell$ large enough such that $t \ge 10n_M$ and $2n_{M'}$ divides $m$.
We have $t = \lfloor m/(2n_{M'}) \rfloor = \lfloor(2\ell + n_M) / (2n_{M'})\rfloor \geq 10n_M$, which implies that it suffices for $\ell$
to be larger than $n_M n_{M'}$ by a constant factor, i.e., $m = \Omega((k+\dontcare)(k+1)2^{\sqrt{\log(k+D)}+\sqrt{\log(k+1)}})$. Observe that $i \in X$ if and only if there exists $j$ such that $P[j] \in \{1, \wild\}$ and $T[i+j] = 1$. Moreover, any pair of $1$s in $T$ are at least $t \ge 10n_M$ positions apart,
while the $1$s and wildcards of~$P$ all lie within an interval of size $n_M$. 
Therefore, for a given alignment of $P$ and $T$, there can be at most one
$1$ of $T$ that is aligned with a $1$ or a wildcard of $P$;
it follows that $\occhk(P,T)$  has cardinality $(\dontcare +k/2) \cdot (k/2+1) = \Omega((\dontcare + k) \cdot (k+1))$.

It remains to show that $\occhk(P,T)$ does not contain arithmetic progressions of length $3$. Assume for a sake of contradiction that there exist $x,y,z\in \occhk(P,T)$ with $x < y < z$ that form an arithmetic progression, i.e., $y-x = z-y$. 
     Let $i_x$ denote the index of the block $B_{i_x}$ of $T$ that contains
    the leftmost $1$ that is aligned with a $1$ or a wildcard of $P$: this $1$
    is at position $m/2 + i_x t$ in~$T$.
    Similarly, let $d_x$ be such that the corresponding aligned character is at
    position $\ell+d_x$ in $P$. Define $i_y,i_z, d_y, d_z$ similarly for $y,z$.
    We can express each $w\in \{x,y,z\}$ in terms of $i_w$ and $d_w$
    as $w = m/2 + i_w t - d_w - \ell + 1$.
    Combining the above equations for $x$, $y$, and $z$, we get $y-x = (i_y-i_x) t - (d_y-d_x)$ and 
        $z-y = (i_z-i_y) t - (d_z-d_y)$. By construction, $|d_y-d_x| \leq n_M$. As $t \ge 10 n_M$, the equality $y-x = z-y$ thus yields $i_y-i_x = i_z-i_y$ and $d_y-d_x = d_z-d_y$. However, as $x < y < z$, at least one of the above two equations involves
    non-zero values.
    In other words, there is an three-term arithmetic progression in either $S$ or~$S'$,
    contradicting the fact that they are progression-free.
\end{proof}

\bibliographystyle{plainurl}
\bibliography{references}
\clearpage

\appendix

\section{Verification of Occurrences via Kangaroo Jumping}\label{sec:kangaroo}

Here, we provide a proof of \cref{fact:kangaroo}, which we restate for convenience.

\kangaroo*
\begin{proof}
  It suffices to use the kangaroo jumping technique~\cite{kangaroo}.
  We start at position $1$ of the pattern with mismatch budget $k$.
  With one \lcp query, we reach the first position in the pattern where we have either a wildcard or a mismatch. If it is a wildcard, we jump to the next non-wildcard character using another \lcp query (in more detail, we apply \cref{fact:infty} with wildcards replaced by $\$ \notin \Sigma$); otherwise, we decrement the budget of mismatches by one and continue from the next position. If the budget of mismatches becomes zero before we reach the end of the $P$, then $P$ does not have a $k$-mismatch occurrence at position $p$; otherwise it does.
  In each iteration we either decrease the mismatch budget or jump over a group of wildcards.
  Hence, the total number of performed \lcp queries (which are the bottleneck) is~$\Oh(\gaps+k)$.
\end{proof}

\section{Omitted Proofs from \texorpdfstring{\cref{sec:exact-pm}}{Section 3}}\label{sec:mispers_comp}

Here, we provide proofs of \cref{lem:compute_mispers} and \cref{fact:sruns}, which we restate for convenience.

\computeMispers*
\begin{proof}
It suffices to describe how to compute elements of the set $\rmisp(V,i,j,|V|)$ as the computation of elements of the set $\lmisp(V,i,j,|V|)$ is symmetric.

We first compute the period $q$ of $V[i \dd j]$ in $\cO(1)$ time in the \pillar model. Let $Q := V[i \dd i+q)$ and observe that $Q^2$ is a prefix of $V[i \dd j]$ since the latter is periodic. Hence, in constant time, we can retrieve a fragment of $V$ equal to any desired rotation of $Q$.

Let us now discuss how to efficiently compute the first misperiod, that is, the first mismatch between $V[i \dd |V|]$ and $Q^\infty$.
Let $F$ be the maximal solid fragment that contains $V[i \dd j]$.
Using \cref{fact:infty}, in $\cO(1)$ time in the \pillar model, we either compute the first misperiod or reach a group of wildcards  and conclude that there are no misperiods in $F$.
In the latter case we do the following.
While we have not yet found a misperiod or reached the end of $V$, we consider the subsequent maximal solid fragment $Y$; we find the starting position $y$ of this fragment by applying \cref{fact:infty} to compute the length of the encountered group of wildcards.
The elements of $\rmisp(V,i,j,|V|)$ spanned by $Y$ are the mismatches of $Y$ and $Q^\infty[y-i \dd y-i+|Y|)$.
We can compute the smallest such element, if one exists using \cref{fact:infty}, or reach the next group of wildcards.

Subsequent misperiods are computed in an analogous manner using the kangaroo method: we try to compute them in maximal solid fragments using \cref{fact:infty} in $\cO(1)$ time in the \pillar model, while skipping any encountered groups of wildcards in $\cO(1)$ time in the \pillar model.
\end{proof}

\sruns*
\begin{proof}
Towards a contradiction suppose that we have two $S$-runs $R_1$ and $R_2$ that overlap by at least $\per(S)$ positions.
Since $S$-runs correspond to inclusion-maximal arithmetic progressions of occurrences of $S$, the difference of the starting positions
of $R_1$ and $R_2$ is not a multiple of $\per(S)$.
Without loss of generality, suppose that the starting position of $R_1$ is to the left of the starting position of $R_2$.
Then, $S[1\dd \per(S)] = R_2[1\dd \per(S)]$ matches a non-trivial rotation of itself, and is hence not primitive, which contradicts~\cref{fact:period_primitive}.
\end{proof}

\section{Fast Algorithms in Various Settings}\label{sec:pillar}

In this section, we combine the algorithm encapsulated in \cref{thm:ham-pm} with efficient implementations of the \pillar model in the standard, dynamic, fully compressed, and quantum settings,
thus obtaining efficient algorithms for (approximate) pattern matching with wildcards in these settings.
For ease of presentation, we use the terms ``exact occurrences'' and ``$0$-mismatch occurrences'' interchangeably.

\paragraph*{Standard setting}

In the standard setting, where a collection of strings of total length $N$ is given explicitly, \pillar operations can be performed in constant time after an $\cO(N)$-time preprocessing, cf.~\cite{unified}. We thus obtain the following result, by noticing that the $\log\log(D+k)$ factor in the complexity of \cref{thm:ham-pm} only comes from sorting subsets of $[1\dd n]$ of total size $\cO((\dontcare + k)(\gaps + k))$ in the calls to \cref{lemma:per-region-occs}, which we can instead do naively in $\cO(n+(\dontcare + k)(\gaps + k))$ time as we can batch the computations.

\begin{theorem}\label{thm:standard}
  Let $P$ be a pattern of length $m$ with $\dontcare$ wildcards arranged
  in $\gaps$ groups, $T$ be a solid text of length $n$, and $k \geq 0$ be an integer.
  We can compute a representation of the $k$-mismatch occurrences of $P$
  in $T$
  in $\cO(n+(\dontcare + k)\cdot (\gaps + k))$
  time.
\end{theorem}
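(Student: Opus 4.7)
The plan is to combine \cref{thm:ham-pm} with the standard-setting implementation of the \pillar model, apply the standard trick from the introduction to handle arbitrary-length texts, and then shave the $\log\log$ overhead via one global radix sort. First I would invoke the standard-setting \pillar implementation referenced in~\cite{unified}: a collection of strings of total length $\cO(n)$ can be preprocessed in $\cO(n)$ time so that each \pillar operation runs in $\cO(1)$ time. As described in \cref{subsec:pillar}, I would replace every wildcard in $P$ by a fresh character $\# \notin \Sigma$ before invoking \pillar operations, keeping a linked list of wildcard groups so that the algorithm of \cref{thm:ham-pm} can still access them when needed.

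With each \pillar operation costing $\cO(1)$, I would next handle $n > 3m/2$ via the standard trick: cover $T$ with $\lceil 2n/m \rceil$ overlapping windows of length at most $3m/2$, consecutive windows sharing $m-1$ positions, so that every $k$-mismatch occurrence of $P$ in $T$ lies inside exactly one window. Running \cref{thm:ham-pm} on each window delivers a representation of its $k$-mismatch occurrences in $\cO((\dontcare + k)(\gaps + k)\log\log(\dontcare + k))$ time. Aggregated over the $\cO(n/m)$ windows, plus the $\cO(n)$ preprocessing, this already yields the desired complexity up to the $\log\log(\dontcare + k)$ factor.

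The only substantive step is to remove this factor. Tracing through the proof of \cref{thm:ham-pm}, the $\log\log$ overhead is incurred exclusively inside each call to \cref{lemma:per-region-occs}, and only for sorting events whose coordinates are positions of $T$ and hence lie in the polynomially bounded universe $[1\dd n]$. I would therefore batch all sort requests issued by the $\cO(n/m)$ invocations of \cref{lemma:per-region-occs}: each event is tagged with its originating window index, and the full global list is radix-sorted once over a universe of size $\cO(n)$. The total number of events is $\cO(n + (n/m)(\dontcare + k)(\gaps + k))$, so radix sort completes in that same time; the per-window sorted sequences are recovered by a single pass that dispatches each event to its window according to its tag.

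The main (mild) obstacle is verifying that the batching is compatible with the event-driven sweep of \cref{lemma:per-region-occs}: that sweep only requires the events of its own window in left-to-right order, and using a stable radix sort preserves relative order inside each window-tag group, so correctness is immediate. Every other component of \cref{thm:ham-pm} already runs in time linear in the number of emitted objects plus the number of \pillar calls, and thus aggregates cleanly over the windows, giving the claimed bound.
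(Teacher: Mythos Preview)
Your proposal is correct and follows essentially the same approach as the paper: instantiate the \pillar operations at $\cO(1)$ cost after $\cO(n)$ preprocessing, observe that the $\log\log(\dontcare+k)$ factor in \cref{thm:ham-pm} comes solely from the event-sorting inside the calls to \cref{lemma:per-region-occs}, and remove it by batching all these sorts into a single radix/counting sort over $[1\dd n]$ at the cost of an additive $\cO(n)$. The paper's justification is terser (one sentence), but your more explicit treatment of the standard trick and of tagging events per window is exactly the intended argument fleshed out.
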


\paragraph*{Fully compressed setting}

\newcommand{\gen}{\mathit{gen}}

For our purposes, a straight-line program (SLP) is a context-free grammar $\Gamma$ that consists of the set $\Sigma_{\wild}$ of terminals and a set $N_\Gamma = \{A_1,\dots,A_n\}$ of non-terminals such that each $A_i \in N_\Gamma$ is associated with a unique production rule
$A_i \rightarrow f_\Gamma(A_i) \in (\Sigma_{\wild} \cup \{A_j : j < i\})^*$. We can assume without loss of generality that each production rule is of the form $A \rightarrow BC$ for some symbols $B$ and~$C$ (that is, the given SLP is in Chomsky normal form).
Every symbol $A \in S_\Gamma:=N_\Gamma \cup\Sigma_{\wild}$ generates a unique string, which we denote by $\gen(A) \in \Sigma_{\wild}^*$. The string $\gen(A)$ can be obtained from $A$ by repeatedly replacing each non-terminal with its production.
We say that~$\Gamma$ generates $\gen(\Gamma) := \gen(A_n)$.

In the fully compressed setting, given a collection of straight-line programs (SLPs) of total size $n$ generating strings
of total length $N$, each \pillar operation can be performed in $\Oh(\log^2 N \log \log N)$ time after an $\Oh(n \log N)$-time preprocessing, cf.~\cite{unified}.
Additionally, using a dynamic programming approach, we can compute a linked-list representation of all wildcards in $P$ in $\cO(m+D)$ time.
If we applied \cref{thm:ham-pm} directly in the fully compressed setting, we would obtain $\Omega(N/M)$ time, where~$N$ and $M$ are the uncompressed lengths of the text and the pattern, respectively. Instead, we can adapt a dynamic programming approach described in~\cite[Section 7.2]{unified} to obtain the following result.

\begin{theorem}[Fully Compressed Setting]
Let $\Gamma_T$ denote a straight-line program of size~$n$ generating a solid string $T$, let $\Gamma_P$ denote a
straight-line program of size $m$ generating a string $P$ with $D$ wildcards arranged in $G$ groups, let $k \geq 0$ denote an integer threshold, and set $N := |T|$ and $M := |P|$.
We can compute the number of $k$-mismatch occurrences of $P$ in~$T$ in $\Oh(m \log N + n (D+k)(G+k) \log^2 N \log \log N)$ time.
All occurrences can be returned in extra time proportional to the output size.
\end{theorem}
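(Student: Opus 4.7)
The plan is to adapt the straight-line-program dynamic-programming scheme of~\cite[Section 7.2]{unified}, invoking \cref{thm:ham-pm} as a black box in place of its solid-string predecessor. I would first install the $\cO((m+n)\log N)$-time preprocessing of~\cite{unified} that supports \pillar operations on $\Gamma_T$ and $\Gamma_P$ in $\cO(\log^2 N\log\log N)$ per operation, treating the wildcards of $P$ as a distinguished terminal $\#\notin\Sigma$. In parallel, a bottom-up DP on $\Gamma_P$ computes for each non-terminal the number of wildcards it derives, after which a top-down traversal that prunes subtrees with zero wildcards enumerates the wildcard positions of $P$ and produces the sorted linked list of the $G$ group-endpoints of $P$ in $\cO(m+D)$ time, so that \cref{thm:ham-pm} has the pattern-side information it needs.

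The main loop processes the non-terminals of $\Gamma_T$ in bottom-up order, maintaining the number $f(A)$ of $k$-mismatch occurrences of $P$ in $\gen(A)$. For $A\to BC$, every such occurrence lies entirely in $\gen(B)$, entirely in $\gen(C)$, or crosses the boundary, giving $f(A)=f(B)+f(C)+\mathrm{cross}(A)$. All crossing occurrences lie within
\[
W := \gen(A)[\max(1,|\gen(B)|-M+2)\,\dd\,\min(|\gen(A)|,|\gen(B)|+M-1)],
\]
which has length at most $2M-2$ and is accessed via a single $\mathsf{Extract}$. I would cover $W$ with $\cO(1)$ overlapping fragments of length $3M/2$ via the ``standard trick'', invoke \cref{thm:ham-pm} on each with pattern $P$ (the fragment $W$ is solid since $T$ is), and retain only occurrences whose starting positions lie in $[|\gen(B)|-M+2\,\dd\,|\gen(B)|]$. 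Summing the retained counts gives $\mathrm{cross}(A)$, and the final answer is $f(A_n)$ at the start symbol~$A_n$.

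Each invocation of \cref{thm:ham-pm} on a length-$\cO(M)$ text costs $\cO((D+k)(G+k)\log\log(D+k))$ time plus $\cO((D+k)(G+k))$ \pillar operations, which in the compressed model amounts to $\cO((D+k)(G+k)\log^2 N\log\log N)$ actual time. Summed over the $n$ non-terminals and combined with the $\cO(m\log N)$ preprocessing, this yields the stated bound. To list the occurrences with only constant extra time per reported position, I would store each $\mathrm{cross}(A)$ in the compact arithmetic-progression representation supplied by \cref{thm:ham-pm} and then perform a recursive top-down traversal of the derivation of $\Gamma_T$, reporting at each internal node the stored progressions shifted by the current offset and pruning subtrees whose $f$-value is zero. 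The main delicacy is that \cref{thm:ham-pm} takes a pattern with wildcards but requires a solid text: the wildcard-group structure of $P$ must be accessible in the compressed setting without ever materialising $\gen(\Gamma_P)$, which the one-time preprocessing of the first paragraph ensures, while the solidity of each $W$ follows for free from the assumption that $T$ is solid.
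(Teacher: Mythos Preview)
Your proposal is correct and follows essentially the same approach as the paper: the paper's own argument for this theorem is simply a pointer to the SLP dynamic-programming scheme of~\cite[Section~7.2]{unified}, with \cref{thm:ham-pm} substituted for the solid-string $k$-mismatch routine, together with the $\cO(m+D)$-time extraction of the wildcard positions of $P$ via a DP on $\Gamma_P$. Your write-up fleshes out exactly these steps (bottom-up count $f(A)$, boundary window $W$ of length $<2M$, the ``standard trick'' on $W$, and the top-down reporting traversal), and your cost accounting matches the stated bound once one observes that the $\cO((m+n)\log N)$ preprocessing and the $\cO(m+D)$ wildcard-enumeration term are absorbed by $m\log N + n(D+k)(G+k)\log^2 N\log\log N$.
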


\paragraph*{Dynamic setting}

Let $\mathcal{X}$ be a growing collection of non-empty persistent strings; it is initially empty, and then undergoes updates by means of the following operations:
\begin{itemize}
    \item $\texttt{Makestring}(U)$: Insert a non-empty string $U$ to $\mathcal{X}$
    \item $\texttt{Concat}(U,V)$: Insert string $UV$ to $\mathcal{X}$, for $U,V\in \mathcal{X}$
    \item $\texttt{Split}(U,i)$: Insert $U[0\dd i)$ and $U[i\dd |U|)$ to $\mathcal{X}$, for $U\in\mathcal{X}$ and $i\in[0\dd |U|)$.
\end{itemize}

By $N$ we denote an upper bound on the total length of all strings in $\mathcal{X}$ throughout all updates executed by an algorithm.
A collection $\mathcal{X}$ of non-empty persistent strings of total length $N$ can be dynamically maintained with operations $\mathtt{Makestring}(U)$,  $\mathtt{Concat}(U,V)$, $\mathtt{Split}(U,i)$ requiring time $\Oh(|U| \cdot \log N)$, $\Oh(\log^2 N)$ and $\Oh(\log^2 N)$, respectively, so that \pillar operations can be performed in time $\Oh(\log^2 N)$. Furthermore, one can find all occurrences of a pattern of length $\ell$ in a string in $\mathcal{X}$ in $\Oh(\ell + \log^2 N + \mathrm{occ} \cdot \log N)$ time, where $\mathrm{occ}$ is the size of the output. 
All stated time complexities hold with probability $1-1/N^{\Omega(1)}$; see \cite{gawrychowski2018optimal,unified}. 
To compute the linked list representation of the endpoints of groups of wildcards for a string $X \in \mathcal{X}$, we simply use pattern matching for a pattern $P = \#$. The complexity of this step is dominated by the time required by our (approximate) pattern matching algorithm.

\begin{theorem}[Dynamic Setting]\label{thm:dynamic}
A collection $\mathcal{X}$ of non-empty persistent strings of total length $N$ over alphabet $\Sigma_{\wild}$ can be dynamically maintained with operations $\mathtt{Makestring}(U)$, $\mathtt{Concat}(U,V)$, $\mathtt{Split}(U,i)$ requiring time $\Oh(|U| \cdot \log N)$, $\Oh(\log^2 N)$, and $\Oh(\log^2 N)$, respectively, so that, given two strings $P,T\in \mathcal{X}$, such that $P$ has $\dontcare$ wildcards arranged in $\gaps$ groups and~$T$ is a solid string, and an integer threshold $k \geq 0$,
we can return a representation of all $k$-mismatch occurrences of $P$ in $T$ in time $\cO((\dontcare+k)(\gaps+k) \cdot |T|/|P| \cdot \log^2 N)$ time.
All stated time complexities hold with probability $1-1/N^{\Omega(1)}$.
\end{theorem}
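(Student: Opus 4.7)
The plan is to combine the randomised dynamic string data structure of~\cite{gawrychowski2018optimal,unified}, which already supports $\mathtt{Makestring}$, $\mathtt{Concat}$, and $\mathtt{Split}$ at the stated complexities and executes each \pillar operation in $\cO(\log^2 N)$ time, with the meta-algorithm of \cref{thm:ham-pm}. Every primitive of that data structure succeeds with probability $1-1/N^{\Omega(1)}$, so a union bound over the polynomially many primitive calls performed throughout preserves the stated high-probability guarantee for the whole procedure. To interface with the \pillar model despite the wildcards in $P$, we follow Section~\ref{subsec:pillar} and treat each $\wild$ as a fresh solid character $\# \notin \Sigma$, additionally equipping $P$ with a linked list of the endpoints of its $\gaps$ maximal wildcard groups.

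That linked list I would build by invoking the dynamic pattern matching primitive on $P$ with the single-character pattern $\#$; this returns the $\dontcare$ wildcard positions of $P$ in $\cO(1 + \log^2 N + \dontcare \log N)$ time, and a single left-to-right scan coalesces consecutive positions into the $\gaps$ maximal groups. This one-off preprocessing is absorbed into the final bound since $\dontcare \log N = \cO((\dontcare+k)(\gaps+k)\log^2 N)$.

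For the text, I apply the standard trick described in the introduction: $\cO(|T|/|P|)$ $\mathtt{Split}$ operations, each costing $\cO(\log^2 N)$, construct $\cO(|T|/|P|)$ overlapping fragments of $T$ of length at most $3|P|/2$, with consecutive fragments overlapping on $|P|-1$ positions, so that every $k$-mismatch occurrence of $P$ in $T$ lies in exactly one fragment. On each fragment, \cref{thm:ham-pm} computes the representation of occurrences in $\cO((\dontcare+k)(\gaps+k)\log\log(\dontcare+k))$ time plus $\cO((\dontcare+k)(\gaps+k))$ \pillar operations; charging $\cO(\log^2 N)$ per \pillar call and absorbing $\log\log(\dontcare+k)$ into $\log^2 N$, the per-fragment cost is $\cO((\dontcare+k)(\gaps+k)\log^2 N)$. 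Summing over the $\cO(|T|/|P|)$ fragments and concatenating the partial outputs yields the claimed bound; since the argument is a modular substitution of the dynamic \pillar implementation into \cref{thm:ham-pm} combined with the standard trick, the only real point of care is keeping the wildcard-group linked list aligned with~$P$ after each update, which the meta-algorithm already expects.
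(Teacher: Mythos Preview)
Your proposal is correct and follows essentially the same approach as the paper: invoke the dynamic data structure of~\cite{gawrychowski2018optimal,unified} for the update operations and $\cO(\log^2 N)$-time \pillar primitives, recover the wildcard groups by pattern matching the single character $\#$ in $P$, and then plug these into \cref{thm:ham-pm} together with the standard trick. The only cosmetic difference is that the paper leaves the splitting of $T$ implicit (fragments are obtained via the \pillar $\mathsf{Extract}$ primitive rather than explicit $\mathtt{Split}$ calls), and your closing remark about ``keeping the wildcard-group linked list aligned with~$P$ after each update'' is unnecessary, since the list is recomputed at query time rather than maintained across updates.
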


Kempa and Kociumaka~\cite[Section 8 in the arXiv version]{DBLP:conf/stoc/KempaK22} presented a deterministic implementation of a collection $\mathcal{X}$ of non-empty persistent strings, which allows to remove randomness from the statement of \cref{thm:dynamic} at the expense of replacing the $\log N$ factors with $\polylog N$. 

\paragraph*{Quantum setting}

We say an algorithm on an input of size $n$ succeeds \emph{with high probability} if the success probability can be made at least $1-1/n^c$ for any desired constant $c>1$.

In what follows, we assume the input strings can be accessed in a quantum query model~\cite{AMB04,DBLP:journals/tcs/BuhrmanW02}.
We are interested in the time complexity of our quantum algorithms~\cite{BBCplus}.

\begin{observation}[{\cite[Observation 2.3]{DBLP:conf/soda/JinN23}}]
For any two strings $S,T$ of length at most $n$,
$\lcp(S, T)$ or $\lcp^R(S, T)$ can be computed in $\cOtilde(\sqrt{n})$ time in the quantum model with high probability.
\end{observation}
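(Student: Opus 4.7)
The plan is to reduce the computation of $\lcp(S,T)$ to a quantum search problem and then invoke Grover-type search. Observe that $\lcp(S,T)$ is exactly one less than the smallest index $i \in [1 \dd \min\{|S|,|T|\}+1]$ such that the predicate $f(i) := [\,i > \min\{|S|,|T|\} \text{ or } S[i] \neq T[i]\,]$ evaluates to true. Given the quantum query access to $S$ and~$T$, each evaluation of $f$ costs $\cO(1)$ queries and $\cO(1)$ time, so it suffices to find the leftmost~$i$ with $f(i)=1$ over a domain of size $n + 1 = \cO(n)$.

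For this, I would use the standard combination of Grover search with binary search. First, run Grover's algorithm on the full domain $[1 \dd n]$ to decide whether any mismatch exists; if not, return $\min\{|S|,|T|\}$. If yes, perform an $\cO(\log n)$-level binary search: at each level, invoke Grover's search on the current candidate interval to test (with amplified success probability $1 - 1/n^{c}$ for a sufficiently large constant~$c$) whether that interval contains an index with $f(i)=1$; recurse into the left half if so, and the right half otherwise. Each Grover call runs in $\cO(\sqrt{n} \cdot \polylog n)$ time after amplification, so the overall cost across $\cO(\log n)$ levels is $\cOtilde(\sqrt{n})$, with the aggregate error controlled by a union bound.

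The case of $\lcp^R(S,T)$ is completely symmetric: one applies the same algorithm to the reversed strings, or equivalently uses the predicate $g(i) := [\,i > \min\{|S|,|T|\} \text{ or } S[|S|-i+1] \neq T[|T|-i+1]\,]$ and searches for its leftmost true position. Since the quantum query model permits accessing any position of~$S$ and~$T$ in one query, reversal comes for free.

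The only technical point requiring care is the amplification of Grover's bounded-error search so that all $\cO(\log n)$ binary-search calls succeed simultaneously with high probability; this is handled by running each Grover subroutine $\cO(\log n)$ times and taking a majority vote, which contributes only polylogarithmic overhead and is hence absorbed by the $\cOtilde(\cdot)$ notation. An alternative, slightly cleaner route is to invoke the Dürr--Høyer quantum minimum-finding algorithm on the set $\{i : f(i)=1\}$ directly; this avoids the explicit binary search but still gives $\cOtilde(\sqrt{n})$ time with high probability.
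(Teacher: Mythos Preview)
Your argument is correct and is essentially the standard way this fact is established. Note, however, that the paper does not supply its own proof of this observation: it is simply quoted from \cite[Observation~2.3]{DBLP:conf/soda/JinN23} and used as a black box in the quantum implementation of the \pillar model. So there is no ``paper's proof'' to compare against; your write-up fills in exactly the folklore reasoning that the cited reference encapsulates.

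One minor wording issue: when you say ``invoke Grover's search on the current candidate interval \ldots\ recurse into the left half if so,'' the intended meaning should be that you run Grover on the \emph{left half} of the current interval (not on the whole interval) and recurse left if a marked element is found there, right otherwise. As written it is slightly ambiguous, though the subsequent sentence and the D\"urr--H{\o}yer alternative make clear you have the right picture. The predicate $f$ is not monotone, so the binary search only works because each step asks ``does the left half contain a $1$?'' rather than evaluating $f$ at a single midpoint; you may want to state this explicitly.
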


\begin{fact}[Corollary of {\cite{DBLP:journals/jda/HariharanV03}}, cf~{\cite[Observation 39]{DBLP:conf/stacs/Charalampopoulos24}}]
For any two strings $S,T$ of length at most $n$, with $|T| \le 2|S|$,
$\mathsf{IPM}(S, T)$ can be computed in $\cOtilde(\sqrt{n})$ time in the quantum model with high probability.
\end{fact}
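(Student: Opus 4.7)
The plan is to lift the classical Hariharan--Vinay framework for internal pattern matching to the quantum setting by replacing each $O(1)$-time classical LCE call with an $\cOtilde(\sqrt{n})$-time quantum $\lcp$/$\lcp^R$ query, then invoking it a polylogarithmic number of times. By \cref{cor:occurrences}, the hypothesis $|T| \le 2|S|$ forces the occurrences of $S$ in $T$ to form a single (possibly trivial) arithmetic progression with common difference $\per(S)$. Hence, producing the $\mathsf{IPM}$ representation amounts to identifying just three values: the leftmost occurrence $p_1$, the second-leftmost occurrence $p_2$ (which reveals $\per(S) = p_2 - p_1$), and the rightmost occurrence $p_r$.

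To locate each of these, I would take the view that $\mathsf{IPM}(S,T)$ reduces, in the classical PILLAR model, to a constant number of LCE queries plus $O(\log n)$ word-RAM operations; this is the crucial abstract content of Hariharan--Vinay that survives in the quantum setting, since the algorithm's dependence on the actual text content is channeled entirely through $\lcp$ and $\lcp^R$. Plugging in the quantum $\lcp$ and $\lcp^R$ primitives from the observation preceding the fact, each call costs $\cOtilde(\sqrt{n})$ time with high probability, and the overall running time telescopes to $\cOtilde(\sqrt{n})$, inheriting the high-probability success guarantee by a union bound over the constantly many calls.

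The main obstacle is justifying that such a polylogarithmic-query reduction really exists without assuming any linear-time suffix-tree preprocessing, which is unavailable quantumly. The subtle step is finding $p_1$: a naive Grover search over the $\Theta(n)$ candidate starting positions, each verified by an $\cOtilde(\sqrt{n})$ quantum $\lcp$ query, would collapse to $\cOtilde(n)$. The remedy, which is exactly the content of Observation~39 of \cite{DBLP:conf/stacs/Charalampopoulos24} and the combinatorial heart of Hariharan--Vinay, is to exploit the Fine--Wilf periodicity lemma: a single extra $\lcp$ query either certifies a period $\rho \le |S|/2$ that compresses the entire occurrence set into an arithmetic progression recoverable by $O(1)$ further queries, or witnesses that occurrences are so sparse that an $O(\log n)$-round $\lcp$-based binary search pins down $p_1$, $p_2$, and $p_r$ directly. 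Granted this classical reduction, the quantum upper bound is immediate.
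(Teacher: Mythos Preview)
The paper does not prove this fact at all; it is stated as a black-box citation to \cite{DBLP:journals/jda/HariharanV03} and \cite{DBLP:conf/stacs/Charalampopoulos24}. So there is no ``paper's own proof'' to compare against, and your proposal must stand on its own.

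There is a genuine gap. You repeatedly describe Hariharan--Vinay as a ``classical framework'' whose content is a reduction of pattern matching to a constant (or polylogarithmic) number of LCE queries, into which one can then substitute quantum $\lcp$ primitives. That is not what \cite{DBLP:journals/jda/HariharanV03} does. Their result is itself a quantum algorithm: it combines Grover search over candidate alignments with Vishkin's \emph{deterministic sampling} technique, so that verifying a candidate costs $\cOtilde(1)$ rather than $\cOtilde(\sqrt{n})$, yielding $\cOtilde(\sqrt{n})$ overall to find a single occurrence. There is no classical reduction of pattern matching (let alone $\mathsf{IPM}$) to $O(\polylog n)$ LCE queries without preprocessing, and your sketch does not supply one either. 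In particular, your ``binary search'' step does not work: even when $S$ is aperiodic and has at most two occurrences in $T$, the predicate ``does $S$ occur somewhere in $T[1\dd i]$'' cannot be evaluated with a single $\lcp$ query, so you cannot binary-search on it. Knowing that occurrences are sparse tells you nothing about \emph{where} they are.

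The intended argument is the one you gesture at in your first paragraph but then abandon: invoke Hariharan--Vinay as a black box that returns one occurrence (or reports none) in $\cOtilde(\sqrt{n})$ time, and call it $O(\log n)$ times---e.g., via binary search on the predicate ``is there an occurrence starting in $T[1\dd i]$'', which \emph{is} answerable by a single Hariharan--Vinay call on the appropriate window---to extract $p_1$, $p_2$, and $p_r$. The arithmetic progression then follows from \cref{cor:occurrences} exactly as you say. Your high-level decomposition into $p_1,p_2,p_r$ is fine; the error is in what primitive you use to find them.
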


All other \pillar operations trivially take  
$\cO(1)$ quantum time. As a corollary, in the quantum setting, all \pillar operations can be implemented in $\cOtilde(\sqrt{m})$ quantum time with no preprocessing, as we always deal with strings of length $\cO(m)$.
Additionally, we can compute the linked list representation of the endpoints of groups of wildcards in $\cOtilde(\sqrt{m}G)$ time in the quantum model with high probability as follows: we search for a wildcard and, if we find it, we compute the group that contains it in $\cO(1)$ time in the \pillar model using \cref{fact:infty}; we then recurse on both sides of the group, and so on. In total, this procedure requires $\cO(G)$ time in the \pillar model and hence $\cOtilde(\sqrt{m}G)$ time in the quantum model.
As a corollary, we obtain the following result.

\begin{theorem}[Quantum Setting]\label{thm:quantum}
Consider a length-$m$ string $P$ with $D$ wildcards arranged in $G$ groups, a solid length-$n$ string $T$, and an integer threshold $k \geq 0$.
The $k$-mismatch occurrences of $P$ in $T$ can be computed in $\cOtilde((n/\sqrt{m})(G+k)(D+k))$ time in the quantum model with high probability.
\end{theorem}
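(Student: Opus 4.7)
\medskip
\noindent\textbf{Proof plan for \cref{thm:quantum}.}
The plan is to use \cref{thm:ham-pm} as a black box in the \pillar model and to plug in the quantum implementations of the \pillar operations stated just before the theorem. First, I would apply the ``standard trick'' from the introduction: cover $T$ with $\cO(n/m)$ overlapping fragments of length at most $3m/2$ (with consecutive fragments overlapping on $m-1$ positions), so that every $k$-mismatch occurrence of $P$ in $T$ lies inside exactly one such fragment. Then it is enough to solve the problem separately on each fragment and aggregate the outputs.

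Before running the per-fragment algorithm, I would compute once and for all a linked list of the endpoints of the maximal wildcard groups of $P$, as this is the representation assumed by \cref{fact:kangaroo}, \cref{lem:compute_mispers}, and \cref{cor:sparsifiers}. As noted just above the theorem statement, this can be done by repeatedly quantum-searching for a wildcard and then using \cref{fact:infty} to jump over the found group, giving $\cO(G)$ \pillar operations and hence $\cOtilde(\sqrt{m}\,G)$ quantum time with high probability. This cost is absorbed by the target bound.

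For each of the $\cO(n/m)$ fragments $T'$, I invoke the algorithm of \cref{thm:ham-pm} on $P$ and $T'$. It uses $\cO((D+k)(G+k))$ \pillar operations and $\cO((D+k)(G+k)\log\log(D+k))$ additional word RAM time. Since every elementary string involved has length $\cO(m)$, each \pillar operation on these fragments costs $\cOtilde(\sqrt{m})$ quantum time with high probability by the two facts quoted just before the theorem. Multiplying, each fragment is handled in
\[
\cOtilde(\sqrt{m}\cdot (D+k)(G+k))
\]
quantum time. Summing over the $\cO(n/m)$ fragments yields $\cOtilde((n/\sqrt{m})(D+k)(G+k))$, as required.

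The last step is the probabilistic bookkeeping: the overall algorithm performs $\cOtilde((n/m)(D+k)(G+k))$ quantum \pillar calls in total, and a standard union bound over these calls (choosing the constant in the high-probability guarantee of each individual call large enough) ensures that all of them succeed with high probability, while preserving the same overall runtime up to the $\polylog$ factors hidden by $\cOtilde$. No genuinely hard step arises here; the main thing to verify is that the standard trick is compatible with our representation of $P$ (the wildcard-group list is computed once and reused) and that the quantum \pillar time bound $\cOtilde(\sqrt{m})$ does apply on every fragment, which is the case because all strings passed to \pillar operations inside \cref{thm:ham-pm} have length $\cO(m)$.
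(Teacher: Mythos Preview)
Your proposal is correct and follows exactly the approach the paper intends: the paper presents \cref{thm:quantum} as an immediate corollary of \cref{thm:ham-pm} combined with the $\cOtilde(\sqrt{m})$-time quantum \pillar implementations and the $\cOtilde(\sqrt{m}\,G)$-time computation of the wildcard-group list, applied via the standard trick. You have simply spelled out the details (including the union bound) that the paper leaves implicit.
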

\end{document}